\documentclass[letterpaper,final,twocolumn]{cls/IEEEtran}


\usepackage[T1]{fontenc}
\usepackage{amsmath,amssymb,amsfonts,amsthm,amscd,dsfont} 
\usepackage{graphicx}
\usepackage{color}
\usepackage{tikz}
\usetikzlibrary{positioning}
\usepackage[vlined,ruled,linesnumbered]{algorithm2e}




\usepackage[colorlinks]{hyperref}


\theoremstyle{plain}

\newtheorem{lemma}{Lemma}

\newtheorem{Proposition}{Proposition}

\theoremstyle{definition}
\newtheorem{defn}{Definition}
\newtheorem{ansatz}{Assumption}
\newtheorem{problem}{Problem}

\theoremstyle{remark}
\newtheorem{remark}{Remark}

\usepackage[compress,noadjust]{cite}
\usepackage{soul}

\newcommand{\oprocendsymbol}{\hbox{$\bullet$}}
\newcommand{\oprocend}{\relax\ifmmode\else\unskip\hfill\fi\oprocendsymbol}

\newcommand{\blue}[1]{\color{blue}{#1}}


\newcommand{\ubfu}{\underline{\bfu}}
\newcommand{\real}{\mathbb{R}}

\DeclareMathOperator*{\argmin}{arg\,min}

\DeclareMathOperator{\tr}{\mathbf{tr}}

\newcommand{\prl}[1]{\left(#1\right)}
\newcommand{\brl}[1]{\left[#1\right]}
\newcommand{\crl}[1]{\left\{#1\right\}}
\newcommand{\scaleMathLine}[2][1]{\resizebox{#1\linewidth}{!}{$\displaystyle{#2}$}}

\newcommand{\Revised}[2]{#2}

\renewcommand{\vec}[1]{\mathbf{#1}}

\def\R{\mathbb{R}}
\def\E{\mathbb{E}}



\newcommand{\ctrl}{\vec{u}}

\newcommand{\dynAff}{F}

\newcommand{\ctrlaff}{\underline{\mathbf{\ctrl}}}

\newcommand{\knl}{\kappa}

\newcommand{\StDat}{\mathbf{X}}
\newcommand{\StDtDat}{\dot{\mathbf{X}}}
\newcommand{\CtDat}{\underline{\boldsymbol{\mathcal{U}}}_{1:k}}

\newcommand{\GP}{\mathcal{GP}}

\newcommand{\grad}{\nabla}
\newcommand{\Lie}{\mathcal{L}}

\newcommand{\CBC}{\mbox{CBC}}
\newcommand{\CBCtwo}{\CBC^{(2)}}
\newcommand{\Prob}{\mathbb{P}}

\DeclareMathOperator{\vect}{vec}
\DeclareMathOperator{\diag}{diag}
\DeclareMathOperator{\cov}{cov}

\DeclareMathOperator{\Var}{\textit{Var}}

\allowdisplaybreaks


\graphicspath{{epsfiles/}}



\newcommand{\calB}{{\cal B}}
\newcommand{\calC}{{\cal C}}

\newcommand{\calE}{{\cal E}}

\newcommand{\calG}{{\cal G}}
\newcommand{\calH}{{\cal H}}

\newcommand{\calJ}{{\cal J}}
\newcommand{\calK}{{\cal K}}

\newcommand{\calM}{{\cal M}}
\newcommand{\calN}{{\cal N}}

\newcommand{\calP}{{\cal P}}

\newcommand{\calT}{{\cal T}}
\newcommand{\calU}{{\cal U}}

\newcommand{\calX}{{\cal X}}



\newcommand{\bfb}{\mathbf{b}}
\newcommand{\bfc}{\mathbf{c}}
\newcommand{\bfd}{\mathbf{d}}
\newcommand{\bfe}{\mathbf{e}}
\newcommand{\bff}{\mathbf{f}}

\newcommand{\bfk}{\mathbf{k}}

\newcommand{\bfm}{\mathbf{m}}

\newcommand{\bfo}{\mathbf{o}}

\newcommand{\bfq}{\mathbf{q}}

\newcommand{\bfu}{\mathbf{u}}
\newcommand{\bfv}{\mathbf{v}}
\newcommand{\bfw}{\mathbf{w}}
\newcommand{\bfx}{\mathbf{x}}
\newcommand{\bfy}{\mathbf{y}}
\newcommand{\bfz}{\mathbf{z}}

\newcommand{\bfeta}{\boldsymbol{\eta}}

\newcommand{\bfmu}{\boldsymbol{\mu}}

\newcommand{\bfA}{\mathbf{A}}
\newcommand{\bfB}{\mathbf{B}}
\newcommand{\bfC}{\mathbf{C}}
\newcommand{\bfD}{\mathbf{D}}

\newcommand{\bfI}{\mathbf{I}}

\newcommand{\bfK}{\mathbf{K}}

\newcommand{\bfM}{\mathbf{M}}

\newcommand{\bfP}{\mathbf{P}}
\newcommand{\bfQ}{\mathbf{Q}}
\newcommand{\bfR}{\mathbf{R}}
\newcommand{\bfS}{\mathbf{S}}

\newcommand{\bfU}{\mathbf{U}}
\newcommand{\bfV}{\mathbf{V}}

\newcommand{\bfX}{\mathbf{X}}

\newcommand{\bfLambda}{\boldsymbol{\Lambda}}

\newcommand{\bfSigma}{\boldsymbol{\Sigma}}


\newcommand{\bbC}{\mathbb{C}}

\newcommand{\bbE}{\mathbb{E}}

\newcommand{\bbP}{\mathbb{P}}

\newcommand{\bbR}{\mathbb{R}}

\newcommand{\bbU}{\mathbb{U}}


\usepackage{subfiles}  

\title{\LARGE \bf Control Barriers in Bayesian Learning of System Dynamics}

\author{%
  Vikas~Dhiman$^{\blue{*}}$,
  Mohammad~Javad~Khojasteh$^{\blue{*}}$,~\IEEEmembership{Member,~IEEE,}
  Massimo~Franceschetti,~\IEEEmembership{Fellow,~IEEE,}
  and~Nikolay Atanasov,~\IEEEmembership{Member,~IEEE}%
\thanks{
${\blue{^*}}$ The first two authors contributed equally.
}%
\thanks{
The material in this paper was presented in part at the 2020 Learning for Dynamics and Control Conference (L4DC) \cite{usfinal23}. We gratefully acknowledge support from ARL DCIST CRA W911NF-17-2-0181 and NSF awards CNS-1446891, ECCS-1917177, and IIS-2007141.
}%
\thanks{V. Dhiman is with Department of Electrical and Computer Engineering, University of Maine, Bangor, ME 04469. (e-mail: \texttt{vikas.dhiman@maine.edu})}
\thanks{M. J. Khojasteh is with Laboratory for Information and Decision Systems, Massachusetts Institute of Technology,  Cambridge, MA 02139.  (e-mail: \texttt{mkhojast@mit.edu})}
\thanks{M. Franceschetti, and N. Atanasov are with Department of Electrical and Computer Engineering, University of California San Diego, La Jolla, CA 92093. (e-mails: \texttt{\{mfranceschetti,natanasov\}@ucsd.edu})}
}

\begin{document}
\maketitle

\begin{abstract}
This paper focuses on learning a model of system dynamics online while satisfying safety constraints. Our objective is to avoid offline system identification or hand-specified models and allow a system to safely and autonomously estimate and adapt its own model during operation. Given streaming observations of the system state, we use Bayesian learning to obtain a distribution over the system dynamics. Specifically, we \Revised{use}{propose a new} matrix variate Gaussian process (MVGP) regression approach with an efficient covariance factorization to learn the drift and input gain terms of a nonlinear control-affine system. The MVGP distribution is then used to optimize the system behavior and ensure safety with high probability, by specifying control Lyapunov function (CLF) and control barrier function (CBF) chance constraints. We show that a safe control policy can be synthesized for systems with arbitrary relative degree and probabilistic CLF-CBF constraints by solving a second order cone program (SOCP). Finally, we extend our design to a self-triggering formulation, adaptively determining the time at which a new control input needs to be applied in order to guarantee safety.
\end{abstract}
\begin{IEEEkeywords}
Gaussian Process, learning for dynamics and control, high relative-degree system safety, control barrier function, self-triggered safe control
\end{IEEEkeywords}

\section*{Supplementary Material}
Software and videos supplementing this paper are available at: \url{https://vikasdhiman.info/Bayesian_CBF}

\section{Introduction}
\label{sec:intro}

Unmanned vehicles and other cyber-physical systems \cite{kumar,murray2003future} promise to transform many aspects of our lives, including transportation, agriculture, mining, and construction. 
Successful use of autonomous systems in these areas critically depends on safe adaptation in changing operational conditions. Existing systems, however, rely on brittle hand-designed dynamics models and safety rules that often fail to account for both the complexity and uncertainty of real-world operation.
Recent work \cite{bagnell2010learning,thananjeyan2020safety,deisenroth2011pilco,Dean2019,sarkar2019finite,coulson2019data,chen2018approximating,khojasteh2018learning,liu2019robust,fan2020deep,chowdhary2014bayesian,lakshmanan2020safe,levine2016end,pan2018agile,kahn2020badgr,ono2020maars} has demonstrated that learning-based system identification and control techniques may be successful at complex tasks and control objectives. 
However, two critical considerations for applying these techniques onboard autonomous systems remain less explored: \emph{learning} online, relying on streaming data, and guaranteeing \textit{safe} operation, despite the estimation errors inherent to learning algorithms. For example, consider steering a Ackermann-drive vehicle, whose dynamics are not perfectly known,  safely to a goal location. 
Not only does the Ackermann vehicle need to \emph{learn} a distribution over its dynamics from state and control observations but also account for the variance of the dynamics to \emph{guarantee safety} while executing its actions.


Several approaches have been proposed in the literature to guaratee safety of dynamical systems.
Motivated by the utility of Lyapunov functions for certifying stability properties, \cite{wieland2007constructive,ames2016control,cbf_car,xu2015robustness,prajna2007framework,cbf,barbosa2020provably,jankovic2018robust,lindemann2019control,garg2019control} proposed \textit{control barrier functions} (CBFs) as a tool to enforce safety properties in dynamical systems.
A CBF certifies whether a control policy achieves forward invariance of a \textit{safe set} $\calC$ by evaluating if the system trajectory remains away from the boundary of $\calC$. A lot of the literature on CBFs considers systems with known dynamics, low relative degree, no disturbances, and time-triggered control, in which inputs are recalculated at a fixed and sufficiently small period. Time-triggered control is limiting because low frequency may lead to safety violations in-between sampling times, while high frequency leads to inefficient use of computational resources and actuators. Yang et al. \cite{yang2019self} extend the CBF framework, for \textit{known dynamics}, to a event-triggered setup \cite{umlauft2019feedback,solowjow2020event,heemels2012introduction,hashimoto2020learning} in which the longest time until a control input needs to be recomputed to guarantee safety is provided. CBF techniques can handle nonlinear control-affine systems but many existing results apply only to relative-degree-one systems, in which the first time derivative of the CBF depends on the control input. This requirement is violated by many underactuated robot systems and motivates extensions to relative-degree-two systems, such as bipedal and car-like robots. The works \cite{hsu2015control,nguyen2016optimal,nguyen2016exponential,xiao2019control1} generalized these ideas, in the case of \textit{known dynamics}, by designing exponential control barrier function (ECBF) and high order control barrier function (HOCBF), that can handle control-affine systems with any relative degree.


Our work proposes a Bayesian learning approach for estimating the posterior distribution of the control-affine system dynamics from online data. We generalize the CBF control synthesis techniques to handle probabilistic safety constraints induced by the dynamics distribution. Our work makes the following \textbf{contributions}. First, we develop a Matrix Variate Gaussian Process (MVGP) regression to estimate the unknown system dynamics and formulate probabilistic safety and stability constraints for systems with arbitrary relative degree. Second, we show that a control policy satisfying the proposed probabilistic constraints can be obtained as the solution of a deterministic second order cone program (SOCP). The SOCP formulation depends on the mean and variance of the probabilistic safety and stability constraints, which can be obtained efficiently. Third, we extend our results to a self-triggered safe control setting, adaptively determining the duration of each control input before re-computation is necessary to guarantee safety with high probability. Finally, we derive closed-form expressions for the mean and variance of the probabilistic safety and stability constraints, up to relative-degree two. This work extends our conference paper \cite{usfinal23} by introducing a SOCP control-synthesis formulation instead of the original, possibly non-convex, quadratic program (QP) problem. This paper also extends the self-triggered control design to systems with relative degree above one, contains a complete technical treatment of the result---including the proofs of that were omitted in the conference version, and presents new evaluation results.

\section{Related Work}

Providing safety guarantees for learning-based control techniques has received significant attention recently \cite{learning_safe_mpc,safe_bayes_opt,fisac2018general,bastani2019safe,wabersich2018safe,biyik2019efficient,jansen2018shielded,lew2020chance}. In particular, optimization-based control synthesis with CLF and CBF constraints has been considered for systems subject to additive stochastic disturbances \cite{clark2019control,santoyo2019barrier,aloysius2020safe}. Aloysius Pereira et al. \cite{aloysius2020safe} combine CBF constraints  and forward-backward stochastic differential equations,
to find a deep parameterized safe optimal control policy in the presence of additive Brownian motion in the system dynamics. \Revised{}{
Yaghoubi et al.~\cite{yaghoubi2021riskbounded} provide a stochastic CBF framework
for composite systems where a component of the system evolve according to a known deterministic dynamics and the other one follows a stochastic dynamics based on the Weiner process}. CBF conditions for systems with uncertain dynamics have been proposed in \cite{fan2019bayesian,wang2018safe,cheng2019end,marvi2020safe,taylor2019adaptive,lopez2020robust,salehi2019active,taylor2020learning}. Fan et al. \cite{fan2019bayesian} study time-triggered CBF-based control synthesis for control-affine systems with relative degree one, where the input gain is known and invertible but the drift term is learned via Bayesian techniques. The authors compare the performance of Gaussian process (GP) regression \cite{williams2006gaussian}, dropout neural networks \cite{gal2016dropout}, and ALPaCA \cite{harrison2018meta} in constructing adaptive CLF and CBF conditions using bounds on the error with respect to a system reference model. The works in \cite{wang2018safe,cheng2019end,taylor2019adaptive,lopez2020robust} study time-triggered CBF-based control of relative-degree-one systems with additive uncertainties in the drift part of the dynamics. Wang et al. \cite{wang2018safe} use GP regression to approximate the unknown part of the 3D nonlinear dynamics of a quadrotor robot. Cheng et al. \cite{cheng2019end} propose a two-layers control policy design that integrates CBF-based control with model-free reinforcement learning (RL). Safety is ensured by bounding the worst-case deviation of the dynamics estimate from the mean, using high-confidence polytopic uncertainty bounds. Marvi and Kiumarsi \cite{marvi2020safe} also consider safe model-free reinforcement learning but, instead of a two-layer policy design, the cost-to-go function is augmented with a CBF term. The works \cite{taylor2019adaptive,lopez2020robust} use adaptive CBFs to deal with parameter uncertainty. Salehi et al. \cite{salehi2019active} use Extreme Learning Machines to approximate the dynamics of a closed-loop nonlinear system (with a drift term only) subject to a barrier certificate constraint during the learning process itself.

Our work builds upon the current literature by developing a matrix variate GP regression with efficient covariance factorization to learn both the \emph{drift term} and the \emph{input gain} terms of a nonlinear control-affine system. The posterior distribution is used to ensure safety for systems with \emph{arbitrary relative degree}. Compared to previous works, our safety constraints are less conservative (probabilistic instead of worst-case) and lead to a novel SOCP formulation. We also present results for a \emph{self-triggering design} with unknown system dynamics.

Research directions left open for future investigation include the extension of the results to a frequentist setting \cite{efron}. Following \cite{safe_mbrl_nips17}, where the unknown but deterministic dynamics are assumed to belong to the Reproducing Kernel Hilbert Space (RKHS), our CBF-based self-triggered controller may be redesigned in a frequentist framework. In fact, a time-triggered setup that utlizes independent GP regression for each coordinate has been recently developed in \cite{jagtap2020control}, where a systematic approach is utilized to compute CBFs for the learned model.

\section{Problem Statement}
\label{sec:problem}

Consider a control-affine nonlinear system:
\begin{equation}
\label{eq:system_dyanmics}
\dot{\bfx} = f(\bfx) + g(\bfx)\bfu = \begin{bmatrix} f(\bfx) & g(\bfx)\end{bmatrix} \begin{bmatrix}1\\\bfu\end{bmatrix} \triangleq F(\bfx) \ctrlaff,
\end{equation}
where $\bfx(t) \in \calX \subset \real^n$ and $\bfu(t)\in \calU \subset \real^m$ are the system state and control input, respectively, at time $t$. Assume that the \textit{drift term} $f: \real^n \rightarrow \real^n$ and the \textit{input gain} $g: \real^n \rightarrow \real^{n \times m}$ are locally Lipschitz and the admissible control set $\calU$ is convex. We study the problem of enforcing stability and safety properties for~\eqref{eq:system_dyanmics} when $f$ and $g$ are unknown and need to be estimated online, using observations of $\bfx$, $\bfu$, $\dot{\bfx}$.

\subsection{Notation}
\label{sec:notation}
We use bold lower-case letters for vectors ($\bfx$), bold capital letters for matrices ($\bfX$), and caligraphic capital letters for sets ($\calX$). The boundary of a set $\mathcal{X}$ is denoted by $\partial \mathcal{X}$. Let $\textit{vec}(\bfX) \in \mathbb{R}^{nm}$ be the vectorization of $\bfX \in \R^{n \times m}$, obtained by stacking the columns of $\bfX$. The Kronecker product is denoted by $\otimes$. The Hessian and Jacobian of functions $h : \real^n \times \real^m \rightarrow \real$  and $f : \real^n \rightarrow \real^n$, respectively, are defined as:
\newcommand{\p}{\partial}
\begin{equation*}
\scaleMathLine{%
\begin{aligned}
  \calH_{\bfx,\bfy} h(\bfx, \bfy) &\triangleq \begin{bmatrix}
    \frac{\p^2 h}{\p x_1 \p y_1} & \dots & \frac{\p^2 h}{\p x_1 \p y_m}
    \\
    \vdots & & \vdots
    \\
    \frac{\p^2 h}{\p x_n \p y_1} & \dots & \frac{\p^2 h}{\p x_n \p y_m}
  \end{bmatrix} \;\;
    \calJ_\bfx f(\bfx) &\triangleq \begin{bmatrix}
    \frac{\p f_1}{\p x_1} & \dots & \frac{\p f_1}{\p x_n}
    \\
    \vdots & & \vdots
    \\
    \frac{\p f_n}{\p x_1} & \dots & \frac{\p f_n}{\p x_n}
    \end{bmatrix}.
\end{aligned}}
\end{equation*}
The Lie derivative of $V : \real^n \rightarrow \real$ along $f : \real^n \rightarrow \real^n$ is denoted by $\Lie_{f}V: \real^n \rightarrow \real$. The space of $r$ times continuously differentiable functions $h : \calX \rightarrow \real$ is denoted by $\bbC^r(\calX,\real)$.
%

\subsection{Stability and Safety with Known Dynamics}
\label{sec:clf-cbf-qp}

We first review key results \cite{cbf} on control Lyapunov functions for enforcing stability and control barrier functions for enforcing safety of control-affine systems with \textit{known dynamics}. System stability may be asserted as follows. 


\begin{defn}
\label{def:clf}
A function $V \in \bbC^1(\calX,\real)$ is a \emph{control Lyapunov function} (CLF) for the system in \eqref{eq:system_dyanmics} if it is positive definite, $V(\bfx) > 0$, $\forall \bfx \in \calX \setminus \{\mathbf{0}\}$, $V(\mathbf{0}) = 0$, and satisfies:
\begin{equation}
\label{eq:clc}
\inf_{\bfu \in \calU} \mbox{CLC}(\bfx,\bfu) \leq 0, \quad \forall \bfx \in \calX,
\end{equation}
where $\mbox{CLC}(\bfx, \bfu) \triangleq \Lie_{f}V(\bfx) + \Lie_{g}V(\bfx)\bfu + \gamma(V(\bfx))$ is a \emph{control Lyapunov condition} (CLC) defined for some class $K$ function $\gamma$.
\end{defn}

\begin{Proposition}[Sufficient Condition for Stability~\cite{cbf}]
\label{prop:clf}
If there exists a CLF $V(\bfx)$ for system~\eqref{eq:system_dyanmics}, then any Lipschitz continuous control policy $\pi(\bfx) \in \crl{\bfu \in \calU \mid \mbox{CLC}(\bfx,\bfu) \leq 0}$ asymptotically stabilizes the system.
\end{Proposition}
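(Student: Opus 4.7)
The plan is to use $V$ directly as a Lyapunov function for the closed-loop system $\dot{\bfx} = f(\bfx) + g(\bfx)\pi(\bfx)$ and invoke the standard Lyapunov stability theorem. First I would note that because $f$, $g$, and $\pi$ are all locally Lipschitz, the closed-loop vector field is locally Lipschitz, so classical results guarantee that solutions exist and are unique on a maximal forward interval from any initial condition in $\calX$. This well-posedness step is needed before we can talk about asymptotic behavior of trajectories, but it is routine.

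Next I would differentiate $V$ along a closed-loop trajectory. Since $V \in \bbC^1(\calX,\real)$, the chain rule yields
\begin{equation*}
\dot{V}(\bfx) = \nabla V(\bfx)^\top \bigl(f(\bfx) + g(\bfx)\pi(\bfx)\bigr) = \Lie_{f}V(\bfx) + \Lie_{g}V(\bfx)\pi(\bfx).
\end{equation*}
By hypothesis, $\pi(\bfx) \in \calU$ and $\mbox{CLC}(\bfx,\pi(\bfx)) \leq 0$, i.e.
\begin{equation*}
\Lie_{f}V(\bfx) + \Lie_{g}V(\bfx)\pi(\bfx) + \gamma(V(\bfx)) \leq 0,
\end{equation*}
so $\dot{V}(\bfx) \leq -\gamma(V(\bfx))$ pointwise on $\calX$.

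From here the proof reduces to standard Lyapunov arguments: $V$ is positive definite with $V(\mathbf{0})=0$, and $\gamma \in \calK$ implies $\gamma(V(\bfx)) > 0$ whenever $\bfx \neq \mathbf{0}$ and vanishes at $\bfx = \mathbf{0}$. Consequently $\dot{V}$ is negative definite in a neighborhood of the origin, which by Lyapunov's direct theorem (see e.g.\ Khalil's Theorem~4.1) establishes asymptotic stability of $\mathbf{0}$. If a region of attraction estimate is desired, one can take any sublevel set $\{V \leq c\} \subset \calX$: along trajectories starting in that set, $V$ is non-increasing and hence trajectories remain in $\calX$, so the bound $\dot{V} \leq -\gamma(V)$ holds for all time and forces $V(\bfx(t)) \to 0$, which together with positive definiteness yields $\bfx(t) \to \mathbf{0}$.

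The step I expect to require the most care is the Lipschitz regularity piece: $\pi$ is only assumed Lipschitz (not smooth), so the closed-loop field need not be $\bbC^1$, and one has to be careful that $\dot{V}$ is defined almost everywhere along the Carath\'eodory solutions and that the comparison $\dot{V} \leq -\gamma(V)$ still implies $V(\bfx(t)) \to 0$. This is handled by applying the comparison lemma to the scalar ODE $\dot{w} = -\gamma(w)$, whose origin is asymptotically stable because $\gamma \in \calK$, and concluding $V(\bfx(t)) \leq w(t) \to 0$. The remaining details (boundedness of trajectories via sublevel sets, continuity of $V$) are standard and closely mirror the original CBF/CLF derivation cited as reference.
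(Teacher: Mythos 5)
Your argument is correct and is essentially the standard proof that the paper defers to by citation: the paper states Prop.~\ref{prop:clf} without proof, referring to \cite{cbf}, and the classical derivation there is exactly your chain — closed-loop well-posedness from local Lipschitz continuity, $\dot{V}(\bfx) \le -\gamma(V(\bfx))$ from the CLC constraint, and the comparison/Lyapunov argument giving $V(\bfx(t)) \to 0$ and hence asymptotic stability. (Your worry about Carath\'eodory solutions is unnecessary, since a locally Lipschitz closed-loop field yields classical $C^1$-in-time solutions, but it does not affect correctness.)
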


Let $\calC \triangleq \crl{\bfx \in \calX \mid h(\bfx) \geq 0}$ be a \textit{safe set} of system states, defined implicitly by a function $h \in \bbC^1(\calX,\real)$. System~\eqref{eq:system_dyanmics} is \textit{safe} with respect to $\calC$ if $\calC$ is \textit{forward invariant}, i.e., for any $\bfx(0) \in \calC$, $\bfx(t)$ remains in $\calC$ for all $t \geq 0$. System safety may be asserted as follows.
 
\begin{defn}
\label{def:cbf}
A function $h \in \bbC^1(\calX,\real)$ is a \emph{control barrier function} (CBF) for the system in~\eqref{eq:system_dyanmics} if
\begin{equation}
\label{eq:cbc}
\sup_{\bfu \in \calU} \mbox{CBC}(\bfx,\bfu) \geq 0, \quad \forall \bfx \in \calX,
\end{equation}
where $\mbox{CBC}(\bfx, \bfu) \triangleq \Lie_{f}h(\bfx) + \Lie_{g}h(\bfx)\bfu + \alpha(h(\bfx))$ is a \emph{control barrier condition} (CBC) defined for some extended class $K_\infty$ function $\alpha$.
\end{defn}

\begin{Proposition}[Sufficient Condition for Safety~\cite{cbf}]
\label{thm:Ames1}
Consider a set $\calC$ defined implicitly by $h \in \bbC^1(\calX,\real)$. If $h$ is a CBF and $\grad h(\bfx) \neq 0$ for all $\bfx$ when $h(\bfx)=0$, then any Lipschitz continuous control policy $\pi(\bfx) \in \crl{\bfu \in \calU \mid \mbox{CBC}(\bfx,\bfu) \geq 0}$ renders the system in~\eqref{eq:system_dyanmics} safe with respect to $\calC$.
\end{Proposition}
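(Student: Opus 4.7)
The plan is to reduce the forward-invariance claim to a one-dimensional differential inequality along closed-loop trajectories, and then invoke a comparison argument. First I would fix any Lipschitz continuous $\pi(\bfx)\in\calU$ satisfying $\mbox{CBC}(\bfx,\pi(\bfx))\geq 0$ on $\calX$. Because $f$ and $g$ are locally Lipschitz by the standing assumption on~\eqref{eq:system_dyanmics} and $\pi$ is Lipschitz, the closed-loop vector field $\bfx\mapsto f(\bfx)+g(\bfx)\pi(\bfx)$ is locally Lipschitz, so for any $\bfx(0)\in\calC$ the Picard--Lindel\"of theorem gives a unique maximal solution $\bfx(\cdot)$, and $h\in\bbC^1(\calX,\real)$ lets me define the scalar trajectory $\eta(t):=h(\bfx(t))$.

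The next step is to evaluate $\dot\eta$ along this solution. By the chain rule,
\begin{equation*}
\dot\eta(t)=\Lie_{f}h(\bfx(t))+\Lie_{g}h(\bfx(t))\,\pi(\bfx(t)),
\end{equation*}
and the hypothesis on $\pi$ combined with the definition of $\mbox{CBC}$ in Definition~\ref{def:cbf} yields the key differential inequality
\begin{equation*}
\dot\eta(t)\geq -\alpha(\eta(t)).
\end{equation*}
Thus the problem reduces to showing that any absolutely continuous $\eta$ satisfying this inequality with $\eta(0)\geq 0$ remains nonnegative for all $t\geq 0$.

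I would establish nonnegativity by a comparison-lemma argument against the scalar initial-value problem $\dot y=-\alpha(y)$, $y(0)=\eta(0)\geq 0$. Because $\alpha$ is an extended class $K_\infty$ function, $\alpha(0)=0$ and $y\equiv 0$ is an equilibrium, hence the solution starting from any nonnegative $y(0)$ stays nonnegative (strictly positive initial conditions decay monotonically toward $0$ but cannot cross it, since crossing would violate $\alpha(0)=0$ together with the sign of $\alpha$ on either side). Applying the comparison lemma then gives $\eta(t)\geq y(t)\geq 0$, so $\bfx(t)\in\calC$ throughout the interval of existence; the same inequality rules out finite-time escape, so the solution extends globally in $t$, proving forward invariance of $\calC$ and hence safety.

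Two subtleties I would be careful about. First, the comparison step needs uniqueness of solutions of $\dot y=-\alpha(y)$ from $y(0)\geq 0$; if $\alpha$ fails to be Lipschitz at $0$, I would instead argue directly by contradiction, using absolute continuity of $\eta$ and the fact that $\dot\eta(t)\geq 0$ whenever $\eta(t)=0$ to rule out a first crossing time into $\{\eta<0\}$. Second, the regularity hypothesis $\grad h(\bfx)\neq 0$ on $\partial\calC$ guarantees that $\partial\calC=\{h=0\}$ is a codimension-one $\bbC^1$ submanifold, so that ``$\eta(t)\geq 0$ for all $t$'' genuinely certifies set invariance rather than trapping the state at a degenerate boundary point; this is the standard Nagumo-style regularity condition and is where I would use that assumption explicitly. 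The main obstacle is really this last bookkeeping step---the rest is a direct chain-rule calculation followed by a one-dimensional comparison argument.
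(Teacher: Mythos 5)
Your argument is correct and is essentially the standard proof of this result: the paper itself states Prop.~\ref{thm:Ames1} without proof, citing \cite{cbf}, and the proof given there is exactly your route---local Lipschitzness of the closed loop gives unique solutions, the CBC constraint gives $\dot{h}(\bfx(t)) \ge -\alpha(h(\bfx(t)))$, and a comparison with $\dot{y}=-\alpha(y)$ (whose nonnegative orthant is invariant since $\alpha(0)=0$) yields $h(\bfx(t))\ge 0$. Two minor caveats: the inequality $h\ge 0$ alone does not rule out finite-time escape of $\bfx(t)$ (forward completeness is implicitly assumed, as in \cite{cbf}), and the condition $\grad h(\bfx)\neq 0$ on $\partial\calC$ is not actually needed for this sufficiency direction---$h(\bfx(t))\ge 0$ is already equivalent to $\bfx(t)\in\calC$ by the definition of $\calC$---it is a regularity hypothesis inherited from the converse and asymptotic-stability statements in the cited work rather than the place where your proof uses it.
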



\Revised{}{Prop.~\ref{prop:clf} and Prop.~\ref{thm:Ames1} provide sufficient conditions for a control policy $\pi(\bfx)$ applied to the system in \eqref{eq:system_dyanmics} to guarantee stability and safety.} Note that the conditions are defined by affine constraints in $\bfu$. This allows the formulation of control synthesis as a quadratic program (QP) in which stability and safety properties are captured by the linear CLC and CBC constraints, respectively:
\begin{equation}
\label{eq:clf-cbf-qp}
\begin{aligned}
\pi(\bfx) \in &\argmin_{\bfu \in \calU, \delta \in \mathbb{R}} && \|\bfR(\bfx)\bfu\|^2 + \lambda \delta^2\\
&\quad\;\text{s.t.} && \mbox{CLC}(\bfx,\bfu) \le \delta,\; \mbox{CBC}(\bfx,\bfu) \ge 0,
\end{aligned}
\end{equation}
where $\bfR(\bfx) \in \mathbb{R}^{m \times m}$ is a matrix penalizing control effort and $\delta$ is a slack variable that ensures feasibility of the QP by giving preference to safety over stability, controlled by the scaling factor $\lambda > 0$. If a stabilizing control policy $\hat{\pi}(\bfx)$ is already available, it may be modified minimally online to ensure safety:
\begin{equation}
\label{eq:cbf-qp}
\begin{aligned}
\pi(\bfx) \in &\argmin_{\bfu \in \calU} &&\|\bfR(\bfx)\prl{\bfu - \hat{\pi}(\bfx)}\|^2 \\
&\quad\;\text{s.t.} && \mbox{CBC}(\bfx,\bfu) \ge 0.
\end{aligned}
\end{equation}
In practice, the QPs above cannot be solved infinitely fast. Optimization is typically performed at triggering times $t_k, t_{k+1}, \ldots$, providing control input $\bfu_k \triangleq \pi(\bfx_k)$ when the system state is $\bfx_k \triangleq \bfx(t_k)$. Ames et al. \cite[Thm.~3]{ames2016control} show that if $f$, $g$, and $\alpha \circ h$ are locally Lipschitz, then $\pi(\bfx)$ and $\mbox{CBC}(\bfx,\pi(\bfx))$ are locally Lipschitz. Thus, for sufficiently small inter-triggering times $\tau_k \triangleq t_{k+1} - t_{k}$, solving \eqref{eq:cbf-qp} at $\{t_k\}_{k \in \mathbb{N}}$ ensures safety during the inter-triggering intervals $[t_k,t_{k+1})$ as well. 

\subsection{Stability and Safety with Unknown Dynamics}
\label{Sec:problemSSUD}



This work considers stability and safety for the control-affine nonlinear system in~\eqref{eq:system_dyanmics} when the system dynamics $F(\bfx) \in \real^{n \times (1+m)}$ are \emph{unknown}. We place a prior distribution over $\vect(F(\bfx))$ using a GP \cite{williams2006gaussian} with mean function $\vect(\bfM_0(\bfx))$ and covariance function $\bfK_0(\bfx,\bfx')$.

Our objective is to compute the posterior distribution of $\vect(F(\bfx))$ using observations of the system states and controls over time and ensure stability and safety using the estimated dynamics model despite possible estimation errors. 

\begin{problem}
\label{prb:learning}
Given a prior distribution on the unknown system dynamics, $\vect(F(\bfx)) \sim \calG\calP\prl{\vect(\bfM_0(\bfx)), \bfK_0(\bfx,\bfx')}$, and a training set, $\StDat_{1:k} \triangleq [\bfx(t_1), \dots, \bfx(t_k)]$, $\bfU_{1:k} \triangleq [\bfu(t_1),\allowbreak \dots, \bfu(t_k)]$, $\StDtDat_{1:k}=[\dot{\bfx}(t_1), \dots, \dot{\bfx}(t_k)]$\footnote{If not available, the derivatives may be approximated via finite differences, e.g., $\StDtDat_{1:k} \triangleq \bigl[
  \frac{\bfx(t_2) - \bfx(t_1)}{t_2-t_1}, \dots, \frac{\bfx(t_{k+1}) -
    \bfx(t_{k})}{t_{k+1} - t_{k}} \bigr]$, provided that the inter-triggering times $\{\tau_k = t_k-t_{k-1}\}_k$ are sufficiently small.}, compute the posterior distribution $\calG\calP\prl{\vect(\bfM_k(\bfx)), \bfK_k(\bfx,\bfx')}$ of $\vect(F(\bfx))$ conditioned on $(\StDat_{1:k}, \bfU_{1:k}, \StDtDat_{1:k})$.
\end{problem}

\begin{problem}
\label{prob3444!!}
Given a safe set $\calC \triangleq \crl{\bfx \in \calX \mid h(\bfx) \geq 0}$, initial state $\bfx_k \triangleq \bfx(t_k) \in \calC$, and the distribution $\calG\calP(\textit{vec}(\bfM_k(\bfx)),\allowbreak \bfK_k(\bfx,\bfx'))$ of $\vect(F(\bfx))$ at time $t_k$, choose a control input $\bfu_k$ and triggering period $\tau_k$ such that for $\bfu(t) \triangleq \bfu_k$:
\begin{equation}
\label{eorprobmelm45}
\mathbb{P}(\mbox{CBC}(\bfx(t),\bfu_k) \ge 0) \ge p_k \quad\text{for all}\quad t \in [t_k,t_k+\tau_k),
\end{equation}
where $\bfx(t)$ follows the dynamics in~\eqref{eq:system_dyanmics}, and $p_k \in (0,1)$ is a user-specified risk tolerance.
\end{problem}

\section{Matrix Variate Gaussian Process Regression of System Dynamics}
\label{learningsec:13452}

\newcommand{\ubcalU}{\underline{\boldsymbol{\calU}}}
\newcommand{\bcalM}{\boldsymbol{\calM}}
\newcommand{\bcalB}{\boldsymbol{\calB}}
\newcommand{\bcalC}{\boldsymbol{\calC}}

\Revised{}{This section presents our novel Matrix Variate Gaussian Process (MVGP) solution to Problem~\ref{prb:learning}.} We aim to estimate the unknown drift $f$ and input gain $g$ of system \eqref{eq:system_dyanmics} using a state-control dataset $(\StDat_{1:k}, \bfU_{1:k}, \StDtDat_{1:k})$. To simplify the derivation, we assume that $\StDat_{1:k}$ and $\bfU_{1:k}$ are observed without noise, butt each measurement $\dot{\bfx}(t_k)$ in $\StDtDat_{1:k}$ is corrupted by zero-mean Gaussian noise $\calN(\mathbf{0}, \bfS)$ that is independent across time steps. Treating $f(\bfx) + g(\bfx)\bfu$ as a single vector-valued function with input $[\bfx^\top,\bfu^\top]^\top$ may seem natural from a function approximation perspective, but this approach ignores the control-affine structure. Encoding this structure in the learning process not only increases the learning efficiency, but also ensures that the control Lyapunov and control barrier conditions remain linear in $\bfu$. Hence, we focus on learning the matrix-valued function $F(\bfx) = [f(\bfx) \; g(\bfx)]$ defined in \eqref{eq:system_dyanmics}.

The simplest approach is to use $n(1+m)$ decoupled GPs for each element of $F(\bfx)$. This approach ignores the dependencies among the components of $f(\bfx)$ and $g(\bfx)$. Furthermore, since the outputs of $f(\bfx)$ and $g(\bfx)$ are observed together via $\StDtDat_{1:k}$, training data dimensions are still mutually correlated and cannot be treated as decoupled GPs denying the efficiency advantage. At the other extreme, treating $\vect(F(\bfx))$ as a single vector-valued function and using a single GP distribution will enable high estimation accuracy but specifying an effective matrix-valued kernel function $\bfK_0(\bfx,\bfx')$ and optimizing its hyperparameters is challenging. A promising approach is offered by the Coregionalization GP (CoGP) model \cite{alvarez2012kernels}, where the kernel function is decomposed as $\bfK_0(\bfx,\bfx') \triangleq \bfSigma \knl_0(\bfx,\bfx')$ into a scalar kernel $\knl_0(\bfx,\bfx')$ and a covariance matrix parameter $\bfSigma \in \R^{n(1+m) \times (1+m)n}$. Estimating $\bfSigma$ may still require a lot of training data. Moreover, the matrix-times-scalar-kernel structure is not preserved in the posterior of the Coregionalization model, preventing its effective use for incremental learning when new data is received over time.

We propose an alternative factorization of $\bfK_0(\bfx,\bfx')$ inspired by the Matrix Variate Gaussian distribution~\cite{StructuredPBP,louizos2016structured}. The strengths of our factorization are that it models the correlation among the elements of $F(\bfx)$, preserves its structure in the posterior GP distribution, and has similar training and testing complexity as the decoupled GP approach.

\begin{defn}
The Matrix Variate Gaussian (MVG) distribution is a three-parameter distribution $\calM\calN(\bfM,\bfA,\bfB)$ describing a random matrix $\bfX \in \mathbb{R}^{n \times m}$ with probability density function:
\begin{equation*}
\scaleMathLine{p(\bfX; \bfM, \bfA, \bfB) \triangleq \frac{\exp\prl{ -\frac{1}{2}\tr\brl{\bfB^{-1}(\bfX-\bfM)^\top \bfA^{-1}(\bfX-\bfM)}}}{(2\pi)^{nm/2}\det(\bfA)^{m/2}\det(\bfB)^{n/2}}}
\end{equation*}
where $\bfM \in \mathbb{R}^{n \times m}$ is the mean. \Revised{}{Up to a scaling factor, each column of $\bfX$ has the same covariance $\bfA \in \bbR^{n \times n}$, and each row has the same covariance $\bfB \in \bbR^{m \times m}$.}
\end{defn}

Additional properties of the MVG distribution are provided in Appendix~\ref{sec:MGVresutls2}. Note that if $\bfX \sim \calM\calN(\bfM,\bfA,\bfB)$, then $\vect(\bfX) \sim \mathcal{N}(\vect(\bfM), \bfB \otimes \bfA)$. Based on this observation, we propose the following decomposition of the GP prior:
\begin{equation}
\label{eq:owggponemkr}
\vect(F(\bfx)) \sim \mathcal{GP}(\vect(\bfM_0(\bfx)), \bfB_0(\bfx,\bfx') \otimes \bfA),
\end{equation}
where $\bfB_0(\bfx,\bfx') \in \mathbb{R}^{(m+1)\times (m+1)}$ and $\bfA \in \mathbb{R}^{n \times n}$. \Revised{}{We call such a process $F(\bfx)$ with kernel structure defined in \eqref{eq:owggponemkr} a Matrix Variate Gaussian Process (MVGP).}

\begin{figure*}
    \centering
    \includegraphics[width=0.45\linewidth,trim=0mm 5mm 0mm 1mm, clip]{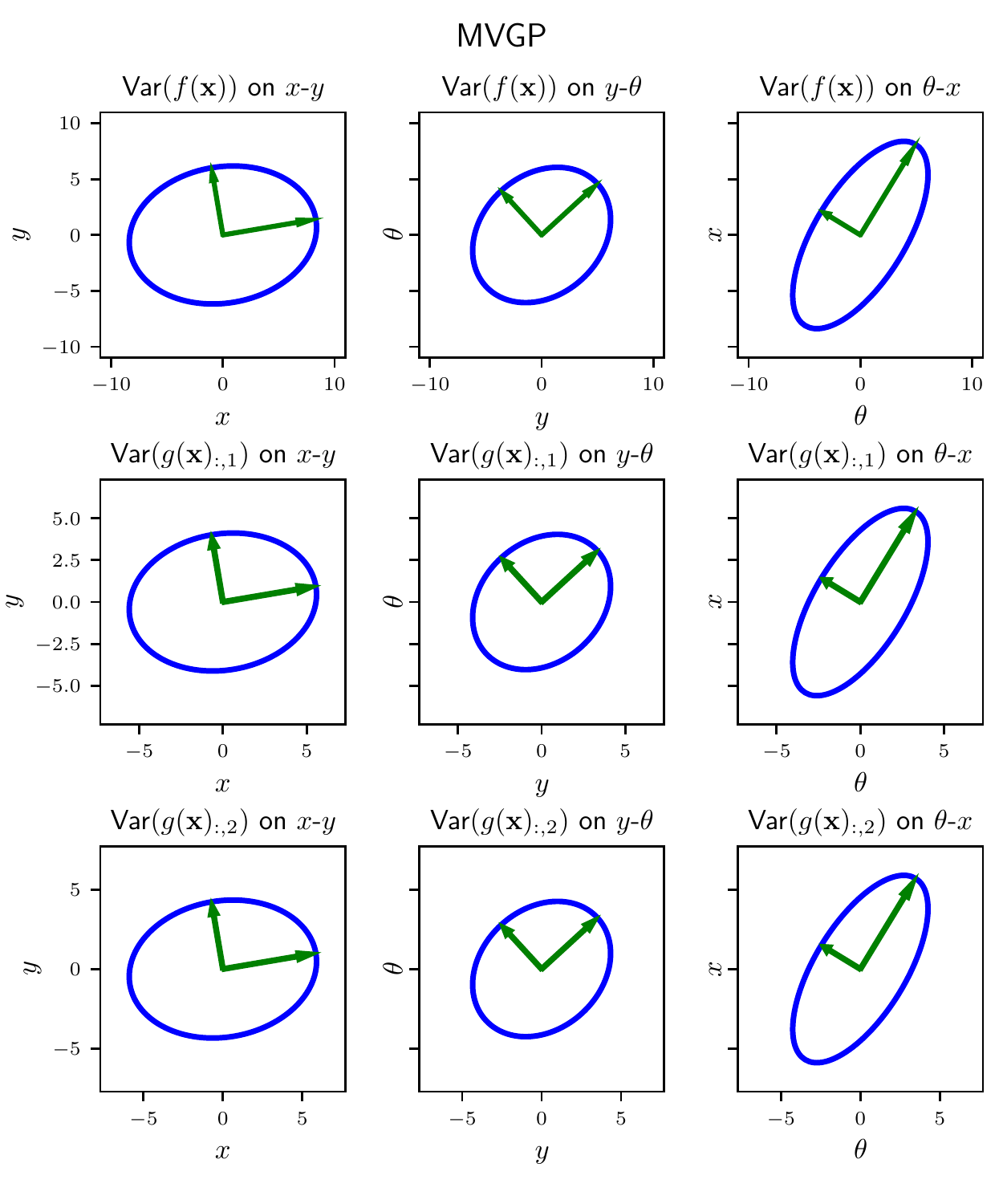}%
    \hspace{0.05\linewidth}%
    \includegraphics[width=0.45\linewidth,trim=0mm 5mm 0mm 1mm, clip]{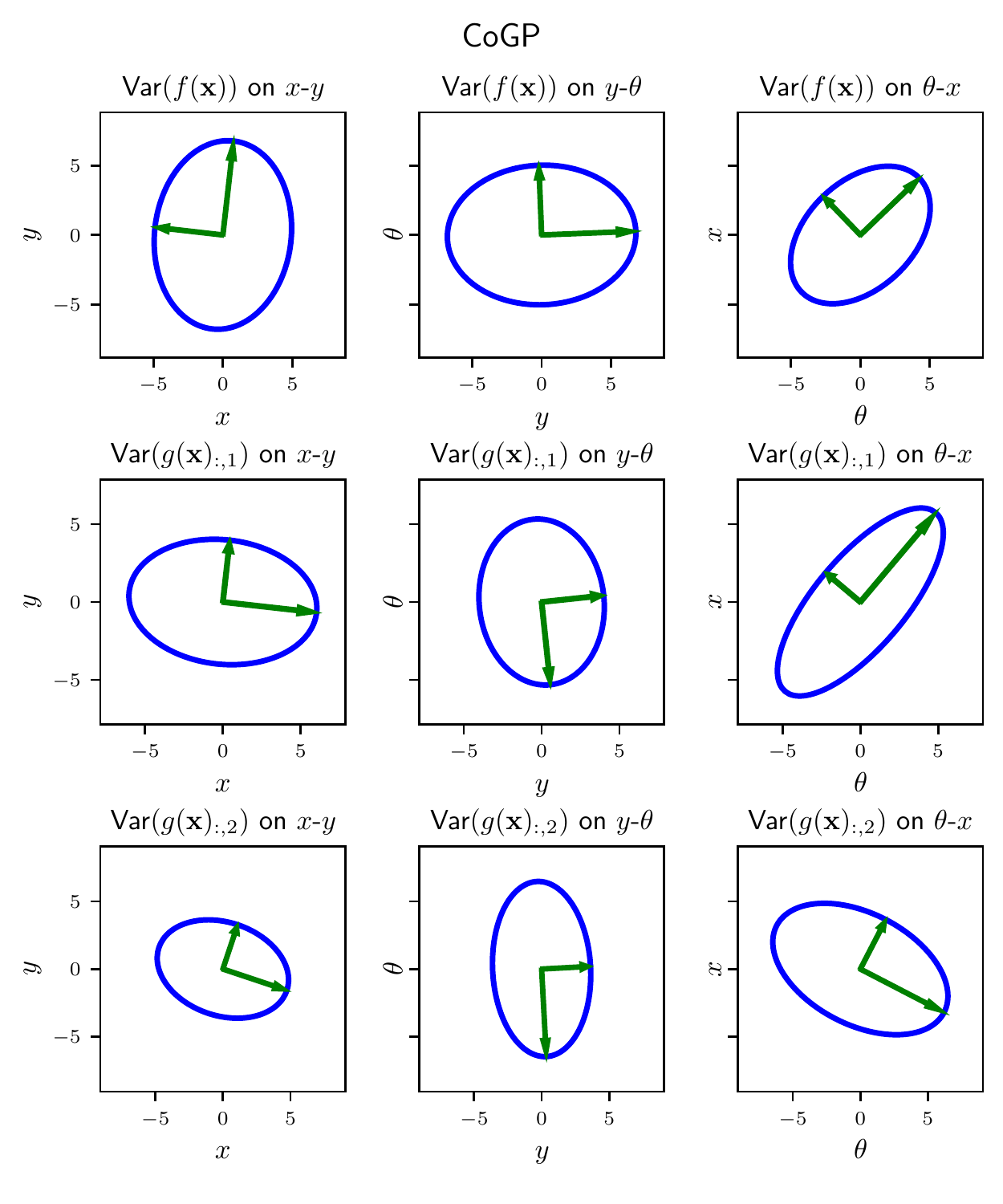}
    \caption{\Revised{}{Covariances of $f(\bfx)$ and the columns of $g(\bfx)$ learned from an Ackermann vehicle simulation described in Sec.~\ref{sec:sims-Ackermann} using an MVGP model (left) and a CoGP model (right). By construction the shapes of the covariances for each column of the MVGP model are the same (up to scale), while the column covariance shapes for the CoGP model are not constrained.}}
    \label{fig:covariances-shape}
\end{figure*}

We assume that the measurement noise covariance matrix satisfies $\bfS = \sigma^2 \bfA$ for some parameter $\sigma > 0$. This assumption on the relationship between $\bfS$ and the dynamics row covariance $\bfA$ is necessary to preserve the Kronecker product structure of the covariance in the GP posterior.


To simplify notation, let $\bcalB_{1:k}^{1:k} \in \real^{k(m+1) \times k(m+1)}$ be a matrix with $k \times k$ block elements $\brl{\bcalB_{1:k}^{1:k}}_{i,j} \triangleq \bfB_0(\bfx_i,\bfx_j)$ and define $\bcalM_{1:k} \triangleq \begin{bmatrix} \bfM_0(\bfx_1) & \cdots & \bfM_0(\bfx_k) \end{bmatrix} \in \real^{n \times k(m+1)}$, $\bcalB_{1:k}(\bfx) \triangleq \brl{\bfB_0(\bfx, \bfx_1), \dots, \bfB_0(\bfx, \bfx_k) } \in \bbR^{(m+1) \times k(m+1)}$ and $\ubcalU_{1:k}\triangleq \diag(\ctrlaff_1, \dots, \ctrlaff_k) \in \R^{k(m+1) \times k}$. Consider an arbitrary test point \Revised{$\bfx$}{$\bfx_*$}. The train and test data are jointly Gaussian:
\begin{multline}
\label{eq:joint-training-test-distribution}
  \begin{bmatrix}
    \vect(\StDtDat_{1:k})
    \\
    \vect(F(\bfx_*))
  \end{bmatrix}
  \sim
  \mathcal{N}\left(
  \begin{bmatrix}
    \vect(\bcalM_{1:k}\ubcalU_{1:k})
    \\
    \vect(\bfM_0(\bfx_*))
  \end{bmatrix},
  \right.\\
  \left.
  \begin{bmatrix}
    \ubcalU_{1:k}^\top \bcalB_{1:k}^{1:k} \ubcalU_{1:k} + \sigma^2\bfI_k
    & \ubcalU_{1:k}^\top\bcalB_{1:k}^\top(\bfx_*)
    \\
    \bcalB_{1:k}(\bfx_*)\ubcalU_{1:k} &  \bfB_0(\bfx_*, \bfx_*)
  \end{bmatrix} \otimes \bfA  \right),
\end{multline}
where
\begin{align}
  &\ubcalU_{1:k}^\top \bcalB_{1:k}^{1:k} \ubcalU_{1:k}
  =\begin{bmatrix}
    \underline{\bfu}_1^\top \bfB_0(\bfx_1,\bfx_1) \underline{\bfu}_1& \cdots & \underline{\bfu}_1^\top \bfB_0(\bfx_1,\bfx_k) \underline{\bfu}_k
    \\ \vdots&\ddots&\vdots\\
    \underline{\bfu}_k^\top \bfB_0(\bfx_k,\bfx_1) \underline{\bfu}_1& \cdots & \underline{\bfu}_k^\top \bfB_0(\bfx_k,\bfx_k) \underline{\bfu}_k
  \end{bmatrix}
  \notag\\
    &\bcalB_{1:k}(\bfx_*)\ubcalU_{1:k}
    =
    \begin{bmatrix}
      \bfB_0(\bfx_1,\bfx_*)\underline{\bfu}_1
      &
      \cdots
      &
      \bfB_0(\bfx_k,\bfx_*)\underline{\bfu}_k
    \end{bmatrix}.
\end{align}
Applying a Schur complement to \eqref{eq:joint-training-test-distribution}, we find the distribution of $\vect(F(\bfx_*))$ conditioned \Revised{on}{on the training data $(\StDat_{1:k}, \bfU_{1:k}, \StDtDat_{1:k})$. The posterior mean and covariance functions are provided in the next proposition.}

\begin{Proposition}
\label{thmfrmvg24}
The posterior distribution of $\vect(F(\bfx))$ conditioned on the training data
$(\StDat_{1:k}, \bfU_{1:k}, \StDtDat_{1:k})$ is a Gaussian process
$\mathcal{GP}(\vect(\bfM_k(\bfx)), \bfB_k(\bfx,\bfx')\otimes\bfA)$ with
parameters:
\begin{gather}
\label{eq:mvg-posterior}
\begin{aligned}
  &\bfM_k(\bfx) \triangleq \bfM_0(\bfx) +
  \prl{ \dot{\bfX}_{1:k} - \bcalM_{1:k}\ubcalU_{1:k}} \prl{\ubcalU_{1:k}\bcalB_{1:k}(\bfx)}^\dagger
  \\
  &\bfB_k(\bfx,\bfx') \triangleq \bfB_0(\bfx,\bfx') 
  -
  \bcalB_{1:k}(\bfx)\ubcalU_{1:k} \prl{\ubcalU_{1:k}\bcalB_{1:k}(\bfx')}^\dagger
  \\
  &\prl{\ubcalU_{1:k}\bcalB_{1:k}(\bfx)}^\dagger
  \triangleq
  \prl{\ubcalU_{1:k}^\top\bcalB_{1:k}^{1:k}\ubcalU_{1:k} + \sigma^2 \bfI_k}^{-1}\ubcalU_{1:k}^\top\bcalB_{1:k}^\top(\bfx).
\end{aligned}
\raisetag{7ex}
\end{gather}
\end{Proposition}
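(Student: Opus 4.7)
The plan is to apply standard Gaussian conditioning to the joint distribution in~\eqref{eq:joint-training-test-distribution}, extended to include a second test input, and to exploit the Kronecker-product structure throughout so that the row-covariance $\bfA$ factors cleanly out of the posterior.

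First, I would extend~\eqref{eq:joint-training-test-distribution} to the joint of $(\vect(\StDtDat_{1:k}),\vect(F(\bfx)),\vect(F(\bfx')))$ for two arbitrary test points $\bfx,\bfx'$. Under the MVGP prior~\eqref{eq:owggponemkr} together with the observation model $\dot{\bfx}(t_i) = F(\bfx_i)\ubfu_i + \bfepsilon_i$ with $\bfepsilon_i \sim \calN(\mathbf{0},\sigma^2\bfA)$, every block of the resulting covariance matrix factors as a matrix built from $\bfB_0$, $\ubcalU_{1:k}$, and $\sigma^2\bfI_k$, tensored with $\bfA$. In particular the train--train block is $(\ubcalU_{1:k}^\top\bcalB_{1:k}^{1:k}\ubcalU_{1:k}+\sigma^2\bfI_k)\otimes\bfA$, which is invertible whenever $\bfA\succ 0$.

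Second, I would apply the conditional Gaussian formula $\bfmu_{b\mid a}=\bfmu_b+\bfSigma_{ba}\bfSigma_{aa}^{-1}(\bfa-\bfmu_a)$, $\bfSigma_{b\mid a}=\bfSigma_{bb}-\bfSigma_{ba}\bfSigma_{aa}^{-1}\bfSigma_{ab}$, together with the Kronecker mixed-product identities $(\bfP_1\otimes\bfA)(\bfP_2\otimes\bfA)^{-1}(\bfP_3\otimes\bfA) = (\bfP_1\bfP_2^{-1}\bfP_3)\otimes\bfA$ and $(\bfA\otimes\bfB)^{-1}=\bfA^{-1}\otimes\bfB^{-1}$. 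For the covariance this yields immediately
\begin{equation*}
\Cov\bigl(\vect(F(\bfx)),\vect(F(\bfx'))\mid \text{data}\bigr) = \bigl(\bfB_0(\bfx,\bfx') - \bcalB_{1:k}(\bfx)\ubcalU_{1:k}(\ubcalU_{1:k}\bcalB_{1:k}(\bfx'))^\dagger\bigr)\otimes\bfA,
\end{equation*}
with $(\ubcalU_{1:k}\bcalB_{1:k}(\bfx'))^\dagger$ as defined in the statement; this establishes both the Kronecker form $\bfB_k(\bfx,\bfx')\otimes\bfA$ and the exact expression for $\bfB_k$. For the mean, the same identity reduces $\bfSigma_{ba}\bfSigma_{aa}^{-1}$ to $[(\bcalB_{1:k}(\bfx)\ubcalU_{1:k})(\ubcalU_{1:k}^\top\bcalB_{1:k}^{1:k}\ubcalU_{1:k}+\sigma^2\bfI_k)^{-1}]\otimes\bfI_n$. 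Applying the vec-identity $(\bfP^\top\otimes\bfI_n)\vect(\bfX)=\vect(\bfX\bfP)$ to the residual $\vect(\StDtDat_{1:k} - \bcalM_{1:k}\ubcalU_{1:k})$ and de-vectorizing recovers the claimed matrix expression $\bfM_k(\bfx)=\bfM_0(\bfx)+(\StDtDat_{1:k}-\bcalM_{1:k}\ubcalU_{1:k})(\ubcalU_{1:k}\bcalB_{1:k}(\bfx))^\dagger$.

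The main obstacle is bookkeeping rather than any one difficult step: one must verify that each block of the joint covariance is indeed of the form $(\cdot)\otimes\bfA$ and that the Schur complement preserves this factorization. This relies critically on the modeling assumption $\bfS=\sigma^2\bfA$; under a general noise covariance the $\bfA$-factor would fail to pull through, the posterior would no longer be an MVGP, and one would be forced to track a full $n(m+1)\times n(m+1)$ covariance, destroying both the interpretability and the computational savings emphasized earlier in the section. Finally, because the posterior covariance at any finite collection of test points is of Kronecker form with factor $\bfA$, the conditioned process is again an MVGP, which justifies writing the result as the GP in~\eqref{eq:mvg-posterior}.
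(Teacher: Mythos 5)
Your proposal is correct and follows essentially the same route as the paper: Gaussian conditioning (Schur complement) on the joint distribution \eqref{eq:joint-training-test-distribution}, using the Kronecker mixed-product and inverse identities to pull the factor $\bfA$ through, and the identity $(\bfP^\top\otimes\bfI_n)\vect(\bfX)=\vect(\bfX\bfP)$ to de-vectorize the mean. Your explicit extension to a pair of test points $(\bfx,\bfx')$ and your remark that $\bfS=\sigma^2\bfA$ is what preserves the Kronecker factorization are minor refinements of the paper's argument, not a different approach.
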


\begin{proof}
See Appendix~\ref{proof:matrix-variate-gp}.
\end{proof}


\Revised{Thm}{Prop}.~\ref{thmfrmvg24} shows that, for a given control input $\bfu$, the MVGP posterior of $F(\bfx)\underline{\bfu}$ is (applying Lemma~\ref{lemma:product-mvg}):
\begin{equation}
F(\bfx)\underline{\bfu}
\sim \mathcal{GP}(\bfM_k(\bfx)\underline{\bfu}, \underline{\bfu}^\top\bfB_k(\bfx,\bfx')\underline{\bfu}\otimes\bfA).
\label{eq:gp_posterior}
\end{equation}

A key property of the MVGP model in \Revised{Thm}{Prop}.~\ref{thmfrmvg24} is that the posterior preserves the kernel decomposition. In the special case of, $\bfB_0(\bfx,\bfx') \triangleq \bfB \knl_0(\bfx,\bfx')$, in addition to the kernel parameters, our model requires learning of $\bfB \in \mathbb{R}^{(1+m)\times (1+m)}$ and, if the measurement noise statistics are unknown, $\bfA \in \mathbb{R}^{n \times n}$ and $\sigma \in \bbR$. Thus, the MVGP model has fewer parameters than a CoGP model, which requires $(1+m)^2n^2$ parameters for the full covariance decomposition $\bfK_0(\bfx,\bfx') \triangleq \bfSigma \knl_0(\bfx,\bfx')$. For a data set with $k$ examples, the training computational complexity of our MVGP approach is $O(k^3)$, while the same for the CoGP model is $O(k^3n^3)$. This is because our model inverts only a $(k \times k)$ matrix while the CoGP model inverts a $(kn \times kn)$ matrix (see Appendix~\ref{proof:coregionalization-gp}). \Revised{}{Fig.~\ref{fig:covariances-shape} shows a comparison of the covariances obtained from an MVGP model (with $\bfB_0(\bfx,\bfx') = \bfB \knl_0(\bfx,\bfx')$) and from a CoGP model. The shapes of the MVGP covriances for different columns of $F(\bfx)$ are the same by construction, while those of CoGP model may be arbitrary. Thus, the CoGP model is more expressive than our MVGP model. However, MVGP model offers considerable computational advantages especially for high-dimensional systems without significant loss in accuracy. The accuracy and computational complexity of the MVGP and CoGP models are compared in Sec.~\ref{sec:eg-pendulum}.}

\section{Self-triggered Control with Probabilistic Safety Constraints}
\label{sec:unknowndynamics}
\newcommand{\CBCr}{\mbox{CBC}_k^{(r)}}

The MVGP model developed in Sec.~\ref{learningsec:13452} addresses Problem~\ref{prb:learning} and provides a probabilistic estimate of the unknown system dynamics $F(\bfx)\underline{\bfu}$ in \eqref{eq:gp_posterior}. In this section, we consider Problem \ref{prob3444!!}. We extend the optimization-based control synthesis approach in \eqref{eq:clf-cbf-qp} to handle probabilistic stability and safety constraints induced by the posterior distribution of $F(\bfx)\underline{\bfu}$. We focus on handling probabilistic safety constraints of the form $\mathbb{P}(\mbox{CBC}(\bfx,\bfu) \ge 0) \ge p$. Since the control Lyapunov and control barrier conditions have the same form, involving the Lie derivative of a known function $V(\bfx)$ or $h(\bfx)$ with respect to the system dynamics, our approach can be used for stability constraints as well, as demonstrated in \Revised{}{Sec. \ref{sec:sims-Ackermann}}. Furthermore, we develop self-triggering conditions to adaptively decide the times $t_k$, $t_{k+1}$, $\ldots$, at which the system inputs $\bfu$ should be recomputed in order to guarantee safety with high probability along the continuous-time system trajectory, instead of instantaneously in time.

\Revised{}{Our approach requires the stochastic system dynamics $F(\bfx)\underline{\bfu}$ to be locally Lipschitz continuous with high probability.
This is necessary to ensure that for any MVGP sample $\hat{F}(\bfx)$, the ordinary differential equation $\dot{\bfx} = \hat{F}(\bfx)\underline{\bfu}$ in \eqref{eq:system_dyanmics} has a unique solution \cite[Thm.~3.20]{sastry_book}. Without such an assumption, the system would be able to instantaneously move from a safe state with a large safety margin to an unsafe state, making it impossible to guarantee safety for any time duration after a control triggering instant.}
\Revised{\begin{ansatz}
\label{ass:LipschitzGP}
Let $\Prob_k$ be the probability measure induced by the distribution of $F(\bfx)\underline{\bfu}$ at time $t_k$.
Assume that for any $L_k > 0$, $\bfu$, and $\tau_k$, there exists a constant $b_k > 0$ such that \Revised{}{the stochastic system dynamics $F(\bfx(t))\ctrlaff$ is locally Lipschitz continuous with high probability $q_k$:}
\begin{align}
\label{eq:smoth23}
\Revised{\Prob_k}{\Prob}\biggl(\sup_{s \in [0, \tau_k)} &\|F(\bfx(t_k+s))\ctrlaff_k-F(\bfx(t_k))\ctrlaff_k\|\\
& \le L_k \|\bfx(t_k+s)-\bfx(t_k)\| \biggr) \ge q_k \triangleq \Revised{1-e^{-b_kL_k}}{1-\delta_{L_k}}. 
\notag
\end{align}
\end{ansatz}}
{The next Lemma shows constructively how to compute a Lipschitz constant $L_\bff$ which holds with a user-specified probability $1-\delta_L$ for a vector-valued GP model  with  kernel whose fourth-order partial derivatives are Lipschtz continuous. The proof follows, with minor modifications, the argument in \cite[Thm 3.2]{lederer2019uniform}.

\begin{lemma}
\label{lem:LipschitzGP}
Consider the vector-valued function $\bff(\bfx) \triangleq F(\bfx)\ubfu$ in \eqref{eq:gp_posterior} with distribution $\GP(\bfmu(\bfx), \bfA \knl_\bfB(\bfx, \bfx'))$, where $\bfmu(\bfx) = \bfM_k(\bfx)\ubfu$ and $\knl_\bfB(\bfx, \bfx') = \ubfu^\top \bfB_k(\bfx, \bfx') \ubfu$. Suppose that $\knl_\bfB(\bfx, \bfx')$ has continuous partial derivatives of fourth order. For $j \in \{1, \dots, n\}$, let $\knl^{\partial j}_\bfB(\bfx, \bfx') \triangleq \frac{\partial^2}{\partial \bfx_j \partial \bfx'_j} \knl_\bfB(\bfx, \bfx')$ with Lipschitz constant $L_\knl^{\partial j}$ over a compact set $\calX$ with maximal extension $r = \max_{\bfx, \bfx' \in \calX} \|\bfx - \bfx'\|$. Then, each element of the Jacobian of $\bff(\bfx)$ is bounded with probability of at least $1-\frac{\delta_L}{n^2}$:
\begin{equation}
    \left|\sup_{\bfx \in \calX} \frac{\partial \bff_i(\bfx)}{\partial \bfx_j}\right| \le L_{i,\partial j}, \;\; \forall i, j \in \{1, \dots, n \},
\end{equation}
where
\[
\scaleMathLine{L_{i, \partial j} \triangleq \sqrt{2 \log\left(\frac{2n^2}{\delta_L}\right)} \knl^{i,\partial j}_{\calX} + 12 \sqrt{6 n} \max \left\{ \knl^{i,\partial j}_{\calX}, \sqrt{r \bfA_{ii} L_\knl^{\partial j}} \right\}}
\]
and $\knl^{i,\partial j}_{\calX} \triangleq \max_{\bfx \in \calX} \sqrt{\bfA_{ii}\knl^{\partial j}_\bfB(\bfx, \bfx')}$. Furthermore, a sample function $\bff(\bfx)$ of the given GP is almost surely continuous on $\calX$ and with probability of at least $1-\delta_L$, $L_\bff \triangleq \sqrt{\frac{1}{n^2}\sum_{i=1}^n \sum_{j=1}^n L_{i,\partial j}^2}$ is a Lipschitz constant of $\bff(\bfx)$ on $\calX$.
\end{lemma}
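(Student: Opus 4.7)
The plan is to adapt Theorem~3.2 of \cite{lederer2019uniform} to the vector-valued setting by working coordinate-by-coordinate and invoking a union bound over the $n^2$ entries of the Jacobian of $\bff$. Throughout, the kernel decomposition $\bfA \knl_\bfB(\bfx,\bfx')$ of the MVGP is what makes the vector case reduce cleanly to $n^2$ scalar problems with shared scalar kernel $\knl_\bfB$ and row-specific scalings $\bfA_{ii}$.

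First, I would establish that each scalar partial derivative $\partial \bff_i(\bfx)/\partial \bfx_j$ is itself a Gaussian process. Because $\knl_\bfB$ has continuous fourth-order partial derivatives, standard GP differentiability results imply that $\bff$ is almost surely continuously differentiable on $\calX$ and that the derivative process has mean $\partial \bfmu_i/\partial \bfx_j$ and covariance $\bfA_{ii}\knl_\bfB^{\partial j}(\bfx,\bfx')$ obtained by differentiating $\bfA_{ii}\knl_\bfB(\bfx,\bfx')$ with respect to both arguments. Its pointwise variance is bounded on $\calX$ by $(\knl_\calX^{i,\partial j})^2$.

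Second, I would control $\sup_{\bfx \in \calX}|\partial \bff_i(\bfx)/\partial \bfx_j|$ by the usual two-term decomposition of the supremum of a Gaussian process: a pointwise contribution and an oscillation (entropy) contribution. The pointwise term, via a Gaussian tail bound at a pre-shrunk confidence level $\delta_L/n^2$, produces $\sqrt{2\log(2n^2/\delta_L)}\,\knl_\calX^{i,\partial j}$. For the oscillation term, the Lipschitz constant $L_\knl^{\partial j}$ of $\knl_\bfB^{\partial j}$ lets me bound the canonical pseudo-metric $d(\bfx,\bfx') = (\E[(\partial \bff_i(\bfx)/\partial \bfx_j - \partial \bff_i(\bfx')/\partial \bfx_j)^2])^{1/2}$ by $\sqrt{\bfA_{ii} L_\knl^{\partial j} \|\bfx-\bfx'\|}$. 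Feeding this pseudo-metric into a Dudley-type integral over $\calX$ (of Euclidean diameter $r$) gives the remaining term $12\sqrt{6n}\,\max\{\knl_\calX^{i,\partial j},\sqrt{r\bfA_{ii}L_\knl^{\partial j}}\}$; the $\max$ arises from splitting the Dudley integral at the crossover radius between the small-scale regime (where Lipschitz entropy dominates) and the large-scale regime (where the uniform variance bound takes over).

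Third, I would apply a union bound over the $n^2$ index pairs $(i,j)$ to obtain simultaneous bounds $|\partial \bff_i/\partial \bfx_j| \le L_{i,\partial j}$ with probability at least $1-\delta_L$. On this event, the mean value inequality applied componentwise along the segment from $\bfx$ to $\bfx'$, combined with Cauchy--Schwarz, converts the entrywise Jacobian bounds into a global Lipschitz bound on $\bff$; the precise stated constant $L_\bff$ then follows by algebra, and almost sure continuity of $\bff$ on $\calX$ is immediate from the Lipschitz estimate.

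The main obstacle is the Dudley-integral step: isolating the explicit constant $12\sqrt{6n}$ and reconciling the two regimes inside the $\max$ require careful tracking of how the pseudo-metric diameter of $\calX$ scales with both $r$ and $\bfA_{ii} L_\knl^{\partial j}$, together with splitting the Dudley integral at the right crossover radius. All remaining ingredients---the derivative-process covariance identity, the Borell--TIS pointwise tail bound, and the union bound over $(i,j)$---are standard and can be lifted essentially verbatim from \cite{lederer2019uniform}, with the only genuinely new bookkeeping being the appearance of the factor $\bfA_{ii}$ in the variance expressions inherited from the MVGP kernel structure.
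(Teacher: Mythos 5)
Your steps 1--2 are a sound reconstruction of the scalar machinery that the paper simply imports from \cite{lederer2019uniform}: the derivative process $\partial \bff_i/\partial \bfx_j$ is a GP with covariance $\bfA_{ii}\knl^{\partial j}_\bfB(\bfx,\bfx')$, its supremum is controlled by a Gaussian-tail term at level $\delta_L/n^2$ plus a Dudley/metric-entropy term with the pseudo-metric bounded through $L_\knl^{\partial j}$, and the union bound over the $n^2$ index pairs is exactly what the paper does. The genuine gap is in your last step. From the entrywise bounds $|\partial \bff_i/\partial \bfx_j|\le L_{i,\partial j}$, the mean value inequality plus Cauchy--Schwarz gives $|\bff_i(\bfx')-\bff_i(\bfx)|\le \sqrt{\sum_j L_{i,\partial j}^2}\,\|\bfx'-\bfx\|_2$ and hence $\|\bff(\bfx')-\bff(\bfx)\|_2\le \sqrt{\sum_i\sum_j L_{i,\partial j}^2}\,\|\bfx'-\bfx\|_2$, i.e.\ the Frobenius-type constant, which is $n$ times larger than the stated $L_\bff=\sqrt{\tfrac{1}{n^2}\sum_i\sum_j L_{i,\partial j}^2}$. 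The claim that ``the precise stated constant then follows by algebra'' is therefore not justified: your route proves Lipschitzness with constant $nL_\bff$, not $L_\bff$.

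The reason the paper lands on the $1/n^2$ factor is that it does not go through a Jacobian sup bound plus the mean value theorem at all. It invokes \cite[Lemma B.2]{lederer2019uniform} in the form of a per-coordinate difference inequality, $|\bff_i(\bfx')-\bff_i(\bfx)|\le L_{i,\partial j}|\bfx'_j-\bfx_j|$ for every $j$, and then adds the $n$ inequalities (the left side being the same for each $j$), so the left-hand side picks up a factor $n$ before Cauchy--Schwarz is applied; squaring and summing over $i$ yields the $1/n^2$ inside the square root. That per-coordinate form is not derivable from your entrywise Jacobian bounds alone, so if you want the constant exactly as stated you must either obtain and combine the coordinatewise inequalities as the paper does, or accept the larger Frobenius-norm constant and adjust the statement accordingly. (A smaller, shared issue: almost-sure continuity should be argued from the kernel smoothness/cited sample-path result, not from a Lipschitz bound that only holds on an event of probability $1-\delta_L$.)
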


\begin{proof}
See Appendix~\ref{sec:proof-MVGP-lederer2019uniform}.
\end{proof}

The assumptions of Lemma~\ref{lem:LipschitzGP} are mild and reasonable. The assumption about kernel continuity can be satisfied by an appropriate choice of a kernel function. For example, commonly used radial basis functions are infinitely differentiable and continuous. The assumption that the space $\calX$ is bounded is reasonable for physical systems. For example, the region that an Ackermann vehicle can cover can be bounded based on the maximum velocity and acceleration of the vehicle.
}

\subsection{Probabilistic Safety Constraints}


Consider probabilistic stability and safety constraints in the CLF-CBF QP in \eqref{eq:clf-cbf-qp}, induced by the MVGP distribution of $F(\bfx)\underline{\bfu}$ at time $t_k$ in \eqref{eq:gp_posterior}:
\begin{equation}
\label{prog:CBF-CLF:unknown}
\begin{aligned}
\pi(\bfx_k) &\in \argmin_{\bfu_k \in \calU, \delta \in \bbR}
\|\bfR(\bfx_k)\bfu_k\|^2 + \lambda \delta^2\\
\text{s.t.}~~&\mathbb{P}(\mbox{CLC}(\bfx_k,\bfu_k) \le \delta \mid \bfx_k,\bfu_k) \ge \tilde{p}_k\\
&\mathbb{P}(\mbox{CBC}(\bfx_k,\bfu_k) \ge \zeta \mid \bfx_k,\bfu_k) \ge \tilde{p}_k,
\end{aligned}
\end{equation}
where \Revised{}{$\tilde{p}_k \triangleq p_k/(1-\delta_L)$ and $\delta_L$ is specified in Lemma~\ref{lem:LipschitzGP}.

\begin{remark}
Given a desired safety probability $p_k$ for the interval $[t_k,t_k+\tau_k)$, defined in \eqref{eorprobmelm45}, we choose the Lipschitz continuity probability $1-\delta_L$ in Lemma~\ref{lem:LipschitzGP} to ensure instantaneous safety at time $t_k$ with higher probability $\tilde{p}_k \in [0.5, 1)$. The choice of $\delta_L$ affects the Lipschitz constant $L_\bff$ and in turn the duration $\tau_k$, as discussed in Sec.~\ref{slfcc11}. By ensuring safety at time instant $t_k$ with probability at least $\tilde{p}_k$, we guarantee safety with probability at least $p_k=\tilde{p}_k(1-\delta_L)$ during the interval $[t_k,t_k+\tau_k)$ (see Proposition~\ref{thm:validity-inter-triggeringt-times}). Also see Remark \ref{rem:feasil2} for further discussion on feasibility. 
\oprocend
\end{remark}}

To ensure that the safety constraint does not only hold instantaneously at $t_k$ but over an interval $[t_k, t_k + \tau_k)$, we enforce a tighter constraint via $\zeta > 0$ and determine the time $\tau_k$ for which it remains valid. The choice of $\zeta$ and its effect on $\tau_k$ are discussed next. \Revised{}{First, we obtain the mean and variance of the $\mbox{CBC}$ constraint, which is an affine transformation of the system dynamics.}

\begin{lemma}
\label{lem:NRcalculation}
Consider the dynamics in~\eqref{eq:system_dyanmics} with posterior distribution in~\eqref{eq:gp_posterior}. Given $\bfx_k$ and $\bfu_k$, $\mbox{CBC}_k \triangleq \mbox{CBC}(\bfx_k,\bfu_k)$ is a Gaussian random variable with the following parameters:
\begin{align}
\label{eq:parametofpi5543}
\E[\mbox{CBC}_k] &= \nabla_\bfx h(\bfx_k)^\top \bfM_k(\bfx_k)\underline{\bfu}_k + \alpha(h(\bfx_k)),\\
\Var[\mbox{CBC}_k] &=  (\underline{\bfu}_k^\top\bfB_k(\bfx_k,\bfx_k)\underline{\bfu}_k)
(\nabla_\bfx h(\bfx_k)^{\top}\bfA\nabla_\bfx h(\bfx_k)). \notag
\end{align}
\end{lemma}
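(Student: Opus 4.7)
The plan is to recognize that the control barrier condition $\mbox{CBC}_k$ is an \emph{affine} function of the unknown dynamics $F(\bfx_k)\underline{\bfu}_k$ at the fixed query point $(\bfx_k,\bfu_k)$, so the claim will follow from standard Gaussian affine-transformation rules. First, I would expand $\mbox{CBC}(\bfx_k,\bfu_k)$ using Definition~\ref{def:cbf} and the control-affine structure of~\eqref{eq:system_dyanmics} to write
\[
\mbox{CBC}_k \;=\; \Lie_f h(\bfx_k) + \Lie_g h(\bfx_k)\bfu_k + \alpha(h(\bfx_k)) \;=\; \nabla_\bfx h(\bfx_k)^\top F(\bfx_k)\underline{\bfu}_k + \alpha(h(\bfx_k)),
\]
where I used $F(\bfx)\underline{\bfu} = f(\bfx) + g(\bfx)\bfu$. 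Both $\nabla_\bfx h(\bfx_k)$ and $\alpha(h(\bfx_k))$ are deterministic once $\bfx_k$ is fixed, so all randomness comes through $F(\bfx_k)\underline{\bfu}_k$.

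Second, I would invoke the MVGP posterior from~\eqref{eq:gp_posterior}, which, evaluated at the single test point $\bfx_k$ with the chosen control, collapses from a Gaussian process to a finite-dimensional Gaussian vector. Since the covariance Kronecker factor $\underline{\bfu}_k^\top \bfB_k(\bfx_k,\bfx_k)\underline{\bfu}_k$ is a scalar, this yields
\[
F(\bfx_k)\underline{\bfu}_k \;\sim\; \mathcal{N}\bigl(\bfM_k(\bfx_k)\underline{\bfu}_k,\; (\underline{\bfu}_k^\top \bfB_k(\bfx_k,\bfx_k)\underline{\bfu}_k)\,\bfA\bigr).
\]
Applying the affine-transformation rule for Gaussians to the scalar $\mbox{CBC}_k = \bfa^\top F(\bfx_k)\underline{\bfu}_k + c$ with $\bfa \triangleq \nabla_\bfx h(\bfx_k)$ and $c \triangleq \alpha(h(\bfx_k))$ gives a Gaussian random variable with mean $\bfa^\top \bfM_k(\bfx_k)\underline{\bfu}_k + c$ and variance $\bfa^\top\bigl[(\underline{\bfu}_k^\top \bfB_k(\bfx_k,\bfx_k)\underline{\bfu}_k)\,\bfA\bigr]\bfa$. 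Because the Kronecker prefactor is scalar, it pulls out of the quadratic form, producing exactly the two expressions in~\eqref{eq:parametofpi5543}.

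There is no substantive obstacle here; the lemma is essentially a bookkeeping statement about how the MVGP covariance structure from Prop.~\ref{thmfrmvg24} interacts with a linear functional of the dynamics. The only care required is to confirm that evaluating~\eqref{eq:gp_posterior} at a single state and the specified control produces the stated $n$-variate Gaussian on $\R^n$, which is immediate from the definition of a Gaussian process and the identity $\vect(\bfX)\sim \mathcal{N}(\vect(\bfM),\bfB\otimes\bfA)$ for $\bfX\sim\calM\calN(\bfM,\bfA,\bfB)$ used previously in Sec.~\ref{learningsec:13452}.
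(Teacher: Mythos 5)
Your proposal is correct and follows essentially the same route as the paper: rewrite $\mbox{CBC}_k = \nabla_\bfx h(\bfx_k)^\top F(\bfx_k)\underline{\bfu}_k + \alpha(h(\bfx_k))$ as an affine function of the Gaussian vector $F(\bfx_k)\underline{\bfu}_k$ from~\eqref{eq:gp_posterior} and apply the standard mean/variance rules, with the scalar factor $\underline{\bfu}_k^\top\bfB_k(\bfx_k,\bfx_k)\underline{\bfu}_k$ pulling out of the quadratic form. No gaps.
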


\begin{proof}
We start by rewriting the definition of CBC as:
\begin{align}
  \mbox{CBC}_k&= \nabla_\bfx h(\bfx_k)^\top F(\bfx_k)\ctrlaff_k + \alpha(h(\bfx_k)).
\end{align}
Then, the mean and variance of $\mbox{CBC}_k$ are:
\begin{equation}
\begin{aligned}
\E[\mbox{CBC}_k] &= \nabla_\bfx h(\bfx_k)^\top \E[F(\bfx_k)\underline{\bfu}_k] + \alpha(h(\bfx_k)),\\
\Var[\mbox{CBC}_k] &=  
\nabla_\bfx h(\bfx_k)^{\top}\Var[F(\bfx_k)\ctrlaff_k]\nabla_\bfx h(\bfx_k),
\end{aligned}
\end{equation}
where the mean and variance of $F(\bfx_k)\ctrlaff_k$ are from \eqref{eq:gp_posterior}.
\end{proof}

Using Lemma~\ref{lem:NRcalculation}, we can rewrite the safety constraint as
\begin{equation}
\label{userspecife422}
\mathbb{P}(\mbox{CBC}_k \ge \zeta | \bfx_k,\bfu_k) = 1 - \Phi\prl{\frac{\zeta-\E[\mbox{CBC}_k]}{\sqrt{\Var[\mbox{CBC}_k]}}} \geq \tilde{p}_k, 
\end{equation}
where $\Phi(\cdot)$ is the cumulative distribution function of the standard Gaussian.
Note that if the control input is chosen so that $\frac{\zeta-\E[\mbox{\tiny{CBC}}_k]}{\sqrt{\Var[\mbox{\tiny{CBC}}_k]}} \to -\infty$, as the posterior variance of $\mbox{CBC}_k$ tends to zero, the probability $\mathbb{P}(\mbox{CBC}_k \ge \zeta | \bfx_k,\bfu_k)$ tends to one.
Namely, as the uncertainty about the system dynamics tends to zero, our results reduce to the setting of Sec.~\ref{sec:clf-cbf-qp}, and safety can be ensured with probability one. Using \eqref{userspecife422} and noting that $\Phi^{-1}(1-\tilde{p}_k) = - \sqrt{2}\mbox{erf}^{-1}(2\tilde{p}_k-1)$\Revised{}{, where $\mbox{erf}(\cdot)$ is the Gauss error function,} the optimization problem in \eqref{prog:CBF-CLF:unknown} can be restated as:
\begin{equation}
\label{eq:cbf-clf-exp-var}
\begin{aligned}
    \pi(\bfx_k) &\in \argmin_{\bfu_k \in \calU, \delta \in \bbR}\; \|\bfR(\bfx_k)\bfu_k\|^2 + \lambda \delta^2\\
    \text{s.t. } 
    & \delta - \E[ \mbox{CLC}_k] \ge  c(\tilde{p}_k)\sqrt{\Var[\mbox{CLC}_k]}\\
    &\zeta-\E[\mbox{CBC}_k] \le -c(\tilde{p}_k)\sqrt{\Var[\mbox{CBC}_k]},
\end{aligned}
\end{equation}
where $c(\tilde{p}_k) \triangleq \sqrt{2}\mbox{erf}^{-1}(2\tilde{p}_k-1)$. 
\Revised{}{Using the reformulation in \eqref{eq:cbf-clf-exp-var}, the next proposition shows that the control problem with probabilistic stability and safety constraints in \eqref{prog:CBF-CLF:unknown} is a convex optimization problem, which can be solved efficiently.}

\begin{Proposition}
\label{prop:socp-rel-deg-1}
The chance-constrained optimization problem in \eqref{prog:CBF-CLF:unknown} for control-affine system dynamics in \eqref{eq:system_dyanmics} with posterior distribution \eqref{eq:gp_posterior} is a second-order cone program (SOCP):
\begin{subequations}
\begin{align}
\pi(\bfx_k) &\in \argmin_{y \in \bbR, \delta \in \bbR, \bfu \in \calU}  \quad y
\\
~~\text{ s.t. }~~
&y - \left\| \bfR(\bfx_k) \bfu + \sqrt{\lambda} \delta \right\|_2 \ge 0
\\
  &
    \bfq_k(\bfx_k)^\top \ubfu - \delta
    +
c(\tilde{p}_k) \left\| \bfP_k(\bfx_k)\ubfu \right\|_2
    \le 0
    \label{eq:stochastic-clc-rd-1}\\
&
  \bfe_k(\bfx_k)^\top \ubfu - \zeta
- c(\tilde{p}_k) \left\| \bfV_k(\bfx_k)\ubfu \right\|_2
  \ge 0,
  \label{eq:stochastic-cbc-rd-1}
\end{align}
\label{lower:nu1}
\end{subequations}
where  
\begin{align*}
 \bfq_k(\bfx_k) &\triangleq \bfM_k(\bfx_k)^\top \grad_\bfx V(\bfx_k)  + \begin{bmatrix} \gamma(V(\bfx_k)) \\ \mathbf{0}_m \end{bmatrix},
 \\
 \bfP_k(\bfx_k) &\triangleq \sqrt{\grad_\bfx V(\bfx_k)^\top \bfA \grad_\bfx V(\bfx_k)} \bfB_k^{\frac{1}{2}}(\bfx_k, \bfx_k),
 \\
 \bfe_k(\bfx_k) &\triangleq \bfM_k(\bfx_k)^\top \nabla_\bfx h(\bfx_k) + \begin{bmatrix}\alpha(h(\bfx_k)) \\ \mathbf{0}_m\end{bmatrix},
 \\
 \bfV_k(\bfx_k) &\triangleq \sqrt{\grad_\bfx h(\bfx_k)^\top \bfA \grad_\bfx h(\bfx_k)}\bfB_k^{\frac{1}{2}}(\bfx_k, \bfx_k),
\end{align*}
and $\bfB_k^\frac{1}{2}(\bfx_k, \bfx_k)$ is the Cholesky factorization of $\bfB_k(\bfx_k, \bfx_k)$. 
\end{Proposition}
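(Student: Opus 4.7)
The plan is to show that the chance-constrained program \eqref{prog:CBF-CLF:unknown} can be algebraically rearranged into a second-order cone program on the augmented control variable $\ubfu_k = [1;\bfu_k^\top]^\top$. First I would start from the deterministic reformulation already derived in \eqref{eq:cbf-clf-exp-var}, and apply Lemma~\ref{lem:NRcalculation} twice: once directly for the safety constraint, and once---via an identical argument replacing $h$ by $V$ and $\alpha$ by $\gamma$---for the stability constraint. This yields $\E[\mbox{CLC}_k] = \nabla V(\bfx_k)^\top \bfM_k(\bfx_k)\ubfu_k + \gamma(V(\bfx_k))$ and $\Var[\mbox{CLC}_k] = (\ubfu_k^\top \bfB_k(\bfx_k,\bfx_k)\ubfu_k)(\nabla V(\bfx_k)^\top \bfA \nabla V(\bfx_k))$, with the analogous expressions for $\mbox{CBC}_k$ already in \eqref{eq:parametofpi5543}.

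Second, I would make both the mean and standard-deviation pieces manifestly affine or second-order cone in $\ubfu_k$. Since $\ubfu_k$ begins with a $1$, the scalar offsets $\gamma(V(\bfx_k))$ and $\alpha(h(\bfx_k))$ are themselves linear forms in $\ubfu_k$, giving $\E[\mbox{CLC}_k] = \bfq_k(\bfx_k)^\top \ubfu_k$ and $\E[\mbox{CBC}_k] = \bfe_k(\bfx_k)^\top \ubfu_k$ with $\bfq_k$ and $\bfe_k$ as in the statement. For the standard deviations, I would use a Cholesky-type factor $\bfB_k^{1/2}(\bfx_k,\bfx_k)$---which exists because $\bfB_k(\bfx_k,\bfx_k)$ is a PSD covariance---so that $\sqrt{\ubfu_k^\top \bfB_k \ubfu_k} = \|\bfB_k^{1/2}\ubfu_k\|_2$. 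Absorbing the nonnegative scalar $\sqrt{\nabla V^\top \bfA \nabla V}$ (resp.\ $\sqrt{\nabla h^\top \bfA \nabla h}$) into that matrix then produces $\bfP_k$ (resp.\ $\bfV_k$), and substituting back into \eqref{eq:cbf-clf-exp-var} gives exactly \eqref{eq:stochastic-clc-rd-1} and \eqref{eq:stochastic-cbc-rd-1}.

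Third, I would lift the quadratic objective to a linear one by the standard epigraph trick: introduce $y\in\bbR$ and minimize $y$ subject to the SOC constraint bounding $\|\bfR(\bfx_k)\bfu_k\|^2+\lambda\delta^2$ from above by $y^2$, which appears in the proposition in the stacked form with $\sqrt{\lambda}\delta$ adjoined to $\bfR\bfu$. To conclude, I would verify that all three constraints are genuine second-order cones: since the remark requires $\tilde p_k \ge 1/2$, the coefficient $c(\tilde p_k)=\sqrt{2}\,\mbox{erf}^{-1}(2\tilde p_k-1)$ is nonnegative, so each constraint of the form ``affine $\ge c(\tilde p_k)\|\cdot\|_2$'' is an SOC; combined with the convexity of $\calU$, the whole program is convex.

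The main obstacle is bookkeeping rather than a deep idea: one must make sure that after the Cholesky factorization and absorption of the scalar gradient norm into $\bfP_k$ and $\bfV_k$, the dimensions and signs line up so that the reformulated constraints are identical to \eqref{eq:stochastic-clc-rd-1}--\eqref{eq:stochastic-cbc-rd-1}, and one must justify the PSD property of $\bfB_k(\bfx_k,\bfx_k)$ (which follows from its definition as the value of a GP covariance kernel at a single point). Everything else is direct algebraic manipulation of the Gaussian chance constraint and of the quadratic objective.
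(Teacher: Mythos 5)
Your proposal is correct and follows essentially the same route as the paper's proof, which simply substitutes the mean and variance expressions from Lemma~\ref{lem:NRcalculation} (applied to both $\mbox{CBC}_k$ and, mutatis mutandis, $\mbox{CLC}_k$) into \eqref{eq:cbf-clf-exp-var} and introduces the auxiliary variable $y$ to turn the quadratic objective into a second-order cone constraint. Your additional bookkeeping---the Cholesky factorization of $\bfB_k(\bfx_k,\bfx_k)$, absorbing the affine offsets into $\bfq_k,\bfe_k$ via the leading $1$ in $\ubfu_k$, and noting $c(\tilde p_k)\ge 0$ for $\tilde p_k \ge 1/2$---is exactly the detail the paper leaves implicit.
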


\begin{proof}
Substituting the mean and variance of $\mbox{CLC}_k$ and $\mbox{CBC}_k$ (Lemma~\ref{lem:NRcalculation}) in \eqref{eq:cbf-clf-exp-var} and introducing an auxiliary variable $y$ to express the quadratic objective as a SOCP constraint leads to the result.
\end{proof}

\Revised{}{Prop.~\ref{prop:socp-rel-deg-1} generalizes the CLF-CBF QP formulation in \eqref{eq:clf-cbf-qp}. It shows that when the stability and safety constraints are probabilistic, induced by the MVGP distribution of the system dynamics, the control synthesis problem becomes a SOCP and, hence, is still convex.} We refer to the left-hand side of the constraints in \eqref{eq:stochastic-cbc-rd-1} and \eqref{eq:stochastic-clc-rd-1} as Stochastic CBC (SCBC) and Stochastic CLC (SCLC), respectively. Note that when $\tilde{p}_k = 0.5$, $c(\tilde{p}_k) = 0$ and the SOCP in \eqref{lower:nu1} reduces to the deterministic-case QP in \eqref{eq:clf-cbf-qp}. Otherwise, the SOCP can be solved to arbitrary precision $\epsilon$ within $O(-\log(\epsilon))$ iterations \cite{lobo1998applications} with off-the-shelf solvers \cite{optimization2014inc,vandenberghe2010cvxopt}.

\begin{remark}
\label{rem:feasil2}
\Revised{}{
If the SOCP program in \eqref{lower:nu1} is infeasible, there does not exist a control input that guarantees safety for the current dynamics GP distribution and the user-specified probabilistic safety requirement. This can be used as a failure mode to ask a human to take over control or provide additional training data that reduces the dynamics GP uncertainty. Since our SOCP constraint is a conservative approximation of the true probabilistic safety constraint in \eqref{prog:CBF-CLF:unknown}, it is possible that control that meets the desired safety bounds exists even if the SOCP is infeasible. Please refer to Caste\~nada et al~\cite{castaneda2021pointwise} for discussion of feasibility conditions of an SOCP controller.
}
\oprocend
\end{remark}

\subsection{Self-triggering Design}
\label{slfcc11}

The safety constraint in \eqref{lower:nu1} ensures safety with high probability at the triggering times $\{t_k\}_{k \in \mathbb{N}}$. Here, we extend our analysis to the inter-triggering invervals $[t_k,t_k + \tau_k)$. \Revised{}{We restate \cite[Prop.~1]{yang2019self} in our setting, showing that when the system dynamics are Lipschitz continuous (Lemma~\ref{lem:LipschitzGP}), then the change in the state is bounded.}



\begin{lemma}
\label{self-trige3Lip}
\Revised{}{Consider the GP model of the system dynamics in \eqref{eq:gp_posterior} with zero-order hold control for $[t_k,t_k+\tau_k)$. If the assumptions of Lemma~\ref{lem:LipschitzGP} are satisfied, i.e., with high probability the system dynamics are Lipschitz continuous with Lipschitz constant $L_{\bff_k}$, then for all $s \in [0, \tau_k)$ the total change in the system state is bounded with probability at least $1-\delta_L$}:
\begin{align}\label{defnirrrt5654}
   &\|\bfx(t_k+s)-\bfx_k\| \le  \overline{r}_k(s)\triangleq\frac{1}{L_{\bff_k}}\|\dot{\bfx}_k\|\left(e^{sL_{\bff_k}}-1\right).
\end{align}
\end{lemma}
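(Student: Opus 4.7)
The plan is to convert the Lipschitz bound on the closed-loop vector field into an integral inequality for $\|\bfx(t_k+s) - \bfx_k\|$ and then apply Gr\"onwall's inequality to get the exponential bound. All of this is done conditionally on the high-probability event from Lemma~\ref{lem:LipschitzGP}, which gives the final $1-\delta_L$ probability statement.

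First I would fix the zero-order hold control input $\bfu_k$ on $[t_k, t_k+\tau_k)$ and introduce $\bff_k(\bfx) \triangleq F(\bfx)\underline{\bfu}_k$, so that on the triggering interval the system in \eqref{eq:system_dyanmics} reduces to $\dot{\bfx}(t) = \bff_k(\bfx(t))$. By Lemma~\ref{lem:LipschitzGP} applied to the MVGP posterior in \eqref{eq:gp_posterior}, the sample path $\bff_k$ is, on a compact neighborhood of $\bfx_k$, Lipschitz with constant $L_{\bff_k}$ on an event $\mathcal{E}$ of probability at least $1-\delta_L$. All steps below are carried out on $\mathcal{E}$.

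Next, I would write
\begin{equation*}
\bfx(t_k+s) - \bfx_k \;=\; \int_0^s \bff_k(\bfx(t_k+\tau))\, d\tau,
\end{equation*}
take norms, and use the Lipschitz bound together with $\bff_k(\bfx_k) = \dot{\bfx}_k$ to obtain
\begin{equation*}
\|\bff_k(\bfx(t_k+\tau))\| \;\le\; \|\dot{\bfx}_k\| + L_{\bff_k}\|\bfx(t_k+\tau) - \bfx_k\|.
\end{equation*}
Letting $r(s) \triangleq \|\bfx(t_k+s) - \bfx_k\|$, this yields the integral inequality
\begin{equation*}
r(s) \;\le\; s\,\|\dot{\bfx}_k\| + L_{\bff_k}\int_0^s r(\tau)\, d\tau.
\end{equation*}

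Finally, I would apply Gr\"onwall's inequality (with forcing term $a(s) = s\,\|\dot{\bfx}_k\|$ and constant kernel $L_{\bff_k}$). A direct computation of $\int_0^s \tau L_{\bff_k} e^{L_{\bff_k}(s-\tau)}d\tau$ via integration by parts collapses the Gr\"onwall bound to the closed form $\overline{r}_k(s) = \frac{1}{L_{\bff_k}}\|\dot{\bfx}_k\|(e^{sL_{\bff_k}}-1)$, proving \eqref{defnirrrt5654} on $\mathcal{E}$ and hence with probability at least $1-\delta_L$. The main subtlety is not the Gr\"onwall calculation itself, which is routine, but keeping the probabilistic bookkeeping clean: the Lipschitz constant $L_{\bff_k}$ depends on the (random) GP sample, so the deterministic Gr\"onwall argument must be carried out path-wise on the event $\mathcal{E}$ supplied by Lemma~\ref{lem:LipschitzGP}, and the probability of that event is what produces the $1-\delta_L$ guarantee in the statement.
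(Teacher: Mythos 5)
Your argument is correct and is essentially the proof the paper intends: the paper simply restates \cite[Prop.~1]{yang2019self} (a deterministic Lipschitz-plus-Gr\"onwall/comparison bound on the state deviation under zero-order hold) conditioned on the Lipschitz event of probability at least $1-\delta_L$ supplied by Lemma~\ref{lem:LipschitzGP}, which is exactly your path-wise Gr\"onwall computation on $\mathcal{E}$. Your bookkeeping of the probabilistic conditioning and the closed-form evaluation of the Gr\"onwall bound both check out, so there is nothing to add.
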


Recall from Sec.~\ref{sec:clf-cbf-qp} that $h$ is a continuously differentiable function. Thus, using Lemma~\ref{self-trige3Lip}, for any inter-triggering time $\tau_k$, there exist a constant $L_{h_k} > 0$ such that:
\begin{align}
\label{eq:grad-h-bounded}
    \sup_{s \in [0, \tau_k)}\|\nabla_\bfx h(\bfx(t_k+s))\| \le L_{h_k}.
\end{align}
This is used in the next \Revised{theorem}{proposition} which concerns Problem~\ref{prob3444!!}. \Revised{}{The next proposition guarantees safety during the inter-triggering intervals as long as the controller ensures safety at the triggering time using the SOCP in~\eqref{lower:nu1}.}

\begin{Proposition}
\label{thm:validity-inter-triggeringt-times}
\Revised{}{
Consider the GP model of the system dynamics in \eqref{eq:gp_posterior} with safe set $\mathcal{C}$. Assume the system dynamics have Lipschitz constant $L_{\bff_k}$ with probability at least $1-\delta_L$ (Lemma~\ref{lem:LipschitzGP}), the system is safe with margin $\zeta$ at triggering time $t_k$ with probability at least $\tilde{p}_k = p_k / (1-\delta_L)$ (Prop.~\ref{prop:socp-rel-deg-1}), and $\alpha$ is Lipschitz continuous with Lipschitz constant $L_\alpha$. Then, the system is safe according to the constraint in \eqref{eorprobmelm45} with probability at least $p_k$ for time duration $\tau_k \le \frac{1}{L_{\bff_k}}\ln\left(1+\frac{L_{\bff_k}\zeta}{(L_{\bff_k}+L_\alpha)  L_{h_k}\|\dot{\bfx}_k\|}\right)$, where $L_{h_k}$ is defined in~\eqref{eq:grad-h-bounded}.}
\end{Proposition}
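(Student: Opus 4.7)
The plan is to establish the stated safety guarantee by (i) intersecting the two probabilistic events that underpin the result, (ii) translating the state-trajectory bound of Lemma~\ref{self-trige3Lip} into an upper bound on how much $\mbox{CBC}$ can deviate from its value at the triggering time, and (iii) inverting that deviation bound against the safety margin $\zeta$ to solve for the largest admissible $\tau_k$.

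First, I would introduce two events. Let $E_L$ denote the event that $\bff_k(\bfx) \triangleq F(\bfx)\underline{\bfu}_k$ is Lipschitz with constant $L_{\bff_k}$ on a neighborhood containing $\{\bfx(t_k+s) : s \in [0,\tau_k)\}$; by Lemma~\ref{lem:LipschitzGP} we have $\Prob(E_L) \ge 1-\delta_L$. Let $E_S$ denote the event $\mbox{CBC}(\bfx_k,\bfu_k) \ge \zeta$; by Prop.~\ref{prop:socp-rel-deg-1} applied with tightened tolerance $\tilde p_k = p_k/(1-\delta_L)$, the SOCP solution satisfies $\Prob(E_S \mid E_L) \ge \tilde p_k$. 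Chaining then yields $\Prob(E_L \cap E_S) \ge (1-\delta_L)\tilde p_k = p_k$, which is the probability that will be inherited by the eventual safety conclusion.

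Second, conditional on $E_L$, Lemma~\ref{self-trige3Lip} gives $\|\bfx(t_k+s) - \bfx_k\| \le \overline{r}_k(s)$ for $s \in [0,\tau_k)$. The core step is to bound the deviation of $\mbox{CBC}$ along the trajectory. Splitting $\mbox{CBC}(\bfx,\bfu_k) = \nabla h(\bfx)^\top F(\bfx)\underline{\bfu}_k + \alpha(h(\bfx))$ and applying the triangle inequality, I would treat the Lie-derivative part using the gradient bound $\|\nabla h(\bfx(t_k+s))\| \le L_{h_k}$ from~\eqref{eq:grad-h-bounded} together with Lipschitzness of $\bff_k$, producing a contribution of $L_{h_k} L_{\bff_k}\|\bfx(t_k+s)-\bfx_k\|$; and treat the $\alpha \circ h$ part using Lipschitzness of $\alpha$ composed with the mean-value bound $|h(\bfx(t_k+s)) - h(\bfx_k)| \le L_{h_k}\|\bfx(t_k+s)-\bfx_k\|$, producing a contribution of $L_\alpha L_{h_k}\|\bfx(t_k+s)-\bfx_k\|$. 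Summing yields $|\mbox{CBC}(\bfx(t_k+s),\bfu_k) - \mbox{CBC}(\bfx_k,\bfu_k)| \le (L_{\bff_k}+L_\alpha)L_{h_k}\,\overline{r}_k(s)$.

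Third, combining this deviation bound with the triggering-time margin $\mbox{CBC}(\bfx_k,\bfu_k) \ge \zeta$ produces $\mbox{CBC}(\bfx(t_k+s),\bfu_k) \ge \zeta - (L_{\bff_k}+L_\alpha)L_{h_k}\,\overline{r}_k(s)$; enforcing nonnegativity on $[t_k,t_k+\tau_k)$ reduces to $(L_{\bff_k}+L_\alpha)L_{h_k}\,\overline{r}_k(\tau_k) \le \zeta$. Substituting $\overline{r}_k(\tau_k) = \frac{\|\dot\bfx_k\|}{L_{\bff_k}}(e^{\tau_k L_{\bff_k}}-1)$ and solving for $\tau_k$ gives exactly the stated bound. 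The main obstacle I anticipate lies in the Lie-derivative step: the natural add-and-subtract expansion of $\nabla h(\bfx(t))^\top F(\bfx(t))\underline{\bfu}_k - \nabla h(\bfx_k)^\top F(\bfx_k)\underline{\bfu}_k$ produces a residual $[\nabla h(\bfx(t))-\nabla h(\bfx_k)]^\top \dot\bfx_k$ whose direct control strictly requires Lipschitz continuity of $\nabla h$, not merely the magnitude bound $L_{h_k}$. I would resolve this either by folding a Hessian bound of $h$ into $L_{h_k}$ under an implicit $\bbC^2$ assumption, or by rearranging the decomposition so the comparison is $\nabla h(\bfx(t))^\top F(\bfx(t))\underline{\bfu}_k$ versus $\nabla h(\bfx(t))^\top F(\bfx_k)\underline{\bfu}_k$, eliminating the gradient-change residual at the cost of propagating $\dot\bfx(t)$ through the Gr\"onwall estimate for $\bff_k$.
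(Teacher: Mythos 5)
Your proposal is correct and takes essentially the same route as the paper's proof: condition jointly on the Lipschitz event and on $\mbox{CBC}_k \ge \zeta$ (probability at least $\tilde{p}_k(1-\delta_L)=p_k$), bound $|\mbox{CBC}(\bfx(t_k+s),\bfu_k)-\mbox{CBC}_k|$ by $(L_{\bff_k}+L_\alpha)L_{h_k}\,\overline{r}_k(s)$ via Lemma~\ref{self-trige3Lip} and \eqref{eq:grad-h-bounded}, and invert this against the margin $\zeta$ to obtain the stated $\tau_k$. The gradient-change residual $[\grad_\bfx h(\bfx(t_k+s))-\grad_\bfx h(\bfx_k)]^\top\dot{\bfx}_k$ that you flag is handled no more explicitly in the paper, whose inequality likewise uses only $\sup\|\grad_\bfx h\|L_{\bff_k}+L_\alpha L_{h_k}$, so your caveat identifies a subtlety shared with (not missing from) the paper's own argument rather than a gap in your approach relative to it.
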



\begin{proof}
Since the program~\eqref{prog:CBF-CLF:unknown} has a solution at the triggering time $t_k$, we know $\mathbb{P}(\mbox{CBC}_k \ge \zeta | \bfx_k,\bfu_k) \ge \tilde{p}_k$.
Thus, conditioned
on $\mbox{CBC}_k \ge \zeta$ and the Lipschitz continuity of the system dynamics, if we prove $\mbox{CBC}(s+t_k) \ge 0$, for all $s \in [0, \tau_k)$, the result follows. \Revised{}{Using the Cauchy-Schwarz inequality, Lipschitz continuity of the system dynamics and Lipschitz continuity of $\alpha$ and $h$ (from \eqref{eq:grad-h-bounded}), for all $s \in [0, \tau_k)$, we have with probability at least $\tilde{p}_k (1-\delta_L)$}:
\begin{align}\label{nigrer45654e!}
    &|\mbox{CBC}(\bfx(s+t_k),\bfu_k)-\mbox{CBC}_k|\\
   &\quad\le \left(\sup_{s \in [0, \tau_k)}\|\nabla_\bfx h(\bfx(s+t_k))\|L_{\bff_k} + L_\alpha L_{h_k} \right)\overline{r}_k(s).\notag
\end{align}
Thus, using \eqref{defnirrrt5654} and \eqref{eq:grad-h-bounded}, we notice that the right-hand side of \eqref{nigrer45654e!} is upper-bounded by  $\frac{L_{h_k}L_{\bff_k}+L_{\alpha}L_{h_k}}{L_{\bff_k}}\|\dot{\bfx}_k\|\left(e^{L_{\bff_k} s}-1\right),$
which, in turn, is less than or equal to $\zeta$ for $s=\tau_k$.
\end{proof}


\Revised{}{Prop.~\ref{prop:socp-rel-deg-1} ensures instantaneous safety with probability $\tilde{p}_k \geq p_k$, while Prop.~\ref{thm:validity-inter-triggeringt-times} extends the safety guarantee with probability $p_k$ to the interval $[t_k, t_k+\tau_k)$. System safety as formalized in Problem~\ref{prob3444!!} is guaranteed because $\bbP(\mbox{CBC}(\bfx) \ge 0) \ge p_k$ implies $\bbP(h(\bfx) \ge 0) \ge p_k$, for all $t \in [t_k, t_k + \tau_k)$, due to Prop.~\ref{thm:Ames1}. Prop.~\ref{thm:validity-inter-triggeringt-times} characterizes the longest time $\tau_k$ until it is necessary to recompute the control input to guarantee safety.}


\Revised{}{Fig.~\ref{fig:triggering-time} shows an evaluation of the Lipschitz constant $L_{\bff_k}$ (Lemma~\ref{lem:LipschitzGP}) and triggering time $\tau_k$ (Prop.~\ref{thm:validity-inter-triggeringt-times}) for a system with Ackermann dynamics. We compute $L_{\bff_k}$ and $\tau_k$ both analytically and numerically. The shapes of the curves for the numerical and analytical Lipschitz constant are the same, although the analytical bound in Lemma~\ref{lem:LipschitzGP} is a few orders of magnitude more conservative. More details about the simulation are presented in Sec.~\ref{sec:sims-Ackermann}.}

\begin{figure}
  \includegraphics[width=0.9\linewidth,trim=0mm 0mm 0mm 5mm, clip]{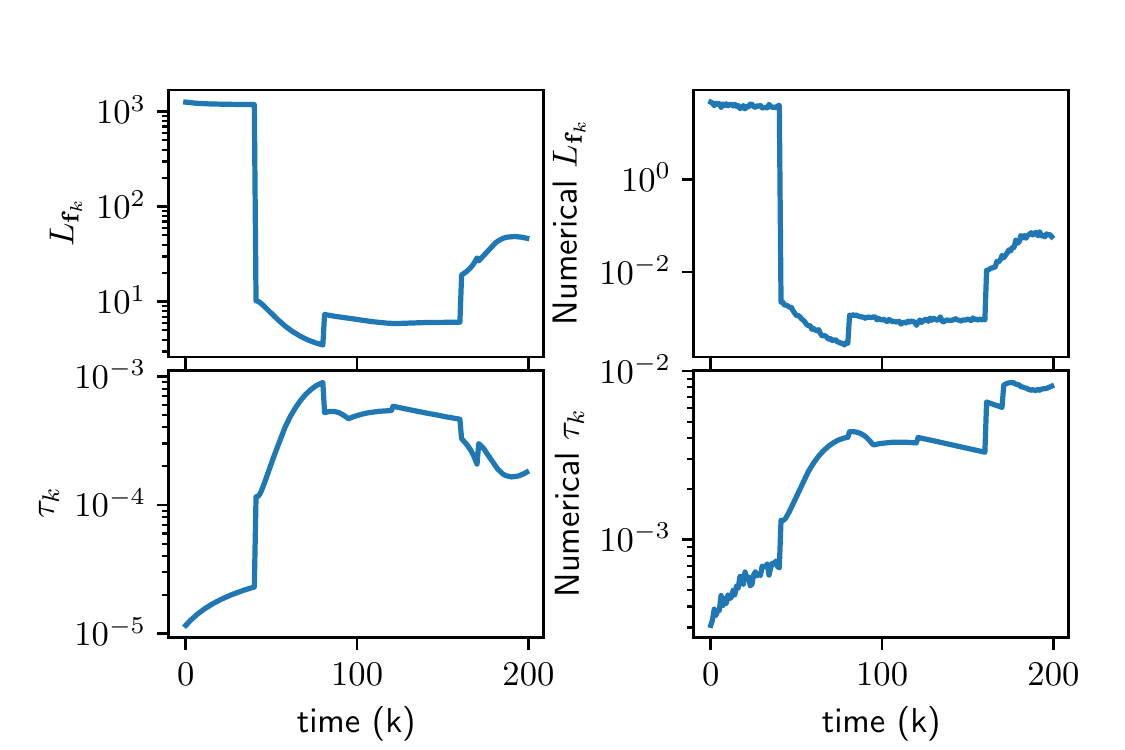}
  \caption{\Revised{}{Lipschitz constant $L_{\bff_k}$ and triggering time $\tau_k$ for an Ackermann vehicle simulation. The analytical values (left) obtained from Lemma~\ref{lem:LipschitzGP} and Prop.~\ref{thm:validity-inter-triggeringt-times} are compared with numerical estimates (right). The shapes are similar for both terms over a 200 time-step simulation but the magnitudes differs. See Sec.~\ref{sec:self-triggering-design-sim} for details.}}
  \label{fig:triggering-time}
\end{figure}

\section{Extension to Higher Relative-degree Systems}
\label{sec:exp-cbf-rel-deg-2}
\newcommand{\tdLie}{\hat{\Lie}}
\newcommand{\bLie}{\bar{\Lie}}
\newcommand{\bfxdot}{\dot{\bfx}}
\newcommand{\ff}{\mathfrak{\zeta}}


Next, we extend the probabilistic safety constraint formulation for systems with arbitrary relative degree\footnote{The motivation for assuming known relative degree but unknown dynamics comes from robotics applications. Commonly, the class of the system is known
but the parameters (e.g., mass, moment of inertia) and high-order interactions (e.g., jerk, snap, drag) of the dynamics are unknown. Estimating the relative degree is left for future work (see \cite{akella2020formal,robey2020learning,srinivasan2020synthesis}).}, using an exponential control barrier function (ECBF)~\cite{nguyen2016exponential,cbf}. Sec.~\ref{highrel304kno} reviews ECBF results for systems with \emph{known dynamics}. Sec.~\ref{MMnei23343} investigates probabilistic ECBF constraints, while Sec.~\ref{fei999411} provides a self-triggering formulation.


\subsection{Known Dynamics}
\label{highrel304kno}
Let $r \ge 1$ be the relative degree of \eqref{eq:system_dyanmics} with respect to $h(\bfx)$, i.e., $\Lie_g\Lie_f^{(r-1)}h(\bfx) \ne 0$ and $\Lie_g\Lie_f^{(k-1)}h(\bfx) = 0$, $\forall k \in \Revised{\{1, \dots, r-2\}}{\{1, \dots, r-1\}}$. This condition can be expressed by defining
\begin{equation*}
    \bfeta(\bfx) \triangleq \begin{bmatrix} h(\bfx) \\ \Lie_fh(\bfx)
      \\ \vdots\\ \Lie_f^{(r-1)}h(\bfx) \end{bmatrix},
    ~
    \mathcal{F} \triangleq \begin{bmatrix} 0  & 1 & \dots & 0 \\
      \vdots & \vdots &  & \vdots \\
      0 & 0 & \dots  & 1 \\
      0 & 0 & \dots & 0
    \end{bmatrix},
    ~
    \mathcal{G} \triangleq \begin{bmatrix} 0 \\ \vdots \\ 0 \\1 \end{bmatrix}
\end{equation*}
and $h(\bfx)$ is the output of the linear time-invariant system:
\begin{align}
\label{eq:traverse-system}
& \dot{\bfeta}(\bfx) = \mathcal{F}\bfeta(\bfx) + \mathcal{G}\bfu,
& h(\bfx) = \bfc^\top\bfeta(\bfx),
\end{align}
where $\bfc \triangleq [1, 0, \dots, 0]^\top \in \R^r$.


\begin{defn}[{\cite{nguyen2016exponential,cbf}}]
\label{def:cbf-high}
A function $h \in \bbC^r(\calX,\real)$ is an exponential control barrier function (ECBF) for the system in~\eqref{eq:system_dyanmics} if there exists a vector $\bfk_\alpha \in \R^r$ such that the $r$-th order \emph{control barrier condition} $\mbox{CBC}^{(r)}(\bfx, \bfu) \triangleq \Lie_f^{(r)}h(\bfx) + \Lie_g\Lie_f^{(r-1)}h(\bfx)\bfu+\bfk_\alpha^\top \bfeta(\bfx)$ satisfies $\sup_{\bfu \in \calU} \mbox{CBC}^{(r)}(\bfx, \bfu) \geq 0$ for all $ \bfx \in \mathcal{X}$ and $h(\bfx(t)) \ge \bfc^\top\bfeta(\bfx_0)e^{(\mathcal{F}-\mathcal{G}\bfk_\alpha)t}  \ge 0$, whenever $h(\bfx(0)) \ge 0$.
\end{defn}

If $\bfk_\alpha$ satisfies the properties in \cite[Thm.~2]{nguyen2016exponential}, then any controller $\bfu = \pi(\bfx)$ that ensures $\mbox{CBC}^{(r)}(\bfx, \bfu) \ge 0$ renders the dynamics \eqref{eq:system_dyanmics} safe with respect to $\mathcal{C} =\crl{\bfx \in \calX \mid h(\bfx) \geq 0}$. For relative-degree-one systems, $\bfk_\alpha^\top \bfeta(\bfx)$ reduces to $\alpha h(\bfx)$ with $\alpha>0$. Thus, $\mbox{CBC}^{(1)}$, the safety condition based on ECBF for $r=1$, is equivalent to the CBC in Def.~\ref{def:cbf} when the extended class $K_\infty$ function $\alpha$ is linear.

\subsection{Probabilistic ECBF Constraints}
\label{MMnei23343}

As in \eqref{prog:CBF-CLF:unknown}, we are interested in solving
\begin{equation}
\label{progg3456755!!}
\begin{aligned}
\pi(\bfx_k) &\in \argmin_{\bfu_k \in \calU}  \|\bfR(\bfx_k)\bfu_k\|_2 
\\
\text{s.t.}~~ &\mathbb{P}(\mbox{CBC}_k^{(r)} \ge \zeta | \bfx_k,\bfu_k) \ge \tilde{p}_k,
\end{aligned}
\end{equation}
where $\mbox{CBC}_k^{(r)} \triangleq \mbox{CBC}^{(r)}(\bfx_k,\bfu_k)$ and we have dropped the CLC term for clarity of presentation. While we explicitly characterised the distribution of $\mbox{CBC}_k$ for relative degree one in Lemma~\ref{lem:NRcalculation}, the distribution of $\mbox{CBC}_k^{(r)}$ cannot be determined explicitly for arbitrary $r$. Instead, we use a concentration bound to rewrite the chance constraint in terms of the moments of $\mbox{CBC}_k^{(r)}$. \Revised{}{Before we can derive a SOCP controller for higher relative-degree systems, we need to show that the mean and variance of $\mbox{CBC}_k^{(r)}$ are affine and quadratic in $\bfu$, respectively.}

\begin{lemma}
\label{prob:VarUquadrac24562}
Let $r \ge 1$ be the relative degree of \eqref{eq:system_dyanmics} with respect to $h(\bfx) \in \bbC^r(\calX,\bbR)$ and $F(\bfx)$ be a random function with finite mean and variance. Then, the expectation and variance of $\mbox{CBC}^{(r)}(\bfx,\bfu)$ in Def.~\ref{def:cbf-high} are linear and quadratic in $\bfu$, respectively, and satisfy:
\begin{equation}
\begin{aligned}
\E[\mbox{CBC}^{(r)}(\bfx, \bfu)] &= \bfe^{(r)}(\bfx)^\top\ctrlaff,\\
\Var[\mbox{CBC}^{(r)}(\bfx, \bfu)] &= \ctrlaff^\top \bfV^{(r)}(\bfx) \bfV^{(r)}(\bfx)^\top \ctrlaff,
\end{aligned}
\end{equation}
where
\begin{align}
  \bfe^{(r)}(\bfx) 
  &\triangleq
    \E\left[\dynAff(\bfx)^\top \grad_\bfx \Lie_f^{(r-1)}h(\bfx) 
    + 
    \begin{bmatrix}\bfk_\alpha^\top \bfeta(\bfx)
    \\
    \mathbf{0}_{m}\end{bmatrix}
    \right]
    \\
    \bfV^{(r)}(\bfx) 
    &\triangleq
    \Var\left[
    \dynAff(\bfx)^\top \grad_\bfx \Lie_f^{(r-1)}h(\bfx) 
    +
    \begin{bmatrix}\bfk_\alpha^\top \bfeta(\bfx)
    \\
    \mathbf{0}_{m}\end{bmatrix}
    \right]^{\frac{1}{2}}
\end{align}
\end{lemma}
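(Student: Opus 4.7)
The plan is to exploit the control-affine structure of the dynamics so that $\mbox{CBC}^{(r)}(\bfx,\bfu)$ can be written as the inner product of a random vector (depending only on $\bfx$) with the extended control $\ctrlaff = [1;\bfu]$. Once that representation is in hand, linearity of expectation and the bilinearity of covariance immediately yield an affine mean and a quadratic variance in $\bfu$.

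First, I would use the definition of the Lie derivative together with the relative-degree-$r$ assumption to rewrite
\[
\Lie_f^{(r)}h(\bfx) + \Lie_g\Lie_f^{(r-1)}h(\bfx)\,\bfu
= \grad_\bfx \Lie_f^{(r-1)}h(\bfx)^\top\bigl(f(\bfx) + g(\bfx)\bfu\bigr)
= \grad_\bfx \Lie_f^{(r-1)}h(\bfx)^\top F(\bfx)\,\ctrlaff,
\]
since $F(\bfx) = [f(\bfx)\;g(\bfx)]$ and $\ctrlaff = [1;\bfu]$. Next, I would absorb the residual $\bfk_\alpha^\top\bfeta(\bfx)$ into the same inner product by writing it as $\bigl[\bfk_\alpha^\top\bfeta(\bfx),\,\mathbf{0}_m^\top\bigr]\ctrlaff$. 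Collecting terms gives
\[
\mbox{CBC}^{(r)}(\bfx,\bfu) = \bfq(\bfx)^\top \ctrlaff,
\qquad
\bfq(\bfx) \triangleq F(\bfx)^\top \grad_\bfx \Lie_f^{(r-1)}h(\bfx) + \begin{bmatrix}\bfk_\alpha^\top\bfeta(\bfx)\\ \mathbf{0}_m\end{bmatrix}.
\]

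With this representation, the mean follows by pulling expectation inside the (deterministic) linear map $\ctrlaff\mapsto\bfq(\bfx)^\top\ctrlaff$, yielding $\E[\mbox{CBC}^{(r)}(\bfx,\bfu)] = \E[\bfq(\bfx)]^\top\ctrlaff = \bfe^{(r)}(\bfx)^\top\ctrlaff$, which is affine (and hence linear in $\bfu$ up to the constant term coming from the first entry of $\ctrlaff$). For the variance, I would apply the identity $\Var[\bfa^\top Z] = \bfa^\top\Cov[Z]\bfa$ with $\bfa = \ctrlaff$ and $Z = \bfq(\bfx)$ to get $\Var[\mbox{CBC}^{(r)}(\bfx,\bfu)] = \ctrlaff^\top \Cov[\bfq(\bfx)]\,\ctrlaff$. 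Defining $\bfV^{(r)}(\bfx)$ to be any square-root factor of $\Cov[\bfq(\bfx)]$ (existence guaranteed by positive semidefiniteness and the finite-variance hypothesis) gives the stated expression $\ctrlaff^\top \bfV^{(r)}(\bfx)\bfV^{(r)}(\bfx)^\top \ctrlaff$, which is manifestly quadratic in $\bfu$.

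The argument is essentially bookkeeping, so the only real obstacle is verifying that the Lie-derivative bookkeeping is correct for general $r$: namely that $h\in\bbC^r(\calX,\bbR)$ is smooth enough for $\grad_\bfx\Lie_f^{(r-1)}h$ to exist, and that the relative-degree assumption $\Lie_g\Lie_f^{(k-1)}h=0$ for $k<r$ is not needed beyond guaranteeing that the displayed factorization captures all $\bfu$-dependence. Once these are checked, the moment identities fall out immediately from standard properties of expectation and covariance.
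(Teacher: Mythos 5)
Your proposal is correct and follows essentially the same route as the paper's own proof: rewrite $\mbox{CBC}^{(r)}(\bfx,\bfu)$ as $\bigl(F(\bfx)^\top\grad_\bfx\Lie_f^{(r-1)}h(\bfx) + [\bfk_\alpha^\top\bfeta(\bfx);\,\mathbf{0}_m]\bigr)^\top\ctrlaff$ and then apply linearity of expectation and the quadratic-form identity for the variance, with $\bfV^{(r)}(\bfx)$ a square root of the resulting covariance. No gaps worth noting.
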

\begin{proof}
See Appendix~\ref{proof:cbc-r-mean-affine-var-quadratic}.
\end{proof}


\Revised{}{The next proposition generalizes Prop.~\ref{prop:socp-rel-deg-1} to arbitrary relative-degree systems by showing that control synthesis with probabilistic safety constraints can still be cast as a SOCP.}

\begin{Proposition}
\label{prop:456723o433302}
Consider the control-affine system in \eqref{eq:system_dyanmics} \Revised{}{with stochastic dynamics} and relative degree $r \ge 1$ with respect to an ECBF $h(\bfx) \in \bbC^r(\calX,\bbR)$. If the system inputs are determined in continuous time $t_k$ from the following SOCP,
\begin{align}
\label{progg!!5544i}
&\pi(\bfx_k) \in \argmin_{y \in \bbR, \bfu_k \in \calU } \quad y
\notag\\
&\text{s.t.}\quad y - \left\|\bfR(\bfx_k) \bfu_k  \right\|_2 \ge 0
\\\notag
& \qquad
\bfe^{(r)}_k(\bfx_k)^\top\ctrlaff_k -\zeta -
c^{(r)}(\tilde{p}_k)
\left\| 
\bfV^{(r)}_k(\bfx_k) \ctrlaff_k
\right\|_2
\ge 0,
\end{align}
where $c^{(r)}(\tilde{p}_k) \triangleq \sqrt{\frac{\tilde{p}_k}{1-\tilde{p}_k}}$ and $\tilde{p}_k \in (0,1)$ is fixed for all $t_k$, then the system trajectory remains in the safe set $\calC = \crl{\bfx \in \calX \mid h(\bfx) \geq 0}$ with probability $\tilde{p}_k$.
\end{Proposition}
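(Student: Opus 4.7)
The plan is to follow the same template as Prop.~\ref{prop:socp-rel-deg-1}---first turn the chance constraint into a deterministic mean--variance inequality, then invoke Lemma~\ref{prob:VarUquadrac24562} to expose the affine/quadratic structure in $\ctrlaff_k$ and hence the second-order-cone shape. The key departure, as the authors already flag before the statement, is that for $r \ge 2$ the random variable $\mbox{CBC}_k^{(r)}$ is no longer Gaussian: the iterated Lie derivatives $\Lie_f^{(r-1)} h(\bfx)$ entangle the MVGP sample with its spatial gradients nonlinearly, so no closed-form CDF is available. I would therefore replace the $\Phi^{-1}$-based conversion used in the relative-degree-one case with a distribution-free one-sided Chebyshev (Cantelli) bound: for any real random variable $X$ with finite mean and variance $\sigma^2$, and any $\lambda > 0$,
\[
\mathbb{P}(X \ge \E[X] - \lambda) \;\ge\; \frac{\lambda^2}{\sigma^2 + \lambda^2}.
\]
Setting $X = \mbox{CBC}_k^{(r)}$ and $\lambda = \E[\mbox{CBC}_k^{(r)}] - \zeta$, the requirement that the right-hand side be at least $\tilde{p}_k$ rearranges to $\lambda \ge \sigma \sqrt{\tilde{p}_k/(1-\tilde{p}_k)}$, i.e., to
\[
\E[\mbox{CBC}_k^{(r)}] - \zeta \;\ge\; c^{(r)}(\tilde{p}_k)\,\sqrt{\Var[\mbox{CBC}_k^{(r)}]},
\]
which identifies exactly the constant appearing in \eqref{progg!!5544i}.

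Next I would substitute the expressions from Lemma~\ref{prob:VarUquadrac24562}, namely $\E[\mbox{CBC}_k^{(r)}] = \bfe^{(r)}_k(\bfx_k)^\top \ctrlaff_k$ and $\sqrt{\Var[\mbox{CBC}_k^{(r)}]} = \|\bfV^{(r)}_k(\bfx_k)\ctrlaff_k\|_2$ (up to transposition of the factor). The sufficient condition then becomes the second SOC inequality in \eqref{progg!!5544i}, and the norm objective $\|\bfR(\bfx_k)\bfu_k\|_2$ is turned into an SOC constraint via the standard epigraph trick with auxiliary scalar $y \ge \|\bfR(\bfx_k)\bfu_k\|_2$. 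This certifies both that \eqref{progg!!5544i} is a genuine SOCP and that any feasible input satisfies $\mathbb{P}(\mbox{CBC}_k^{(r)} \ge \zeta \mid \bfx_k,\bfu_k) \ge \tilde{p}_k$. Since $\zeta \ge 0$, this in particular implies $\mathbb{P}(\mbox{CBC}_k^{(r)} \ge 0) \ge \tilde{p}_k$.

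To close the argument I would invoke Def.~\ref{def:cbf-high}, together with the spectral condition on $\bfk_\alpha$ from \cite[Thm.~2]{nguyen2016exponential}: on the probability-$\tilde{p}_k$ event $\{\mbox{CBC}_k^{(r)} \ge 0\}$, the auxiliary linear output system in \eqref{eq:traverse-system} is driven by an admissible input, so $h(\bfx(t)) \ge \bfc^\top \bfeta(\bfx_0)\, e^{(\mathcal{F}-\mathcal{G}\bfk_\alpha)t} \ge 0$ whenever $h(\bfx(0)) \ge 0$. Because the SOCP is assumed to be solved in continuous time, this event recurs at every instant and the trajectory stays in $\calC$ with probability at least $\tilde{p}_k$.

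The main obstacle, and the conceptual step beyond Prop.~\ref{prop:socp-rel-deg-1}, is the loss of Gaussianity: we trade the tight constant $\sqrt{2}\,\erf^{-1}(2\tilde{p}_k-1)$ for the strictly larger, but distribution-free, Cantelli constant $\sqrt{\tilde{p}_k/(1-\tilde{p}_k)}$. A subtler point worth verifying carefully is that Lemma~\ref{prob:VarUquadrac24562} really does yield an affine mean and a quadratic variance in $\ctrlaff_k$ despite the presence of $r$-th order Lie derivatives; this holds because only the top-order term $\Lie_g \Lie_f^{(r-1)} h(\bfx)$ multiplies $\bfu$, while the remaining summands enter $\bfe^{(r)}_k$ as $\bfu$-independent (though still $\bfx$-dependent and random) quantities, so that $\mbox{CBC}_k^{(r)}$ inherits an affine parametrization in $\ctrlaff_k$ at fixed state---exactly what the SOCP reformulation requires.
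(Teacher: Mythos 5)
Your proposal matches the paper's own argument essentially step for step: Cantelli's one-sided inequality with $\lambda = \E[\mbox{CBC}_k^{(r)}]-\zeta$ yields the margin $c^{(r)}(\tilde{p}_k)=\sqrt{\tilde{p}_k/(1-\tilde{p}_k)}$, the mean/variance expressions of Lemma~\ref{prob:VarUquadrac24562} give the affine-in-$\ctrlaff_k$ SOC constraint, and the epigraph variable $y$ turns the norm objective into a standard SOCP. Your closing appeal to Def.~\ref{def:cbf-high} and the ECBF spectral condition to pass from $\mbox{CBC}_k^{(r)}\ge 0$ to forward invariance under continuous-time re-solving is exactly what the paper leaves implicit, so the proof is correct and takes the same route.
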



\begin{proof}
See Appendix~\ref{proof:cbc-r-quad-prog}.
\end{proof}

In the proposition above, we bound $\mathbb{P}(\mbox{CBC}_k^{(r)} \ge \zeta | \bfx_k,\bfu_k)$ in \eqref{progg3456755!!} using Cantelli's inequality, because the exact distribution of $\CBCr$ is unknown. The safety constraint in \eqref{progg!!5544i} can be interpreted as: ``The system must satisfy $\mbox{CBC}_k^{(r)}$ in expectation by a margin $c(\tilde{p}_k)$ times the standard deviation of $\mbox{CBC}_k^{(r)}$''. Solving the SOCP requires knowledge of the mean and variance of $\CBCr$. Sec.~\ref{sec:rdt} shows how to determine the mean and variance of $\mbox{CBC}^{(2)}_k$, which is of practical relevance for many physical systems. For $r>2$, Monte Carlo sampling can be used to estimate these quantities. 



\subsection{Self-triggering Design}
\label{fei999411}

In this section, we extend the time-triggered formulation for probabilistic safety of high relative-degree systems to a self-triggering setup. Our extension to self-triggering control in the relative-degree-one case in Proposition~\ref{thm:validity-inter-triggeringt-times}, relied on Lipshitz continuity of the CBC components. To simplify the analysis for arbitrary relative degree $r$, we temporarily introduce a modified CBF $h_b(\bfx) \triangleq h(\bfx)-\zeta_b$ with $\zeta_b > 0$. We solve the SOCP in \eqref{progg!!5544i} with $\zeta=0$ but replace $h(\bfx)$ with $h_b(\bfx)$. Enforcing the safety constraint in \eqref{progg!!5544i} for $h_b(\bfx_k)$, ensures that with probability $\tilde{p}_k$, $h(\bfx_k) \ge \zeta_b$ at the sampling times. We find an upper bound on the sampling time $\tau_k$ that ensures $h(\bfx(t))$ remains non-negative during the inter-triggering intervals $[t_k, t_k+\tau_k)$. \Revised{}{As shown in the next proposition, this is sufficient to guarantee safety during the inter-triggering interval as long as the system using the SOCP controller~\eqref{progg!!5544i} is safe at the triggering time $t_k$.}


\begin{Proposition}
\label{ir5o45t345}
\Revised{}{Consider the system in \eqref{eq:system_dyanmics} with stochastic dynamics and safe set $\mathcal{C} = \crl{\bfx \in \calX \mid h(\bfx) \geq 0}$. Suppose the system has relative degree $r \geq 1$ with respect to $h(\bfx)$ and the SOCP in \eqref{progg!!5544i} has a solution at triggering time $t_k$ with $h_b(\bfx) \triangleq h(\bfx)-\zeta_b$ as the CBF and $\zeta=0$. Suppose that the system dynamics are Lipschitz continuous with probability at least $1-\delta_L$ (Lemma~\ref{lem:LipschitzGP}) and $h$ is Lipschitz continuous with Lipschitz constant $L_{h_k}$.
%
%
Then, 
$h(\bfx(t)) \ge 0$, for $t \in [t_k, t_k+\tau_k)$, holds 
with probability $p_k=\tilde{p}_k(1-\delta_L)$, where $\tau_k \le \frac{1}{L_{\bff_k}}\ln\left(1+\frac{L_{\bff_k}\zeta_b}{L_{h_k}\|\dot{\bfx}_k\|}\right)$.}
\end{Proposition}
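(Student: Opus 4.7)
The plan is to mirror the structure of the proof of Prop.~\ref{thm:validity-inter-triggeringt-times}, but replace the margin mechanism that used $\zeta$ with the shifted CBF $h_b(\bfx) = h(\bfx) - \zeta_b$ and propagate the probability bookkeeping carefully. First, I would note that by assumption the SOCP in \eqref{progg!!5544i} is feasible at $t_k$ with $h_b$ playing the role of the CBF and with $\zeta = 0$. Appealing to Prop.~\ref{prop:456723o433302} (applied with barrier function $h_b$), this yields $h_b(\bfx_k) \ge 0$, equivalently $h(\bfx_k) \ge \zeta_b$, with probability at least $\tilde{p}_k$ at the triggering instant. This converts the ECBF-type guarantee at $t_k$ into an instantaneous lower bound on $h$ of size $\zeta_b$ that we can then erode during the inter-triggering interval.

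Second, I would control how fast $h$ can decrease away from $\bfx_k$. Using Lipschitz continuity of $h$ with constant $L_{h_k}$, for every $s \in [0,\tau_k)$,
\begin{equation*}
h(\bfx(t_k+s)) \ge h(\bfx_k) - L_{h_k}\,\|\bfx(t_k+s) - \bfx_k\|.
\end{equation*}
Lemma~\ref{self-trige3Lip}, which itself is conditional on the Lipschitz property of the sampled dynamics (an event of probability at least $1 - \delta_L$ by Lemma~\ref{lem:LipschitzGP}), then gives
\begin{equation*}
\|\bfx(t_k+s) - \bfx_k\| \le \overline{r}_k(s) = \tfrac{1}{L_{\bff_k}} \|\dot{\bfx}_k\| \bigl(e^{sL_{\bff_k}} - 1\bigr).
\end{equation*}
Combining these two estimates and demanding the right-hand side stay nonnegative, i.e.\ $L_{h_k}\overline{r}_k(s) \le \zeta_b$, yields exactly the stated bound on $\tau_k$ after solving the resulting logarithmic inequality for $s$. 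This is the same algebraic manipulation as in Prop.~\ref{thm:validity-inter-triggeringt-times} but with $\zeta_b$ replacing $\zeta$ and without the $L_\alpha$ term, since here the margin comes from the shifted barrier itself rather than from a Lipschitz perturbation of the $\mbox{CBC}$.

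Third, for the probability combination, I would treat the two required events as (i) the SOCP-implied safety margin $h(\bfx_k) \ge \zeta_b$ at the triggering time, which has probability at least $\tilde{p}_k$ under the Lipschitz event, and (ii) the Lipschitz event of Lemma~\ref{lem:LipschitzGP}, of probability at least $1-\delta_L$. Since $\tilde{p}_k$ is the conditional probability used inside Prop.~\ref{prop:456723o433302}, intersecting with the Lipschitz event yields joint probability at least $\tilde{p}_k(1-\delta_L) = p_k$, matching the claim. I would close by remarking that on this joint event all inequalities above are simultaneously valid, so $h(\bfx(t)) \ge 0$ holds for every $t \in [t_k, t_k + \tau_k)$.

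The main obstacle I anticipate is the probability bookkeeping rather than the algebra: the SOCP guarantee from Prop.~\ref{prop:456723o433302} and the trajectory bound from Lemma~\ref{self-trige3Lip} are not a priori independent, since both involve the sampled dynamics $F(\bfx)\underline{\bfu}$. I would handle this by conditioning on the Lipschitz event first (so $L_{\bff_k}$ is well-defined and $\overline{r}_k(s)$ is a valid bound) and only then invoking the SOCP-based safety statement, which gives the product structure $\tilde{p}_k(1-\delta_L)$ cleanly without resorting to a looser union bound.
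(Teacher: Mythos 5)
Your proposal is correct and follows essentially the same route as the paper's proof: the guarantee $h(\bfx_k)\ge\zeta_b$ at $t_k$ from the SOCP with the shifted barrier $h_b$, the bound $|h(\bfx(t_k+s))-h(\bfx_k)|\le L_{h_k}\overline{r}_k(s)\le\zeta_b$ via Lipschitz continuity of $h$ and Lemma~\ref{self-trige3Lip}, and the combination $p_k=\tilde{p}_k(1-\delta_L)$. Your explicit conditioning on the Lipschitz event to justify the product of probabilities is a slightly more careful piece of bookkeeping than the paper spells out, but it is the same argument in substance.
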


\begin{proof}
Using Lemma~\ref{lem:LipschitzGP} and~\ref{self-trige3Lip}  we have,
\begin{align}
    \sup_{s \in [0, \tau_k)} &| h(\bfx(t_k+s))-h(\bfx_k)| \le L_{h_k} \|\bfx(t_k+s)-\bfx_k\|
    \notag\\
    &\le \frac{L_{h_k}}{L_{\bff_k}} \|\dot{\bfx}_k\|(e^{L_{\bff_k}\tau_k}-1) \le \zeta_b,
\end{align}
where the last inequality follows from the upper bound on $\tau_k$. Given $h(\bfx(t)) \ge \zeta_b$ at $t_k$, we deduce $h(\bfx(t)) \ge 0$ for all time in $[t_k,t_k+\tau_k)$, and the result follows.
\end{proof}

Assuming that the Lipschitz continuity of system dynamics occurs with probability at least $1-\delta_L$ (Lemma~\ref{lem:LipschitzGP}), and given the value of $\|\dot{\bfx}_k\|$ at triggering time $t_k$ and the parameter $\zeta_b > 0$ and the Lipschitz constant $L_{h_k}$, Proposition~\ref{ir5o45t345} characterizes the longest time $\tau_k$ until a control input needs to be recomputed to guarantee safety with probability at least $p_k=\tilde{p}_k(1-\delta)$ for a system with arbitrary relative degree.

\begin{remark}
\label{remar288y}
Recent
works~\cite{xiao2019control1,sarkar2020high,yaghoubi2020training}studya more general high-order CBF (HOCBF) formulation 
than ECBF. Applying
our probabilistic safety constraints
with a HOCBF is left for future work.
\oprocend
\end{remark}


\section{Special Case: Relative Degree Two}
\label{sec:rdt}

\newcommand{\mCBCtwo}{\E[\CBCtwo(\bfx, \bfu)]}
\newcommand{\VCBCtwo}{\Var[\CBCtwo(\bfx, \bfu)]} 
\newcommand{\postf}{\bff_k}
\newcommand{\LieOne}{\Lie_{f}}
\newcommand{\LieTwo}{\Lie_{F(\bfx)\ubfu}^{(2)}}
\newcommand{\LieTwoF}{\Lie_{f}^{(2)}}
\newcommand{\LieTwoG}{\Lie_{g} \LieOne}
\newcommand{\bfzero}{\mathbf{0}}
\newcommand{\mutdf}{{\bfm}_{f}}
\newcommand{\ktdf}{\bfB_k[1,1]}
\newcommand{\mutdgu}{{\bfM}_{g}}
\newcommand{\ktdgu}{{\bfB}_{g }}
\newcommand{\mutdbfu}{{\mu}_{\bff}}
\newcommand{\ktdbfu}{{\knl}_{\bff }}
\newcommand{\ktdfgu}{{\knl}_{f,g}}
\newcommand{\ktdfbfu}{\bfb_k^\top}
\newcommand{\bfone}{\mathbf{1}}
\newcommand{\tdmu}{\hat{\mu}}
\newcommand{\tdbfk}{\hat{\bfk}}
\newcommand{\hatf}{f}
\newcommand{\hDynAff}{\dynAff_k}


Systems with relative degree two appear frequently. We develop an efficient approach for computing the mean and variance of the relative-degree-two safety condition:
\begin{equation*}
\CBCtwo(\bfx, \bfu) = [\grad_\bfx \LieOne h(\bfx)]^\top \dynAff(\bfx) \ctrlaff +  [h(\bfx), \Lie_{f}h(\bfx)]^\top\bfk_\alpha,
\end{equation*}%
which is needed to specify the safety constraints in our SOCP formulation in \eqref{progg!!5544i}.
Note that $\CBCtwo(\bfx, \bfu)$ is a stochastic process whose distribution is induced by the GP $\vect(F(\bfx))$. Fig.~\ref{fig:mean-and-variance-of-cbf-r-diagram} shows a computation graph for $\CBCtwo(\bfx, \bfu)$ with random or deterministic functions as nodes and computations among them as edges.
The graph contains only two kinds of edges: (1) dot product of a deterministic function with a vector-variate GP and (2) gradient of a scalar GP. We show how the mean, variance, and covariance propagate through these two computations in Lemma~\ref{lemma:dot-prod-gps} (dot product) and Lemma~\ref{lemma:differentiating-gp} (scalar GP gradient).

\begin{figure}
\resizebox{\linewidth}{!}{\begin{tikzpicture}
\providecommand{\bfx}{\mathbf{x}}
\providecommand{\Lie}{\mathcal{L}}
\providecommand{\grad}{\nabla}
\node [] (gh) {$\grad_\bfx h(\bfx)$};

\node [right=of gh] (L1) {$\underbrace{\grad_\bfx h(\bfx)^\top f(\bfx)}_{\Lie_f h(\bfx)}$};

\path [->,color=red] (gh)  edge [above] node {$[\cdot]$} (L1);

\node [below=of L1] (f1) {$f(\bfx) \sim \GP $};

\path [->,color=red] (f1)  edge [left] node {$[\cdot]$} (L1);

\node [right=of L1] (gL1) {$\grad_\bfx \Lie_f h(\bfx)$};

\path [->,color=blue] (L1)  edge [above] node {$\grad_\bfx$} (gL1);

\node [right=of gL1] (L2) {$\grad_\bfx[\Lie_f h(\bfx)]^\top F(\bfx) \ubfu$};

\node [below=38pt of L2] (f2) {$F(\bfx)\ubfu \sim \GP $};

\path [->,color=red] 
(f2) edge 
    node[above] {$[\cdot]$}
    node[below] {$\ubfu = [1, \mathbf{0}^\top_m]^\top$} 
 (f1);

\path [->,color=red] (f2)  edge [left] node {$[\cdot]$} (L2);

\path [->,color=red] (gL1)  edge [above] node {$[\cdot]$} (L2);

\end{tikzpicture}}
\caption{Computation graph for the second-order Lie derivative of a control barrier function $h(\bfx)$ along the GP-distributed dynamics $\dynAff(\bfx) \ctrlaff$ of a relative degree two system. Only two kinds of operations are required: (1) dot product with a vector-variate GP ({\color{red}$[\cdot]$}) and (2) gradient of a scalar GP ({\color{blue}$[\grad_\bfx]$}).}
\label{fig:mean-and-variance-of-cbf-r-diagram}
\end{figure}
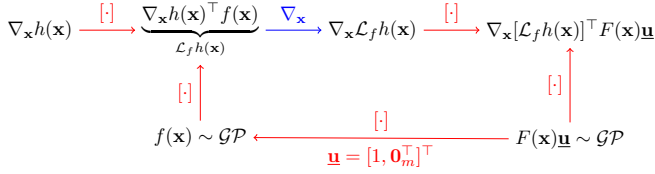

\begin{lemma}
\label{lemma:dot-prod-gps}
Consider three Gaussian random vectors $\bfx$, $\bfy$, $\bfz$ with means $\bar{\bfx}$, $\bar{\bfy}$, $\bar{\bfz}$ and variances $\Var[\bfx]$, $\Var[\bfy]$, $\Var[\bfz]$, respectively. Let their pairwise covariances be $\cov(\bfx, \bfy)$, $\cov(\bfy, \bfz)$ and $\cov(\bfz, \bfx)$. The mean, variance, and covariances of $\bfx^\top \bfy$ are given by:
\begin{align}
&\E[\bfx^\top \bfy] = \bar{\bfx}^\top \bar{\bfy} + \frac{1}{2}\tr(\cov(\bfx, \bfy) + \cov(\bfy, \bfx))\\
&\Var[\bfx^\top \bfy] = \frac{1}{2}\tr(\cov(\bfx, \bfy)+\cov(\bfy, \bfx))^2  + \bar{\bfy}^\top \Var[\bfx] \bar{\bfy} \notag\\
&\quad+ \bar{\bfx}^\top \Var[\bfy] \bar{\bfx } + \bar{\bfy}^\top \cov(\bfx, \bfy) \bar{\bfx} + \bar{\bfx}^\top \cov(\bfy, \bfx) \bar{\bfy}\\
&\begin{bmatrix}
    \cov(\bfx, \bfx^\top \bfy)
    \\
    \cov(\bfy, \bfx^\top \bfy)
    \\
    \cov(\bfz, \bfx^\top \bfy)
\end{bmatrix}
  = \begin{bmatrix}
  \cov(\bfx, \bfy)\bar{\bfx} + \Var[\bfx] \bar{\bfy}
  \\
  \Var[\bfy]\bar{\bfx} + \cov(\bfy, \bfx) \bar{\bfy}
  \\
  \cov(\bfz, \bfy)\bar{\bfx} + \cov(\bfz, \bfx) \bar{\bfy}
  \end{bmatrix}.
  \label{eq:dot-prod-gps-cov}
\end{align}
\end{lemma}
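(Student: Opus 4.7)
The plan is to view $\bfx^\top\bfy$ as a quadratic form $\bfw^\top M \bfw$ in the stacked Gaussian vector $\bfw = [\bfx^\top, \bfy^\top, \bfz^\top]^\top$, where $M$ is the symmetric block matrix carrying $\tfrac{1}{2}\bfI$ in the $(\bfx,\bfy)$ and $(\bfy,\bfx)$ off-diagonal blocks and zeros elsewhere, and $\Sigma_\bfw$ is the $3\times 3$ block covariance assembled from the $\Var$ and $\cov$ data in the lemma. All three claims then reduce to classical moment formulas for Gaussian quadratic forms. The mean identity follows from $\E[\bfw^\top M\bfw] = \tr(M\Sigma_\bfw) + \bar{\bfw}^\top M\bar{\bfw}$: expanding the trace in blocks gives $\tfrac{1}{2}\tr(\cov(\bfx,\bfy)+\cov(\bfy,\bfx))$ and the bilinear piece gives $\bar{\bfx}^\top\bar{\bfy}$.

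For the three covariance identities I regard $\bfx$, $\bfy$, and $\bfz$ as linear maps $L\bfw$ of the stacked vector. A short calculation---derivable either from Stein's lemma or from the third-moment identity $\E[abc] = \bar a\bar b\bar c + \bar a\cov(b,c) + \bar b\cov(a,c) + \bar c\cov(a,b)$, which holds because the third central moment of any jointly Gaussian triple vanishes---yields $\cov(L\bfw,\bfw^\top M\bfw) = 2L\,\Sigma_\bfw M\bar{\bfw}$. Specializing $L$ to project onto $\bfx$, $\bfy$, or $\bfz$ and expanding the block product $\Sigma_\bfw M\bar{\bfw}$ produces the three stated expressions; in particular the $\bfz$-block contains only $\cov(\bfz,\bfy)\bar{\bfx} + \cov(\bfz,\bfx)\bar{\bfy}$, making the identity for $\cov(\bfz,\bfx^\top\bfy)$ transparent.

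The variance follows from $\Var[\bfw^\top M\bfw] = 2\tr((M\Sigma_\bfw)^2) + 4\bar{\bfw}^\top M\Sigma_\bfw M\bar{\bfw}$, valid because $M$ is symmetric. Expanding $4\bar{\bfw}^\top M\Sigma_\bfw M\bar{\bfw}$ collapses onto the $(\bfx,\bfy)$-block and produces exactly the four quadratic-in-mean terms $\bar{\bfy}^\top\Var[\bfx]\bar{\bfy}$, $\bar{\bfx}^\top\Var[\bfy]\bar{\bfx}$, $\bar{\bfy}^\top\cov(\bfx,\bfy)\bar{\bfx}$, and $\bar{\bfx}^\top\cov(\bfy,\bfx)\bar{\bfy}$ listed in the lemma. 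The trace piece $2\tr((M\Sigma_\bfw)^2)$, once the block multiplication is carried out, becomes a combination of $\tr(\cov(\bfy,\bfx)^2)$, $\tr(\cov(\bfx,\bfy)^2)$, and $\tr(\Var[\bfx]\Var[\bfy])$; using $\cov(\bfx,\bfy) = \cov(\bfy,\bfx)^\top$ and cyclic invariance of the trace, these rearrange into the stated compact form. The main obstacle is this last bookkeeping step: one must track the two $(\bfx,\bfy)$ cross-blocks without double counting when converting the expanded trace back into the compact expression, and verify that the remaining covariance-only contribution matches the form claimed in the lemma.
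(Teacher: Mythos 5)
Your strategy coincides with the paper's: write $\bfx^\top\bfy$ as a quadratic form in the stacked Gaussian vector $\bfw=[\bfx^\top,\bfy^\top,\bfz^\top]^\top$ with the symmetric matrix $M$ carrying $\tfrac{1}{2}\bfI$ in the two $(\bfx,\bfy)$ off-diagonal blocks, and then invoke the classical moment formulas for Gaussian quadratic forms (the paper cites these as Searle's results, Lemma~\ref{lemma:quadratic-form}). Your handling of the mean, of all three covariance identities (via $\cov(L\bfw,\bfw^\top M\bfw)=2L\bfSigma_\bfw M\bar{\bfw}$, equivalent to the paper's $\cov(\bfw,\bfw^\top\bfLambda\bfw)=2\bfSigma\bfLambda\bar{\bfw}$), and of the mean-dependent part of the variance $4\bar{\bfw}^\top M\bfSigma_\bfw M\bar{\bfw}$ is correct and reproduces exactly the corresponding terms of the lemma.

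The gap is precisely the ``bookkeeping step'' you defer at the end, and it cannot be closed as you claim. Your expansion of the trace piece is right: $2\tr\bigl((M\bfSigma_\bfw)^2\bigr)=\tr\bigl(\cov(\bfx,\bfy)^2\bigr)+\tr\bigl(\Var[\bfx]\Var[\bfy]\bigr)$, using $\tr(\cov(\bfy,\bfx)^2)=\tr(\cov(\bfx,\bfy)^2)$. But this does not rearrange into the displayed $\tfrac{1}{2}\tr\bigl(\cov(\bfx,\bfy)+\cov(\bfy,\bfx)\bigr)^2$: read as $\tfrac{1}{2}\tr\bigl[(\cov(\bfx,\bfy)+\cov(\bfy,\bfx))^2\bigr]$, that quantity equals $\tr\bigl(\cov(\bfx,\bfy)^2\bigr)+\tr\bigl(\cov(\bfx,\bfy)\cov(\bfy,\bfx)\bigr)$, and no appeal to $\cov(\bfx,\bfy)=\cov(\bfy,\bfx)^\top$ or trace cyclicity converts $\tr(\Var[\bfx]\Var[\bfy])$ into $\tr(\cov(\bfx,\bfy)\cov(\bfy,\bfx))$; reading it instead as $\tfrac{1}{2}[\tr(\cov(\bfx,\bfy)+\cov(\bfy,\bfx))]^2$ fares no better. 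The scalar case already separates the two: for independent standard normal $x,y$ your (correct) expansion gives variance $1$, while the compact form in the display gives $0$. So the verification you promise would fail; what your computation actually establishes is the variance with trace part $\tr(\Var[\bfx]\Var[\bfy])+\tr(\cov(\bfx,\bfy)^2)$, i.e.\ Searle's $2\tr\bigl((\bfLambda\bfSigma)^2\bigr)$ evaluated blockwise. Note that the paper's own proof never carries out this block expansion---it cites the quadratic-form lemma and restates the display---so it does not confront this mismatch either; the honest conclusion of your argument is to stop at your expanded expression rather than assert equivalence with the printed compact form.
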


\begin{proof}
See Appendix~\ref{proof:dot-prod-of-gps}.
\end{proof}

\begin{lemma}[{\cite[Thm.~2.2.2]{adler1981geometry}\cite[p.~43]{taylorStats352Notes}}]
\label{lemma:differentiating-gp}
Let $q(\bfx)$ be a scalar Gaussian Process with differentiable mean function
$\mu(\bfx) : \R^n \mapsto \R$ and twice-differentiable covariance function $\knl(\bfx, \bfx') : \R^n \times \R^n \mapsto \R$. If $\grad_\bfx \mu(\bfx)$ exists and is finite for all $\bfx \in \bbR$ and $\calH_{\bfx, \bfx'}\knl(\bfx, \bfx') = [\frac{\p^2 \knl(\bfx, \bfx')}{\p \bfx_i, \p \bfx'_j} ]_{i=1,j=1}^{n,n}$ exists and is finite for all $(\bfx, \bfx') \in \bbR^{2n}$, then $q(\bfx)$ possesses a mean square derivative $\grad_\bfx q(\bfx)$, which is a vector-variate Gaussian Process $\GP(\grad_\bfx \mu(\bfx), \calH_{\bfx, \bfx'}\knl(\bfx, \bfx'))$. If $s$ is another random process whose finite covariance with $q$ is $\cov_{q,s}(\bfx, \bfx')$, then
\begin{equation}
\label{eq:differentiating-gp-cov}
\cov(\grad_\bfx q(\bfx), s(\bfx')) = \grad_\bfx \cov_{q,s}(\bfx, \bfx').
\end{equation}
\end{lemma}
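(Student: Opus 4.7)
The plan is to define the partial derivative $\nabla_{\bfx_i} q(\bfx)$ as the mean-square ($L^2$) limit of the forward difference quotients $\Delta_h^i q(\bfx) \triangleq (q(\bfx + h \bfe_i) - q(\bfx))/h$ as $h \to 0$, where $\bfe_i$ is the $i$-th standard basis vector. Existence of this $L^2$ limit is equivalent, via a Cauchy criterion, to existence and finiteness of
\[
\lim_{h, h' \to 0} \E[\Delta_h^i q(\bfx)\,\Delta_{h'}^j q(\bfx')],
\]
which, after subtracting the product of the corresponding means, expands into a second-order mixed difference of $\knl(\bfx,\bfx')$. Under the hypothesis that $\calH_{\bfx,\bfx'}\knl$ exists and is finite, this second-order difference converges to $[\calH_{\bfx,\bfx'}\knl(\bfx,\bfx')]_{ij}$, which establishes $L^2$ existence of each component of the gradient.

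Next, because each $\Delta_h^i q(\bfx)$ is a linear functional of the Gaussian process $q$, it is Gaussian, and $L^2$ limits of Gaussian random variables remain Gaussian; consequently $\grad_\bfx q(\bfx)$ is a vector-variate Gaussian random field. Since $L^2$ convergence implies convergence of means, I would interchange $\lim$ with $\E$ to get
\[
\E[\nabla_{\bfx_i} q(\bfx)] = \lim_{h \to 0} \frac{\mu(\bfx + h \bfe_i) - \mu(\bfx)}{h} = \nabla_{\bfx_i} \mu(\bfx),
\]
and likewise the covariance $\cov(\nabla_{\bfx_i} q(\bfx), \nabla_{\bfx'_j} q(\bfx'))$ equals the $(i,j)$ entry of $\calH_{\bfx,\bfx'}\knl(\bfx,\bfx')$ by the same limiting argument.

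For the cross-covariance identity~\eqref{eq:differentiating-gp-cov}, I would apply the same scheme to $\cov(\Delta_h^i q(\bfx), s(\bfx'))$. Since $s(\bfx')$ has finite second moment and $\Delta_h^i q(\bfx) \to \nabla_{\bfx_i} q(\bfx)$ in $L^2$, Cauchy--Schwarz justifies swapping the limit with the covariance, yielding
\[
\cov(\nabla_{\bfx_i} q(\bfx), s(\bfx')) = \lim_{h \to 0} \frac{\cov_{q,s}(\bfx + h \bfe_i, \bfx') - \cov_{q,s}(\bfx, \bfx')}{h} = \nabla_{\bfx_i} \cov_{q,s}(\bfx,\bfx').
\]

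The main obstacle will be rigorously justifying the interchange of the limit with expectation and covariance. Since the statement is classical (Adler~\cite{adler1981geometry}), I would invoke the standard Cauchy-in-$L^2$ theorem for Gaussian processes rather than redo the measure-theoretic bookkeeping, and simply verify that our regularity hypotheses on $\mu$ and $\knl$ (finiteness of $\grad \mu$ and $\calH_{\bfx,\bfx'}\knl$) are exactly the ones this theorem requires.
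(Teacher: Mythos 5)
The paper offers no proof of Lemma~\ref{lemma:differentiating-gp}---it is imported verbatim from Adler \cite{adler1981geometry} and Taylor's notes---and your sketch is exactly the standard mean-square-calculus argument those sources use: Lo\`eve's Cauchy criterion in $L^2$ applied to the difference quotients of the covariance, preservation of joint Gaussianity under $L^2$ limits, and Cauchy--Schwarz to pass the limit through the covariance to obtain \eqref{eq:differentiating-gp-cov}. The only point to watch is that bare pointwise existence and finiteness of $\calH_{\bfx,\bfx'}\knl(\bfx,\bfx')$ is, strictly speaking, a little weaker than convergence of the mixed second-order difference quotient (one needs the generalized second derivative, e.g.\ via continuity of the mixed partials near the diagonal), but that caveat is inherent to the lemma as stated and cited, not a gap introduced by your argument.
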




The mean and variance of $\CBCtwo(\bfx, \bfu)$ can be computed using Lemma~\ref{prob:VarUquadrac24562}:
\begin{align}
\label{eq:mean-var-cbc}
&\mCBCtwo = \bfe^{(2)}(\bfx)^\top\ubfu
\notag\\
&\VCBCtwo = \ubfu^\top \bfV^{(2)}(\bfx)\bfV^{(2)}(\bfx)^\top\ubfu
\notag\\
&\bfe^{(2)}(\bfx) = \E[\dynAff(\bfx)^\top \grad_\bfx \LieOne h(\bfx)  ]
\notag\\
  &\qquad \qquad
  + \begin{bmatrix}\bfk_{\alpha,1} h(\bfx) + \bfk_{\alpha,2} \E[\LieOne h(\bfx)] \\ \mathbf{0}_m \end{bmatrix}
\notag\\
&\bfV^{(2)}(\bfx) = \biggl( \Var[ \dynAff(\bfx)^\top \grad_\bfx \LieOne h(\bfx) ] \\
&\qquad\qquad
+ \begin{bmatrix}
\bfk_{\alpha,2}^2 \Var[ \LieOne h(\bfx) ]  + 2[\bfd(\bfx)]_{1}& [\bfd(\bfx)]_{2:m+1}^\top
\notag\\ [\bfd(\bfx)]_{2:m+1} & \mathbf{0}_{m \times m} \end{bmatrix} \biggr)^\frac{1}{2}
\notag\\
&\bfd(\bfx) \triangleq \bfk_{\alpha,2} \cov( F(\bfx)^\top \grad_\bfx \LieOne h(\bfx) , \LieOne h(\bfx)) \notag
\end{align}
where the means, variances, and covariances of $\Lie_{f}h(\bfx)$ and $F(\bfx)^\top\grad_\bfx \Lie_{f}h(\bfx)$ in the above equations can be computed via Algorithm~\ref{alg:Algo} using Lemma~\ref{lemma:dot-prod-gps} and Lemma~\ref{lemma:differentiating-gp}.

%

\begin{algorithm}[t]
  \DontPrintSemicolon
  \LinesNumbered
  \SetAlgoLined
  \KwData{ECBF $h(\bfx)$, system dynamics distribution $\vect(F(\bfx)) \sim \GP(\vect(\bfM_k(\bfx)), \bfB_k(\bfx, \bfx') \otimes \bfA)$}
  \KwResult{$\mCBCtwo$ and $\VCBCtwo$}
  Compute $\E[\LieOne h(\bfx)] = \grad_\bfx h(\bfx)^\top \bfM_k(\bfx)[1, \mathbf{0}_m^\top]^\top$ and $\Var[\LieOne h(\bfx)] = \brl{\bfB_k(\bfx, \bfx')}_{1,1} (\grad_\bfx h(\bfx)^\top \bfA \grad_\bfx h(\bfx))$ by substituting $\ubfu = [1, \mathbf{0}_m^\top]^\top$ in Lemma~\ref{lem:NRcalculation}.
  \;
  Compute the mean and variance of $\grad_\bfx \LieOne h(\bfx)$ using Lemma~\ref{lemma:differentiating-gp},
  \;
  Compute $\cov\bigl(\grad_\bfx \LieOne h(\bfx),\LieOne h(\bfx)\bigr)$ using \eqref{eq:differentiating-gp-cov} 
  \;
  Compute the mean and variance $\grad_\bfx[ \LieOne h(\bfx)]^\top F(\bfx)\ubfu$ using
  Lemma~\ref{lemma:dot-prod-gps},
  \;
  Compute~\mbox{$\bfd(\bfx)^\top \ubfu = \cov\bigl(\grad_\bfx[ \LieOne h(\bfx)]^\top F(\bfx)\ubfu, \LieOne h(\bfx)\bigr)$} using \eqref{eq:dot-prod-gps-cov},
  \;
  Plug the above values into \eqref{eq:mean-var-cbc} to obtain $\mCBCtwo$ and $\VCBCtwo$.
  \caption{Mean and variance of $\CBCtwo(\bfx,\bfu)$.}
\label{alg:Algo}
\end{algorithm}

\section{Simulations}

We evaluate our proposed MVGP learning model and
the SOCP-based safe control synthesis on two simulated systems: (A) Pendulum and (B) Ackermann Vehicle. To allow reproducing the results, our implementation is available on Github\footnote{\url{https://github.com/wecacuee/Bayesian_CBF}}.


\subsection{Pendulum}
\label{sec:eg-pendulum}

Consider a pendulum, shown in Fig~\ref{fig:CPS}, with state $\bfx = [\theta, \omega]^\top$, containing its angular deviation $\theta$ from the rest position and its angular velocity $\omega$. The pendulum dynamics are:
\begin{align}
    \begin{bmatrix}
    \dot{\theta} \\ \dot{\omega}\end{bmatrix}
    = \underbrace{\begin{bmatrix}
    \omega \\ -\frac{g}{l}\sin(\theta)
    \end{bmatrix}}_{f(\bfx)}
    +
    \underbrace{
    \begin{bmatrix}
    0\\
    \frac{1}{ml}
    \end{bmatrix}}_{g(\bfx)}
    u,
    \label{eq:pendulum}
\end{align}
where $g$ is the gravitational acceleration, $m$ is the mass, and $l$ is the length.

\subsubsection{Estimating Pendulum Dynamics}
\label{sec:pendulum-learning}
We compare our MVGP model versus the Coregionalization GP (CoGP) \cite{alvarez2012kernels} in estimating the pendulum dynamics using data from randomly generated control inputs. In our simulation, the true pendulum model has mass $m = 1$ and length $l = 1$. The inference algorithms were implemented in Python using GPyTorch~\cite{gardner2018gpytorch}. We let $\bfK_0(\bfx, \bfx') = \bfB\knl_0(\bfx, \bfx') \otimes \bfA$ for MVGP and $\bfK_0(\bfx, \bfx') = \bfSigma\knl_0(\bfx, \bfx')$ for CoGP, where $\knl_0(\bfx, \bfx')$ is a scaled radial-basis function kernel. Further, $\bfA$, $\bfB$, and $\bfSigma$ are constrained to be positive semi-definite by modeling each as $\bfC_r\bfC_r^\top + \diag(\bfv)$, where $\bfC_r \in \bbR^{l \times r}$, $r$ is a desired rank and $\diag(\bfv)$ is a diagonal matrix constructed from a vector $\bfv$.


We emphasize that a straightforward implementation of decoupled GPs for each system dimension is not possible because the training data is only available as a linear combination of the unknown functions $f(\bfx)$ and $g(\bfx)$. We approximate decoupled GP inference by constraining the matrices $\bfA$, $\bfB$, and $\bfSigma$ in the MVGP and CoGP models to be diagonal. To compare the accuracy of the GP models, we randomly split the samples from the pendulum simulation into training data and test data. Given a test set $\calT$, we compute \emph{variance-weighted error} as,
\begin{align}
    \mbox{err}(\calT) \triangleq \sqrt{
    \sum_{\bfx \in \calT} 
    \frac{
    \| \bfK_k^{-\frac{1}{2}}(\bfx, \bfx)\vect(\bfM_k(\bfx) - F(\bfx))\|_2^2
    }{|\calT|}
    }.
    \label{eq:variance-weighted-error}
\end{align}
A qualitative comparison of MVGP and CoGP is shown in Fig.~\ref{fig:learning-accuracy}. A quantitative comparison of the computational  efficiency  and  inference  accuracy of the models with \emph{full} and \emph{diagonal} covariance matrices is shown in Fig.~\ref{fig:computation-mvgp-vs-coregionalization}. The median variance-weighted error and error bars showing the 2nd to 9th decile over 30 repetitions of the evaluation are shown. The experiments are performed on a desktop with Nvidia\textsuperscript{\textregistered} GeForce RTX\textsuperscript{TM} 2080Ti GPU and Intel\textsuperscript{\textregistered} Core\textsuperscript{TM} i9-7900X CPU (3.30GHz). The results show that MVGP inference is significantly faster than CoGP, while maintaining comparable accuracy. Both MVGP and CoGP have higher median accuracy than their counterparts with diagonally restricted covariances.



%

\begin{figure*}
\includegraphics[width=\linewidth, trim=0 0 0 10pt, clip]{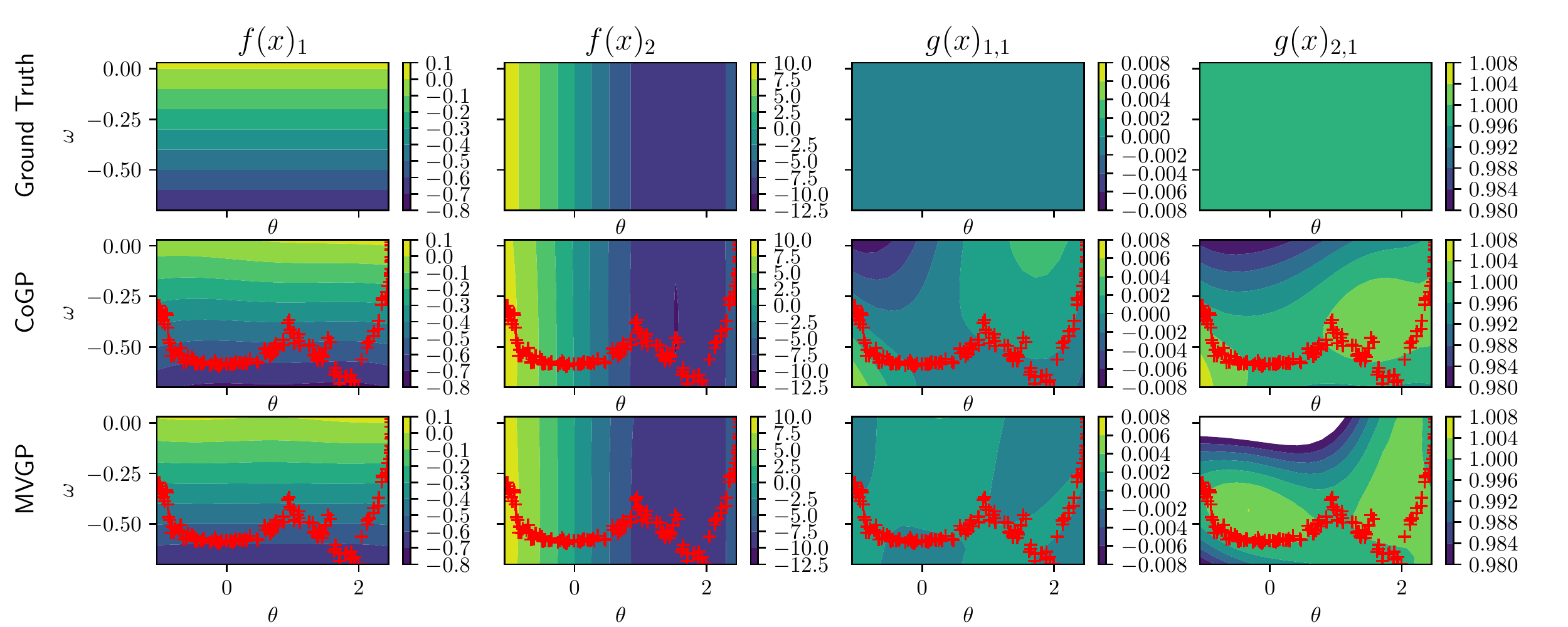}
\caption{Qualitative comparison of our Matrix Variate Gaussian Process (MVGP) regression with the Coregionalization GP (CoGP) model \cite{alvarez2012kernels}. Training sample are generated using random control inputs to the pendulum. The training samples are shown as {\color{red}+}. The learned models are evaluated on a $20 \times 20$ grid, shown as contour plots. The MVGP model has lower computational complexity than CoGP ($O(k^3)$ vs $O(k^3n^3)$, see Sec.~\ref{learningsec:13452}) without significant drop in accuracy.}
\label{fig:learning-accuracy}
\end{figure*}

\begin{figure}
  \begin{center}
    \hspace{14pt}\includegraphics[width=0.90\linewidth]{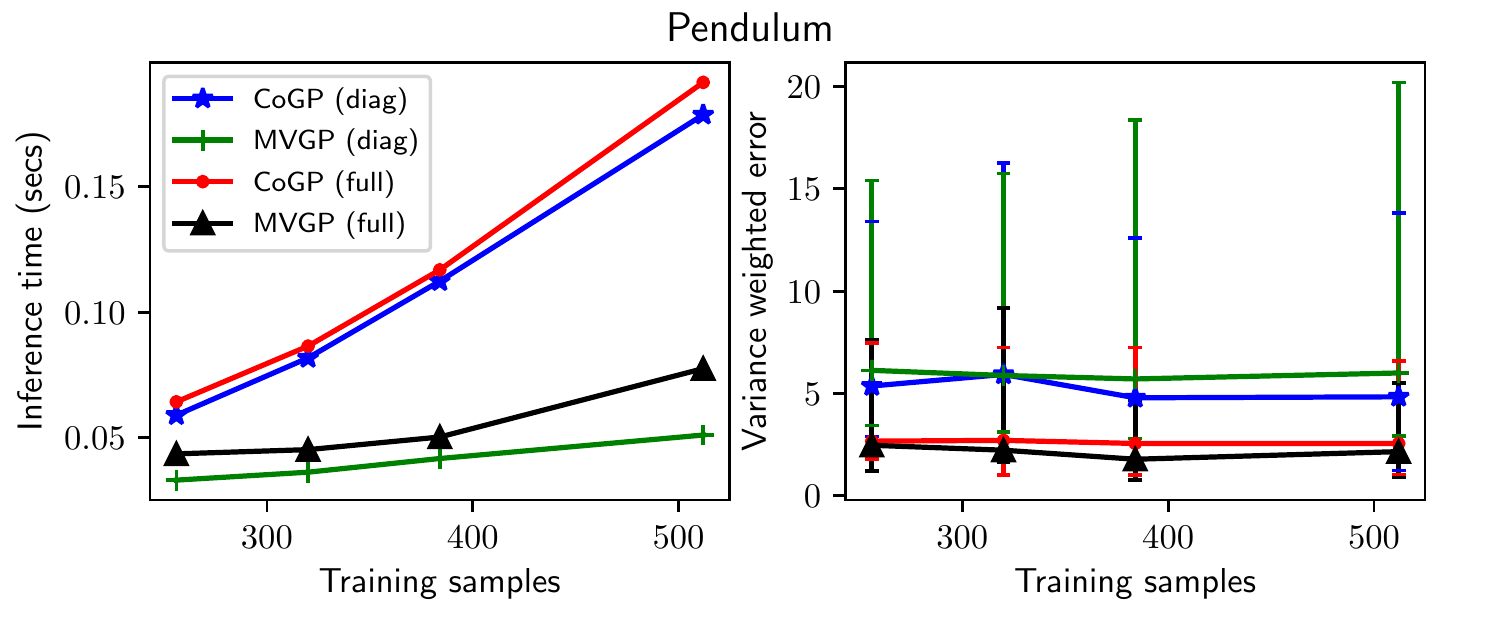}\\%
    \includegraphics[width=0.95\linewidth]{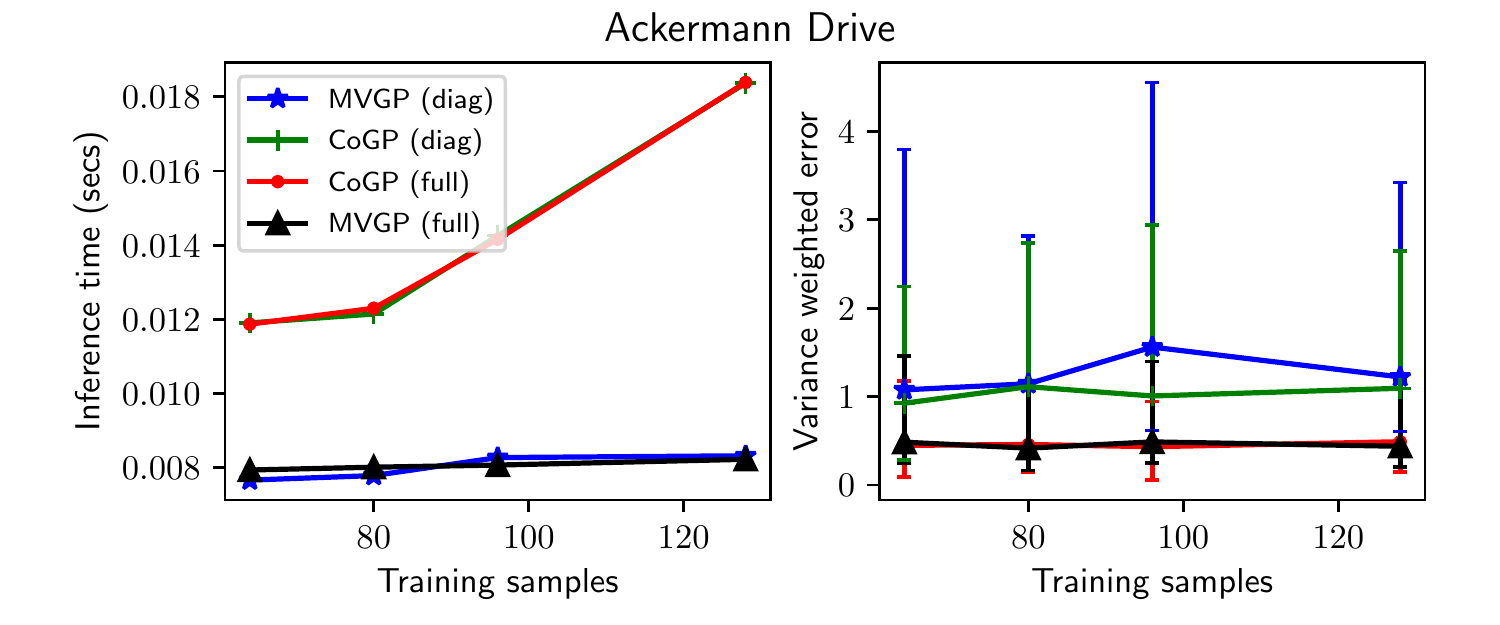}
    \end{center}
    \caption{Comparison of the computational efficiency and inference accuracy of MVGP and CoGP~\cite{alvarez2012kernels}.
    Both models are evaluated on the Pendulum and Ackermann systems in two modes: 1) with correlation matrix constrained to be diagonal (labelled \emph{diag}) and (2) without any constraint on correlation matrix (labelled \emph{full}).
    The results show that MVGP is significantly faster than CoGP. We argue in Sec.~\ref{learningsec:13452} that the computational complexity of MVGP is $O(k^3)$ for $k$ training examples, while that of CoGP is $O(k^3n^3)$, where $n$ is the state dimension.
    The \emph{variance-weighted error} is computed via \eqref{eq:variance-weighted-error}.
    The error bars denote the range from the 2nd to the 9th decile over 30 repetitions, centered around the median.
    The median errors of the MVGP and CoGP models are similar but enforcing diagonal covariance matrices, increases the median error.
    }
    \label{fig:computation-mvgp-vs-coregionalization}
    \label{fig:Ackermann-speed-test}
\end{figure}

\subsubsection{Safe Control of Learned Pendulum Dynamics}
A safe set is chosen as the complement of a radial region $[\theta_c -\Delta_{col}, \theta_c + \Delta_{col}]$ with $\theta_c = 45^\circ$, $\Delta_{col} = 22.5^\circ$ that needs to be avoided, as shown in Fig.~\ref{fig:CPS}. The control barrier function defining this safe set is $h(\bfx) = \cos(\Delta_{col}) - \cos(\theta - \theta_c)$. The controller knows a priori that the system is control-affine with relative degree two, but it is not aware of $f$ and $g$. A zero-mean prior $\bfM_0(\bfx) = \mathbf{0}_{n \times (1+m)}$ and randomly generated covariance matrices $\bfA$ and $\bfB$ are used to initialize the MVGP model. We formulate a SOCP as in \eqref{progg!!5544i} with $r=2$. We specify a task requiring the pendulum to track a reference control signal $\hat{\pi}_k(\bfx)$ and specify the optimization objective as $\|\bfR(\bfx_k)(\bfu_k-\hat{\pi}_k(\bfx_k))\|_2^2$ with $\bfR(\bfx_k) \equiv \bfI$. The reference signal $\hat{\pi}_k(\bfx)$ is an $\epsilon$-greedy policy \cite{sutton2018reinforcement}, used to provide sufficient excitation in the training data. Concretely, $\hat{\pi}_k(\bfx)$ is sampled from an $\epsilon_k$ weighted combination of Direc delta distribution $\delta$ and Uniform distribution $\bbU$, $\hat{\pi}(\bfx_k) \sim (1-\epsilon_k) \delta_{\bfu = 0} + \epsilon_k \bbU[-20, 20]$, where $\epsilon_k = 10^{-k/50}$ so that $\epsilon_k$ goes from $1$ to $0.01$ in $100$ steps. We initialize the system with parameters $\theta_0=75^\circ$, $\omega_0=-0.01$, $\tau=0.01$, $m=1$, $g=10$, $l=1$. Our simulation results show that the pendulum remains in the safe region (see Fig.~\ref{fig:CPS}). Negative control inputs get rejected by the SOCP safety constraint, while positive inputs allow the pendulum to bounce back from the unsafe region.


\begin{figure}
\centering
\includegraphics[width=0.30\linewidth]{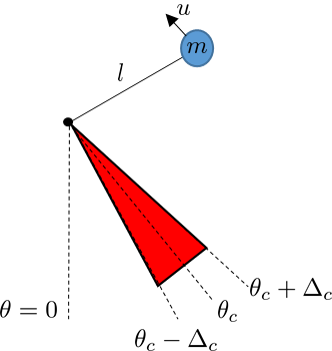}%
\includegraphics[width=0.70\linewidth]{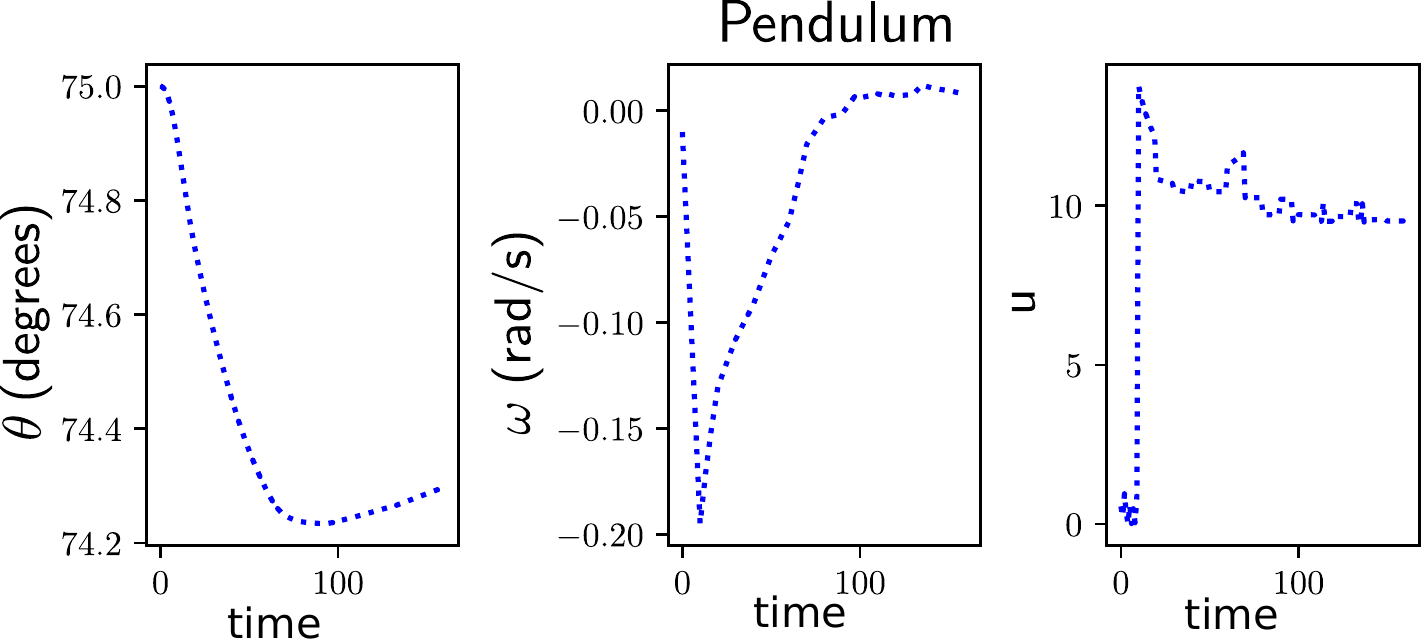}\\
\caption{Pendulum simulation (left) with an unsafe (red) region. The pendulum trajectory (middle two) resulting from the application of safe control inputs (right) is shown. The pendulum starts from $\theta = 75^\circ$, drops down until it reaches the unsafe region and then stays there.}
\label{fig:CPS}
\end{figure}

\subsection{Ackermann Vehicle}
\label{sec:sims-Ackermann}

Consider an Ackermann-drive vehicle model with state $\bfx = [x, y, \theta]^\top$, including position $(x,y)$ and orientation $\theta$. To make the Ackermann dynamics affine in the control input, we define $z \triangleq v\tan(\phi)$, where $v$ is the linear velocity and $\phi$ is the steering angle. The Ackermann-drive dynamics are:
\begin{align}
    \begin{bmatrix}
    \dot{x} \\ \dot{y} \\ \dot{\theta}
    \end{bmatrix}
    = \underbrace{\begin{bmatrix}
    0 \\ 0 \\ 0
    \end{bmatrix}}_{f(\bfx)}
    +
    \underbrace{
    \begin{bmatrix}
    \cos(\theta) & 0\\
    \sin(\theta) & 0 \\
    0 & \frac{1}{L} \\
    \end{bmatrix}}_{g(\bfx)}
    \underbrace{
    \begin{bmatrix}
    v \\
    z
    \end{bmatrix}}_{\bfu},
    \label{eq:ctrl-aff-ackermann}
\end{align}
where $L$ is the distance between the front and back wheels.


\subsubsection{Estimating Ackermann-drive Dynamics}
Similar to Sec.~\ref{sec:pendulum-learning}, we compare the computational efficiency and inference accuracy of MVGP and CoGP with \emph{diagonal} and \emph{full} covariance matrices on training data generated from the Ackermann-drive model in \eqref{eq:ctrl-aff-ackermann}. We further explicitly assume translation-invariant dynamics, i.e, $F(\bfx) = F([0, 0, \theta]^\top)$, $\forall \bfx \in \bbR^3$. This assumption allows us to transfer the learned dynamics to unvisited parts of the environment. The results of the simulation are shown in Fig.~\ref{fig:Ackermann-speed-test}. We again find that MVGP inference is significantly faster than CoGP. In terms of accuracy, MVGP is comparable to CoGP and restricting the covariance matrices to diagonal matrices increases the median variance-weighted error error.

\subsubsection{Safe Control of Learned Ackermann-drive Dynamics}
To test our safe controller on the Ackermann-drive vehicle, we consider navigation to a goal state in the presence of two circular obstacles in the environment. The CBF for circular obstacle $i \in \{1, 2\}$ with center $\bfo_i \in \bbR^2$ and radius $r_i > 0$ is:
\begin{align}
h_i(\bfx) &= q_1 (\|\bfx_{1:2} - \bfo_i\|^2_2 - r_i^2) + q_2\cos(\theta - \phi_o),
\end{align}
where $\phi_o = \tan^{-1}\left(\frac{y - \bfo_{i,2}}{x - \bfo_{i,1}}\right)$. We assume that a planning algorithm provides a desired time-parameterized trajectory $\bfx^{(d)} = [x^{(d)}, y^{(d)}, \theta^{(d)}]$ and its time derivative $\dot{\bfx}^{(d)}$.
We select a control Lyapunov function candidate:
\begin{equation}
V(\bfx, \bfx^{(d)}) = \frac{w_1}{2}\rho^2 + w_2(1-\cos(\alpha)),
\end{equation}
where $\rho \triangleq \|\bfx_{1:2}^{(d)} - \bfx_{1:2}\|_2$ and $\alpha \triangleq \theta - \tan^{-1}\left(\frac{y^{(d)} - y}{x^{(d)}-x}\right)$. The control Lyapunov condition is given by:
%
\begin{align}
    \mbox{CLC}(\bfx, \bfu) &=
    \grad_\bfx^\top V(\bfx, \bfx^{(d)})  F(\bfx) \ubfu +
    \notag\\
    &\quad \grad_{\bfx^{(d)}}^\top V(\bfx, \bfx^{(d)}) \dot{\bfx}^{(d)}
    + \gamma_v V(\bfx, \bfx^{(d)}).
\end{align}
%
The control Barrier condition for each obstacle $i$ is:
\begin{align}
    \mbox{CBC}(\bfx, \bfu; i) = \grad_\bfx^\top h_i(\bfx) F(\bfx)\ubfu + \gamma_{h_i} h_i(\bfx).
\end{align}
%
We implement
the SOCP controller in Prop.~\ref{prop:socp-rel-deg-1} with parameters: $q_1 = 0.7$, $q_2 = 0.3$, $w_1 = 0.9$, $w_2 = 1.5$, $\gamma_{h_i} = 5$, $\gamma_V = 10$, $\tau = 0.01$.

We perform two experiments to evaluate the advantages of our model.
First, we evaluate the importance of \emph{accounting for the variance} of the dynamics estimate for safe control.
Second, we evaluate the importance of \emph{online learning} in reducing the variance and ensuring safety.

Recall the observation in Prop.~\ref{prop:socp-rel-deg-1} that $c(\tilde{p}_k = 0.5) = 0$. In this case, the SOCP in \eqref{lower:nu1} reduces to the deterministic-case QP in~\eqref{eq:clf-cbf-qp}. We dub the case of $\tilde{p}_k = 0.5$ as \emph{Mean CBF} and the case of $\tilde{p}_k = 0.99$ as \emph{Bayes CBF}. In both the cases, the covariance matrices are initialized as $\bfA = 0.01\bfI_{3}$ and $\bfB = \bfI_3$. The true dynamics, $F(\bfx)$, is specified with $L=12$, while the mean dynamics $\bfM_0(\bfx)$ is obtained with $L=1$. The result of simulation is shown in Fig.~\ref{fig:ex-Ackermann-mean-vs-bayes}. The \emph{Bayes CBF} controller slows down when the safety constraint in \eqref{eq:stochastic-cbc-rd-1} hits zero and then changes direction while avoiding the obstacle. However, the \emph{Mean CBF} controller is not able to avoid the obstacle because it does not take the variance of the dynamics estimate into account and the mean dynamics are inaccurate.

We also evaluate the importance of \emph{online learning} for safe control. We compare two setups: (1) updating the system dynamics estimate online via our MVGP approach every 40 steps using the data collected so far (\emph{With Learning}) and (2) relying on the prior system dynamics estimate only (\emph{No Learning}). In both the cases, we choose $\tilde{p}_k = 0.99$. The true dynamics are specified with $L=1$, while prior mean dynamics correspond to $L=8$. The covariance matrices $\bfA$ and $\bfB$ are initialized with elements independently sampled from a standard Gaussian distribution while ensuring them to positive semi-definite as before. The result are shown in Fig.~\ref{fig:ex-Ackermann-learning}. Online learning of the vehicle dynamics reduces the variance and allows the vehicle to pass between the two obstacles. This is not possible using the prior dynamics distribution because the large variance makes the safety constraint in \eqref{eq:stochastic-cbc-rd-1} conservative.


\begin{figure}
\includegraphics[width=0.527\linewidth]{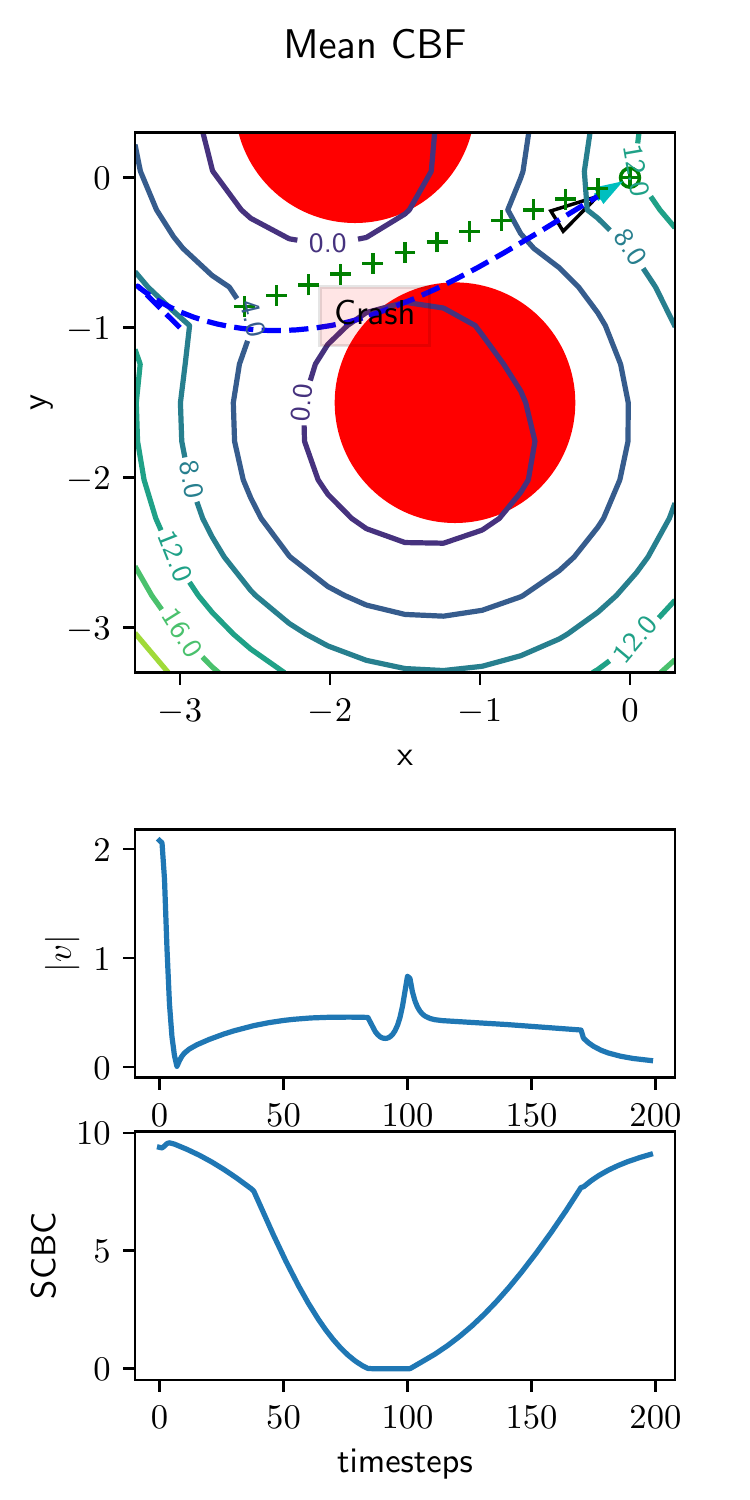}%
\includegraphics[width=0.473\linewidth,trim=22pt 0pt 0pt 0pt, clip]{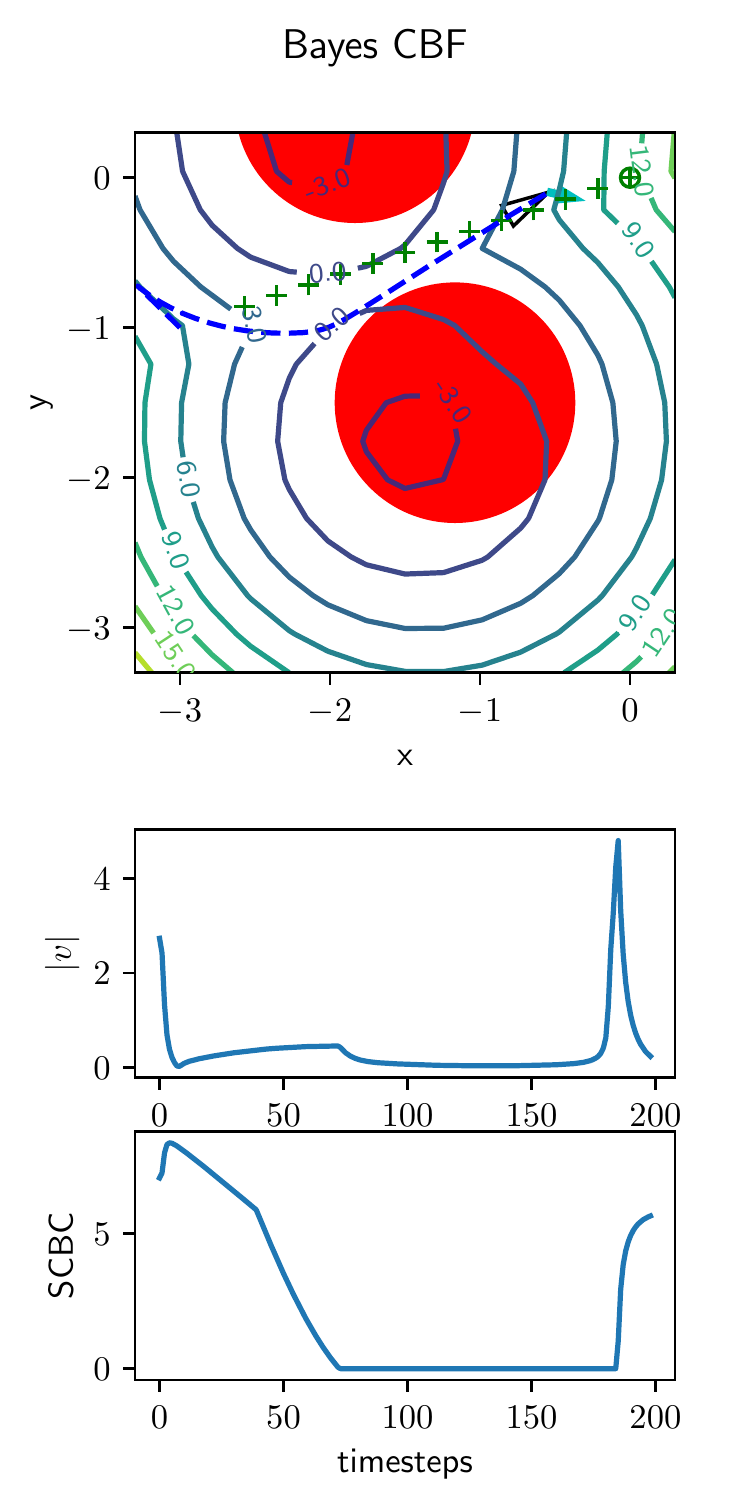}
\caption{
Comparison of enforcing CBF constraints with Ackermann dynamics when accounting (\emph{Bayes CBF}) and not accounting (\emph{Mean CBF}) for the variance in the dynamics estimate.
The top row shows the Ackermann vehicle trajectory in dashed blue with two obstacles in red.
The contour plots shows the minimum of the SCBC~\eqref{eq:stochastic-cbc-rd-1} values corresponding to the two obstacles, evaluated on the $(x,y)$ grid while keeping $\theta$ and $\bfu$ fixed. 
The middle row shows the magnitude of the velocity input over time.
The bottom row shows the minimum of the two SCBC~\eqref{eq:stochastic-cbc-rd-1} values over time. 
Enforcing safety using only the mean CBC (\emph{Mean CBF}) results in a collision, while accounting for stochastic CBC (\emph{Bayes CBF}) constraint causes the Ackermann vehicle to slow down and turn away from the unsafe region.
A video rendering of these simulations is available in the supplementary material.
}
\label{fig:ex-Ackermann-mean-vs-bayes}
\end{figure}


\begin{figure}
    \centering
    \includegraphics[width=0.521\linewidth]{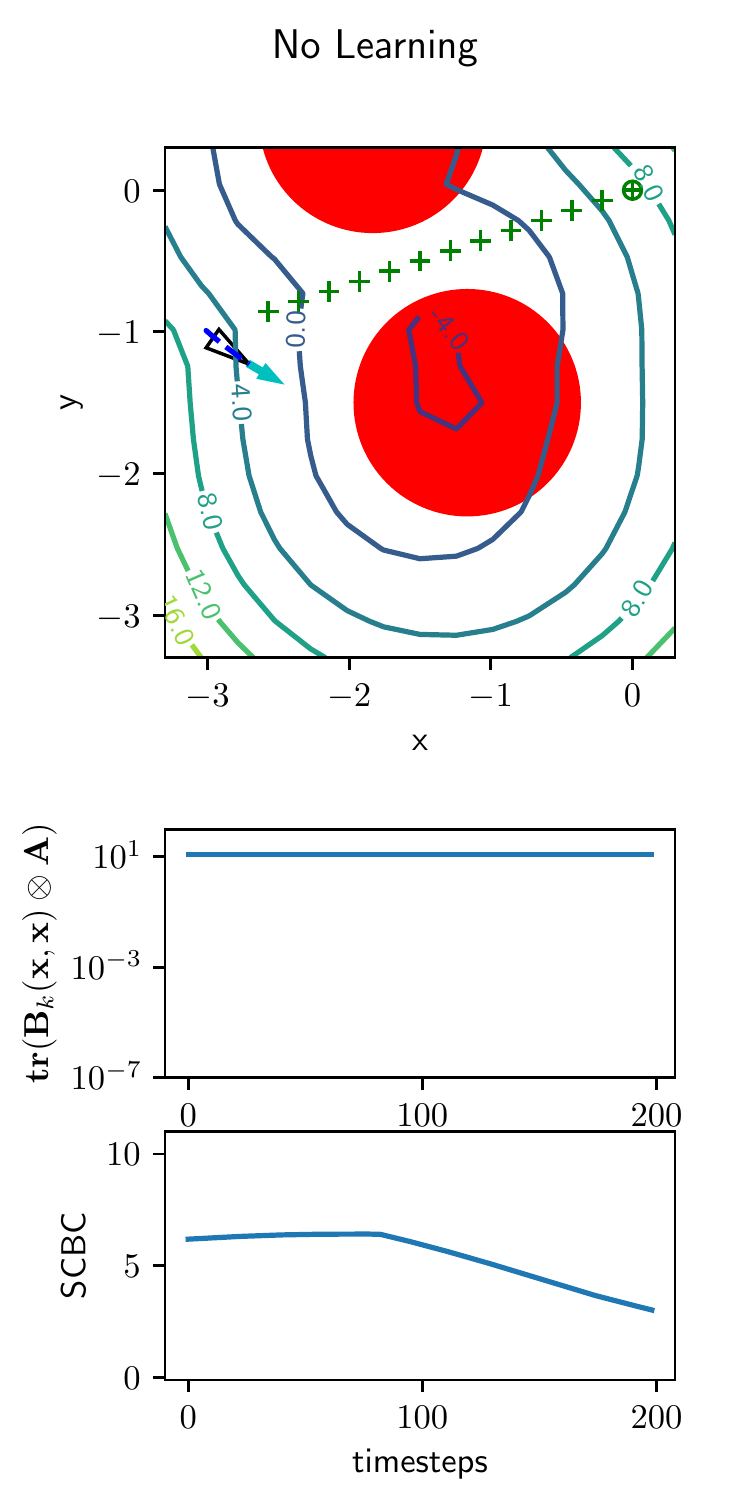}%
    \includegraphics[width=0.479\linewidth,trim=18pt 0pt 0pt 0pt,clip]{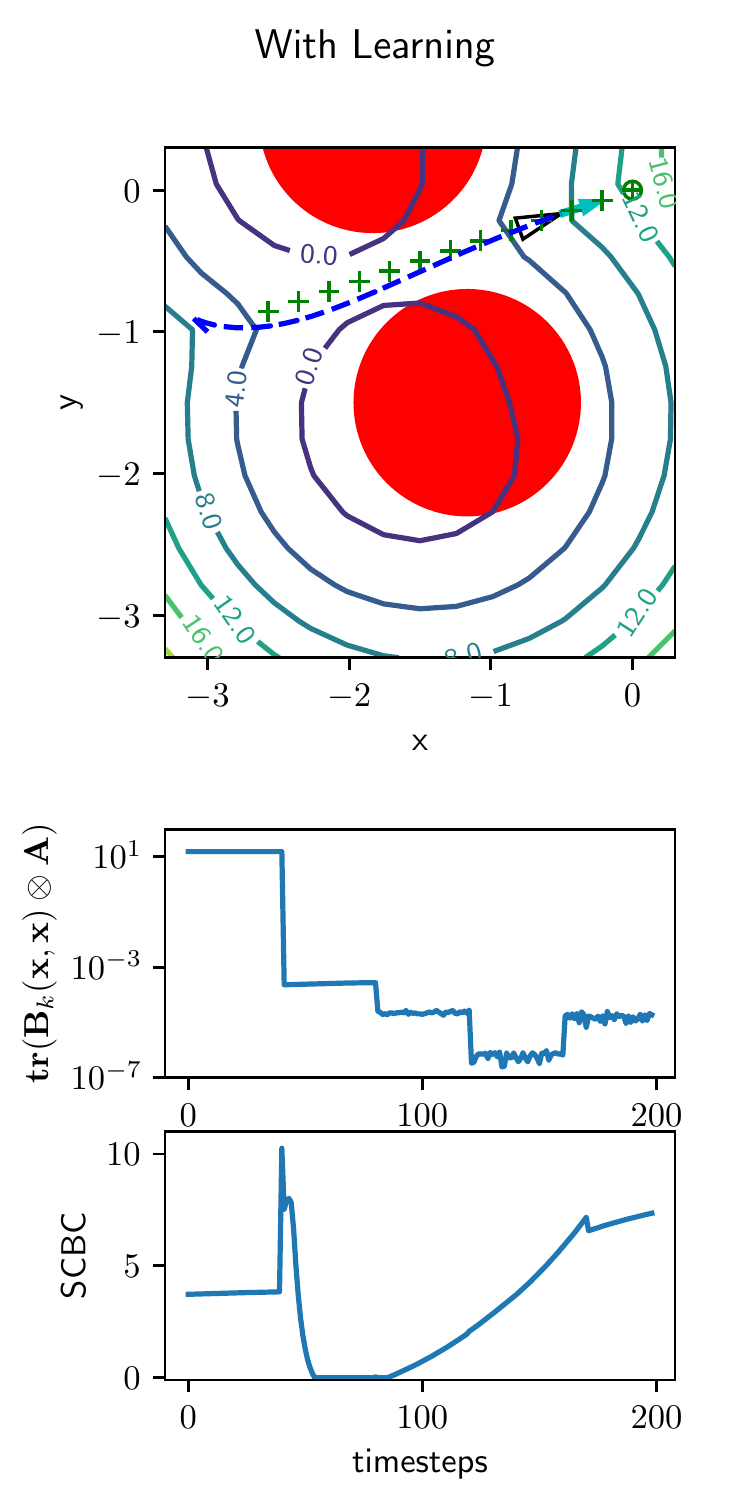}
    \caption{The effect of online dynamics learning (right) versus no online learning (left) on the safe control of an Ackermann vehicle. The top row shows the vehicle trajectory in dashed blue with two obstacles in red. The middle row shows the trace of the covariance matrix $\tr(\bfB_k(\bfx, \bfx) \otimes \bfA)$, which we use as a measure of uncertainty. The bottom row shows the minimum of the two probabilistic safety constraint over time, as defined in \eqref{eq:stochastic-cbc-rd-1}. Note that without learning, the vehicle gets stuck between the two obstacles because the uncertainty in the dynamics is too high, i.e., the safety condition in \eqref{eq:stochastic-cbc-rd-1} cannot be rendered positive. With online learning, however, the uncertainty is reduced enough to allow the safety condition to become positive in the area between the two obstacles. The dynamics distribution is updated every 40 time steps. Note the drop in uncertainty in the middle row at these time steps. A video rendering of these simulations is available in the supplementary material.}
    \label{fig:ex-Ackermann-learning}
\end{figure}

\Revised{}{
\subsubsection{Evaluating triggering time}
\label{sec:self-triggering-design-sim}
We evaluate the bounds on the Lipschitz constant $L_{\bff_k}$ (Lemma~\ref{lem:LipschitzGP}) and triggering time $\tau_k$ (Prop~\ref{thm:validity-inter-triggeringt-times}) for the Ackermann vehicle. At any time $k$, we use the following choice of parameters to compute $\tau_k$, $\delta_L=10^{-4}$, $\zeta = 10^{-2}$, $L_\alpha = 1$, and $L_{h_k} = \max_{\bfx \in \calX_k} \grad_\bfx h(\bfx)$. We define $\calX_k$ to be a cuboid with sides $(0.2, 0.2, \frac{\pi}{50})$ in the directions $x, y$ and $\theta$ respectively centered around the current state $\bfx_k$. We compute the Lipschitz constant $L_{\bff_k}$ and $\tau_k$ both numerically and analytically. The numerical computation of $L_{i, \partial j}$ is done by sampling the GP, $\frac{\partial \bff_i(\bfx)}{\partial \bfx_j}$ over a $10\times10\times10$ grid in $\calX_k$ and taking the maximum norm as the Lipschitz constant. The rest of the computation for the numerical approximation of $\tau_k$ is the same as the analytical $\tau_k$. The results are shown in Fig.~\ref{fig:triggering-time}. The shapes of the numerical and analytical Lipschitz constants are the same, although they differ by a couple orders of magnitude in scale since the analytical bound is conservative. Also, note that we are that in the numerical approximation $\delta_L$ is determined by the samples drawn from the GP. We also observe that the second term in the expression for $L_{i, \partial j}$ (Lemma~\ref{lem:LipschitzGP}) dominates the effect of $\delta_L$ on the Lipschitz constant. 
}

\section{Conclusion}\label{sec:conc}
Allowing artificial systems to safely adapt their own models during online operation will have significant implications for their successful use in unstructured changing real-world environments. This paper developed a Bayesian inference approach to approximate system dynamics and their uncertainty from online observations. The posterior distribution over the dynamics was used to specify probabilistic constraints that guarantee safe and stable online operation with high probability. Our results offer a promising approach for controlling complex systems in challenging environments. Future work will focus on applications of the proposed approach to real robot systems.


\appendices

\section{Additional properties of the MVG distribution}\label{sec:MGVresutls2}

\begin{lemma}[{\cite{StructuredPBP}}]
\label{lemma:mvg2gaussian}
Let $\bfX$ follow an MVG distribution $\calM\calN(\bfM,\bfA,\bfB)$. Then, $\vect(\bfX) \sim \cal{N}\prl{ \textit{vec}(\bfM), \bfB\otimes\bfA}$.
\end{lemma}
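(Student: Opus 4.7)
The plan is to show that the matrix variate density given in the definition of $\calM\calN(\bfM,\bfA,\bfB)$ literally equals, after vectorization, the multivariate Gaussian density with mean $\vect(\bfM)$ and covariance $\bfB \otimes \bfA$. Since two densities on $\bbR^{nm}$ agreeing pointwise induce the same distribution, matching the exponent and the normalizing constant is all that is needed.

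First, I would rewrite the quadratic form in the exponent. Setting $\bfY \triangleq \bfX - \bfM$, the MVG density contains $\tr\bigl[\bfB^{-1} \bfY^\top \bfA^{-1} \bfY\bigr]$. I would invoke the standard vec-trace identity
\begin{equation*}
\tr\bigl(\bfY^\top \bfA^{-1} \bfY \bfB^{-1}\bigr) = \vect(\bfY)^\top \bigl(\bfB^{-1} \otimes \bfA^{-1}\bigr) \vect(\bfY),
\end{equation*}
which follows from $\vect(\bfA \bfX \bfC) = (\bfC^\top \otimes \bfA)\vect(\bfX)$ together with $\tr(\bfU^\top \bfV) = \vect(\bfU)^\top \vect(\bfV)$. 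Using symmetry of $\bfA$ and $\bfB$ (since they act as row/column covariances) together with the Kronecker-inverse identity $\bfB^{-1} \otimes \bfA^{-1} = (\bfB \otimes \bfA)^{-1}$, the exponent becomes $-\tfrac{1}{2}\vect(\bfY)^\top (\bfB \otimes \bfA)^{-1} \vect(\bfY)$, which is exactly the Gaussian quadratic form with covariance $\bfB \otimes \bfA$ and mean $\vect(\bfM)$.

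Next, I would match the normalizing constants. The Gaussian normalizer is $(2\pi)^{nm/2} \det(\bfB \otimes \bfA)^{1/2}$, and I would apply the Kronecker determinant identity $\det(\bfB \otimes \bfA) = \det(\bfA)^m \det(\bfB)^n$ (using $\bfA \in \bbR^{n\times n}$, $\bfB \in \bbR^{m\times m}$). This yields $(2\pi)^{nm/2}\det(\bfA)^{m/2}\det(\bfB)^{n/2}$, which is precisely the denominator in the definition of $p(\bfX;\bfM,\bfA,\bfB)$.

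The argument has no real obstacle; it is a verification in which the only subtle step is writing the trace as a vectorized quadratic form with the correct ordering of $\bfA$ and $\bfB$ inside the Kronecker product. Care must be taken so that the row covariance $\bfA$ appears on the right of the Kronecker product (matching the fact that $\vect$ stacks columns), giving $\bfB \otimes \bfA$ rather than $\bfA \otimes \bfB$; once that convention is fixed, the two densities coincide identically and the lemma follows.
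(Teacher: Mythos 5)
Your verification is correct: the vec--trace identity gives the exponent $-\tfrac{1}{2}\vect(\bfX-\bfM)^\top(\bfB\otimes\bfA)^{-1}\vect(\bfX-\bfM)$ (with the ordering $\bfB\otimes\bfA$ matching column-stacking $\vect$, and symmetry of $\bfB$ handling the transpose), and $\det(\bfB\otimes\bfA)=\det(\bfA)^m\det(\bfB)^n$ matches the normalizer, so the two densities coincide. Note that the paper does not prove this lemma at all --- it is quoted from the cited reference \cite{StructuredPBP} --- and your density-matching argument is exactly the standard proof of that result, so there is nothing further to reconcile.
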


\begin{lemma}[\cite{StructuredPBP}]
\label{lemma:product-mvg}
Let $\bfX$ follow an MVG distribution $\calM\calN(\bfM,\bfA,\bfB)$ and let $\bfC \in \mathbb{R}^{d \times n}$ and $\bfD \in \mathbb{R}^{m \times d}$. Then,
\begin{equation}
\label{sunerrww233!}
\begin{aligned}
  \bfC \bfX &\sim \calM\calN(
  \bfC\bfM, \bfC\bfA\bfC^\top,\bfB),
  \\ 
\bfX \bfD &\sim \calM\calN(\bfM\bfD, \bfA,\bfD^\top\bfB\bfD).
\end{aligned}
\end{equation}
\end{lemma}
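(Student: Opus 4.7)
The plan is to leverage Lemma~\ref{lemma:mvg2gaussian}, which identifies an MVG random matrix with a multivariate Gaussian random vector via the vectorization operator. The key point is that an MVG distribution $\calM\calN(\bfM,\bfA,\bfB)$ is uniquely characterized by the Gaussian law on $\vect(\bfX)$ with mean $\vect(\bfM)$ and covariance $\bfB \otimes \bfA$, so establishing the target transformation laws reduces to a routine linear-Gaussian calculation combined with the standard Kronecker identities $\vect(\bfP \bfQ \bfR) = (\bfR^\top \otimes \bfP)\vect(\bfQ)$ and $(\bfP_1 \otimes \bfP_2)(\bfQ_1 \otimes \bfQ_2) = (\bfP_1 \bfQ_1) \otimes (\bfP_2 \bfQ_2)$.

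First I would handle the left-multiplication claim. By Lemma~\ref{lemma:mvg2gaussian}, $\vect(\bfX) \sim \calN(\vect(\bfM), \bfB \otimes \bfA)$. Writing $\bfC\bfX = \bfC \bfX \bfI_m$ and applying the vectorization identity gives $\vect(\bfC\bfX) = (\bfI_m \otimes \bfC)\vect(\bfX)$, so $\vect(\bfC\bfX)$ is Gaussian with mean $(\bfI_m \otimes \bfC)\vect(\bfM) = \vect(\bfC\bfM)$ and covariance
\begin{equation*}
(\bfI_m \otimes \bfC)(\bfB \otimes \bfA)(\bfI_m \otimes \bfC)^\top = \bfB \otimes (\bfC \bfA \bfC^\top),
\end{equation*}
by the mixed-product property. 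Invoking Lemma~\ref{lemma:mvg2gaussian} in the reverse direction (the MVG density is determined by this mean and Kronecker-structured covariance) then yields $\bfC\bfX \sim \calM\calN(\bfC\bfM, \bfC\bfA\bfC^\top, \bfB)$.

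The right-multiplication claim is symmetric: using $\bfX\bfD = \bfI_n \bfX \bfD$, we obtain $\vect(\bfX\bfD) = (\bfD^\top \otimes \bfI_n)\vect(\bfX)$, whose mean is $\vect(\bfM\bfD)$ and whose covariance simplifies via the same mixed-product property to $(\bfD^\top \bfB \bfD) \otimes \bfA$. Applying the converse of Lemma~\ref{lemma:mvg2gaussian} once more gives $\bfX\bfD \sim \calM\calN(\bfM\bfD, \bfA, \bfD^\top \bfB \bfD)$.

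No real obstacle is anticipated; the only subtlety is the uniqueness direction of Lemma~\ref{lemma:mvg2gaussian}, which is immediate from the MVG density formula since the pair $(\bfA,\bfB)$ determines $\bfB \otimes \bfA$ up to the well-known scalar ambiguity $(\bfA,\bfB) \mapsto (c\bfA, c^{-1}\bfB)$; this ambiguity is inconsequential because it leaves the joint density unchanged, so the parameters stated in \eqref{sunerrww233!} describe the correct MVG law up to this convention.
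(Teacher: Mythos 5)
Your proof is correct. The paper itself offers no proof of this lemma---it is imported from \cite{StructuredPBP} as a known property---so there is nothing to diverge from; your route through Lemma~\ref{lemma:mvg2gaussian}, the identity $\vect(\bfP\bfQ\bfR)=(\bfR^\top\otimes\bfP)\vect(\bfQ)$, and the mixed-product property is the standard argument and is exactly the style of Kronecker manipulation the paper uses in Appendix~\ref{proof:matrix-variate-gp}. The only cosmetic caveat is that when $\bfC$ or $\bfD$ is not full rank the transformed law is a degenerate Gaussian, so the density formula in the MVG definition no longer applies literally, but the distributional identification via mean and Kronecker-structured covariance (with the scalar ambiguity you note) still holds.
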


\section{Proof of Proposition~\ref{thmfrmvg24}}
\label{proof:matrix-variate-gp}

\newcommand{\ubcalB}{\underline{\bcalB}}

Let $\GP(\vect(\bfM_k(\bfx)), \bfK_k(\bfx, \bfx'))$ be the posterior distribution of $\vect(F(\bfx))$ conditioned on the training data $(\StDat_{1:k}, \CtDat, \StDtDat_{1:k})$. The mean and variance can be obtained by applying Schur's complement to \eqref{eq:joint-training-test-distribution},
\begin{align}
    &\vect(\bfM_k(\bfx)) = \vect(\bfM_0(\bfx)) +
    \\\notag
    &\qquad \bigl(  
    \ubcalB_{1:k}(\bfx) \otimes \bfA \bigr)^\top\bigl(\ubcalB_{1:k}^{1:k} \otimes \bfA \bigr)^{-1} \vect(\StDtDat - \bcalM_{1:k}\CtDat),
\\
&\bfK_k(\bfx, \bfx) = 
\bfB_0(\bfx, \bfx) \otimes \bfA -
\\\notag
&\qquad \bigl( \ubcalB_{1:k}(\bfx) \otimes \bfA \bigr)^\top
\bigl( \ubcalB_{1:k}^{1:k}  \otimes \bfA \bigr)^{-1}
\bigl( \ubcalB^\top_{1:k}(\bfx) \otimes \bfA \bigr).
\end{align}
where 
$\ubcalB_{1:k}^{1:k} \triangleq \CtDat^\top \bcalB_{1:k}^{1:k} \CtDat + \sigma^2 \bfI_k$ and
$\ubcalB_{1:k}(\bfx) \triangleq  \bcalB_{1:k}(\bfx)\CtDat$. For appropriately sized matrices $\bfP$, $\bfQ$, $\bfR$, $\bfS$, the Kronecker product satisfies $(\bfP \otimes \bfQ)(\bfR \otimes \bfS) = (\bfP\bfR \otimes \bfQ\bfS)$ and $(\bfP \otimes \bfQ)^{-1} = \bfP^{-1} \otimes \bfQ^{-1}$. Thus, we can rewrite the mean as:
\begin{equation}
\begin{aligned}
    &\vect(\bfM_k(\bfx)) = \vect(\bfM_0(\bfx)) + \\
    &\biggl(  \bigl(\ubcalB^\top_{1:k}(\bfx)[\ubcalB_{1:k}^{1:k}]^{-1}\bigr) \otimes \bfA\bfA^{-1} \biggr)
\vect(\StDtDat - \bcalM_{1:k}\CtDat).
\end{aligned}
\end{equation}
Applying $(\bfP \otimes \bfQ) \vect(\bfR) = \vect(\bfQ\bfR\bfP^\top)$, we get
\begin{align}
    \bfM_k&(\bfx) = \bfM_0(\bfx) + 
    (\StDtDat_{1:k} - \bcalM_{1:k}\CtDat)[\ubcalB_{1:k}^{1:k}]^{-1}
    \ubcalB^\top_{1:k}(\bfx)
    \notag\\
    &= \bfM_0(\bfx) +
    (\StDtDat_{1:k} - \bcalM_{1:k}\CtDat)(\CtDat\bcalB_{1:k}(\bfx))^{\dagger}.
\end{align}
Similarly, the covariance can be rewritten as,
\begin{align}
\bfK_k&(\bfx, \bfx') = \Bigl( \bfB_0(\bfx, \bfx') - \ubcalB_{1:k}(\bfx)[\ubcalB_{1:k}^{1:k} ]^{-1}\ubcalB^\top_{1:k}(\bfx')\Bigr)\otimes \bfA \notag\\
&=\Bigl(\bfB_0(\bfx, \bfx') - \bcalB_{1:k}(\bfx')\CtDat(\CtDat \bcalB_{1:k})^\dagger\Bigr)\otimes \bfA.
\end{align}
Defining $\bfB_k(\bfx, \bfx')$ such that $\bfK_k(\bfx, \bfx) = \bfB_k(\bfx, \bfx') \otimes \bfA$, we can write the posterior distribution of $\vect(F(\bfx))$ as $\mathcal{GP}(\vect(\bfM_k(\bfx)), \bfB_k(\bfx, \bfx') \otimes \bfA)$.\qed

\section{Coregionalization Gaussian Process}
\label{proof:coregionalization-gp}
\newcommand{\bcalK}{\boldsymbol{\calK}}
\newcommand{\ubcalK}{\underline{\bcalK}}


Here, we show how the Coregionalization model \cite{alvarez2012kernels} can be applied to Gaussian process inference when the training data is available as a matrix vector product.

\begin{lemma}
Let $\vect(F(\bfx)) \sim \GP(\vect(\bfM_0(\bfx)), \bfSigma\knl_0(\bfx, \bfx'))$, where $\bfSigma \in \bbR^{(1+m)n \times (1+m)n}$ is the covariance matrix of the output dimensions, and $\knl_0(\bfx, \bfx')$  is the kernel function. Denote the kernel function evaluation over the training data as,
\begin{equation}
\begin{aligned}
&\bcalK_{1:k}^{1:k} \triangleq \begin{bmatrix} 
\knl_0(\bfx_1, \bfx_1) & \ldots & \knl_0(\bfx_k, \bfx_1) \\
\vdots & & \vdots \\
\knl_0(\bfx_k, \bfx_1) & \ldots & \knl_0(\bfx_k, \bfx_k)
\end{bmatrix} \in \bbR^{k \times k}
\\
&\bcalK_{1:k}(\bfx) \triangleq \begin{bmatrix}
\knl_0(\bfx_1, \bfx) & \hdots & \knl_0(\bfx_k, \bfx)
\end{bmatrix}
\in \bbR^{1\times k}.
\end{aligned}
\end{equation}
Let $\bcalM_{1:k}$ and $\CtDat$ be defined as in Proposition~\ref{thmfrmvg24}. 
Then, the posterior distribution of $\vect(F(\bfx))$, given the training data $(\StDtDat_{1:k}, \StDat_{1:k}, \CtDat)$, is $\GP\bigl(\vect(\bfM_k(\bfx)), \bfK_k(\bfx, \bfx')\bigr)$, where
\begin{equation}
\label{eq:coregionalization-posterior}
\begin{aligned}
    &\vect(\bfM_k(\bfx)) \triangleq \vect(\bfM_0(\bfx)) +
    \\
     &\qquad\quad
    \ubcalK_{1:k}(\bfx)
    [\ubcalK_{1:k}^{1:k}]^{-1}
     \vect(\StDtDat_{1:k} - \bcalM_{1:k} \CtDat),
    \\
    &\bfK_k(\bfx, \bfx') \triangleq \bfSigma \knl_0(\bfx, \bfx') 
    -  
    \ubcalK_{1:k}(\bfx)
    [\ubcalK_{1:k}^{1:k}]^{-1}
    \ubcalK^\top_{1:k}(\bfx),
\end{aligned}
\end{equation}
$\ubcalK_{1:k}^{1:k} \triangleq (\CtDat^\top \otimes \bfI_n) 
    (\bcalK_{1:k}^{1:k}  \otimes \bfSigma)
    (\CtDat \otimes \bfI_n) + \bfI_k \otimes \bfS$, and $\ubcalK_{1:k}(\bfx) \triangleq (\bcalK_{1:k}(\bfx) \otimes \bfSigma)(\CtDat \otimes \bfI_n) $.
\end{lemma}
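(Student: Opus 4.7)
The plan is to recognize this as a standard Gaussian process conditioning problem, with the only twist being that the training observations are \emph{linear transformations} of the latent field $\vect(F(\bfx))$ rather than direct evaluations of it. Specifically, at each training input $\bfx_i$ we observe
\[
\dot{\bfx}_i = F(\bfx_i)\ubfu_i + \bfepsilon_i, \quad \bfepsilon_i \sim \calN(\mathbf{0},\bfS),
\]
and using the identity $\vect(\bfA\bfB\bfC) = (\bfC^\top \otimes \bfA)\vect(\bfB)$, one has $F(\bfx_i)\ubfu_i = (\ubfu_i^\top \otimes \bfI_n)\vect(F(\bfx_i))$. Stacking over $i = 1,\dots,k$ and noting the block-diagonal structure, this gives
\[
\vect(\StDtDat_{1:k}) = (\CtDat^\top \otimes \bfI_n)\,\vect\bigl(F(\bfX_{1:k})\bigr) + \vect(\bfepsilon_{1:k}),
\]
where $\vect(F(\bfX_{1:k}))$ is the concatenation of $\vect(F(\bfx_i))$'s and the noise has covariance $\bfI_k \otimes \bfS$.

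Next, I would form the joint Gaussian distribution of $\vect(\StDtDat_{1:k})$ and $\vect(F(\bfx))$ at the test point. Because $\vect(F(\bfx)) \sim \GP(\vect(\bfM_0(\bfx)),\bfSigma\knl_0(\bfx,\bfx'))$, the prior covariance block between two training points $\bfx_i,\bfx_j$ is $\bfSigma\knl_0(\bfx_i,\bfx_j)$, i.e., jointly $\bcalK_{1:k}^{1:k} \otimes \bfSigma$. Propagating through the linear map $\CtDat^\top \otimes \bfI_n$ and adding noise yields the marginal covariance of $\vect(\StDtDat_{1:k})$ equal to $(\CtDat^\top \otimes \bfI_n)(\bcalK_{1:k}^{1:k}\otimes\bfSigma)(\CtDat\otimes\bfI_n) + \bfI_k\otimes\bfS$, which is the matrix $\ubcalK_{1:k}^{1:k}$ in the statement. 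Similarly, the cross-covariance between $\vect(F(\bfx))$ and $\vect(\StDtDat_{1:k})$ is $(\bcalK_{1:k}(\bfx)\otimes\bfSigma)(\CtDat\otimes\bfI_n) = \ubcalK_{1:k}(\bfx)$, and the mean of $\vect(\StDtDat_{1:k})$ is obtained by applying $(\CtDat^\top \otimes \bfI_n)$ to the stacked prior mean, which equals $\vect(\bcalM_{1:k}\CtDat)$ upon another application of the $\vect$-Kronecker identity.

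Finally, I would invoke the standard Gaussian conditioning (Schur complement) formula: for $[\bfa;\bfb]$ jointly Gaussian with block means $[\bar{\bfa};\bar{\bfb}]$ and covariances $\bfSigma_{ab}$ etc., $\bfb \mid \bfa$ has mean $\bar{\bfb} + \bfSigma_{ba}\bfSigma_{aa}^{-1}(\bfa - \bar{\bfa})$ and covariance $\bfSigma_{bb} - \bfSigma_{ba}\bfSigma_{aa}^{-1}\bfSigma_{ab}$. Substituting the blocks computed above gives exactly the expressions in \eqref{eq:coregionalization-posterior}.

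The only nontrivial step is bookkeeping with Kronecker products; the main obstacle is keeping the order of factors straight when vectorizing $F(\bfx_i)\ubfu_i$ and when propagating covariances through the linear map $\CtDat^\top \otimes \bfI_n$. Aside from this, no conceptual difficulty arises, and the $O(k^3 n^3)$ cost claimed in the main text is visible from the size $kn \times kn$ of $\ubcalK_{1:k}^{1:k}$, which must be inverted---contrasting with the $k\times k$ inverse required by the MVGP posterior of Proposition~\ref{thmfrmvg24}.
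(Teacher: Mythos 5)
Your proposal is correct and follows essentially the same route as the paper's proof: both use the identity $\vect(\bfP\bfQ\bfR)=(\bfR^\top\otimes\bfP)\vect(\bfQ)$ to express the observations $F(\bfx_i)\ubfu_i$ as a linear map $(\CtDat^\top\otimes\bfI_n)$ of the vectorized latent field, form the joint Gaussian of training and test quantities with blocks $\ubcalK_{1:k}^{1:k}$ and $\ubcalK_{1:k}(\bfx)$, and conclude by the Schur-complement conditioning formula. Your stacked presentation of the observation model and the remark on the $kn\times kn$ inverse are consistent with the paper's per-point bookkeeping and its stated $O(k^3n^3)$ complexity.
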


\begin{proof}
Using $\vect(\bfP\bfQ\bfR) = (\bfR^\top \otimes \bfP) \vect(\bfQ)$, we can rewrite $F(\bfx)\ctrlaff$ as
\begin{align}
F(\bfx)\ctrlaff = \vect(\bfI_n F(\bfx)\ctrlaff) = (\ctrlaff^\top \otimes \bfI_n) \vect(F(\bfx)).
\end{align}
Thus, the variance of $F(\bfx)\ctrlaff$ can be expressed in terms of the kernel $\bfSigma \knl_0(\bfx, \bfx')$:
\begin{align}
    F(\bfx)\ctrlaff \!\sim \!\calN\Bigl(
    \bfM_0(\bfx)\ctrlaff, 
    (\ctrlaff^\top \!\!\otimes \bfI_n) \bfSigma \knl_0(\bfx, \bfx') (\ctrlaff \otimes \bfI_n) \Bigr).
\end{align}
The training data is generated from $\dot{\bfx} = F(\bfx)\ctrlaff + \bfw$ with $\bfw \sim \calN(\boldsymbol{0}_n, \bfS)$ and is jointly Gaussian:
\begin{align}
&\vect(\StDtDat_{1:k}) \sim \calN \bigl(\vect(\bcalM_{1:k}\CtDat) , \ubcalK_{1:k}^{1:k} \bigr),
\end{align}
%
The covariance for a single point is given by,
\begin{align}
    &\cov(\vect(F(\bfx)), F(\bfx')\ctrlaff) 
    \\
    &\scaleMathLine{\quad=\cov(\vect(F(\bfx), (\ctrlaff^\top \otimes \bfI_n) \vect(F(\bfx')))
    = \bfSigma \knl_0(\bfx, \bfx') (\ctrlaff \otimes \bfI_n)}.\notag
\end{align}
A similar expression for covariance can be obtained between the training data and the test data, $\cov(\vect(F(\bfx)), \StDtDat_{1:k}) = \ubcalK_{1:k}(\bfx)$.
%
We can now write the joint distribution between the training and test data,
%
\begin{equation*}
\scaleMathLine{%
\begin{bmatrix}
\vect(\StDtDat_{1:k})
\\
\vect(F(\bfx))
\end{bmatrix} 
    \sim \calN\left( 
\begin{bmatrix} 
    \vect(\bcalM_{1:k}\CtDat) \\ 
    \bfM_0(\bfx)
\end{bmatrix}, 
\begin{bmatrix}
    \ubcalK_{1:k}^{1:k} & \ubcalK^\top_{1:k}(\bfx') 
    \\
    \ubcalK_{1:k}(\bfx) & \bfSigma \knl_0(\bfx, \bfx')
\end{bmatrix}
\right).}
\end{equation*}
Applying a Schur complement, we get the posterior distribution in \eqref{eq:coregionalization-posterior}.
\end{proof}

\Revised{}{
\section{Proof of Lemma~\ref{lem:LipschitzGP}} 
\label{sec:proof-MVGP-lederer2019uniform}
\begin{proof}
The bound on each element of the Jacobian of $\bff(\bfx)$ is due to~\cite[Lemma B.2]{lederer2019uniform} with probability at least $1-\frac{\delta_L}{n^2}$,
\begin{align}
    |\bff_i(\bfx') - \bff_i(\bfx)| \le L_{i, \partial j}|\bfx_j' - \bfx_j|\quad \forall i,j \in \{1, \dots, n\}.
    \label{eq:lemmaB2lederer-discrete}
\end{align}
Denote the event in \eqref{eq:lemmaB2lederer-discrete} by $\calE_{i,\partial j}$ so that $\bbP(\calE_{i,\partial j}) \ge 1-\frac{\delta_L}{n^n}$. A lower bound on the probability of intersection of all events can be computed using a union bound:
\begin{align}
    &\bbP(\cap_{i=1}^n \cap_{j=1}^n\calE_{i,\partial j}) = 1 - \bbP(\cup_{i=1}^n  \cup_{j=1}^n \calE_{i,\partial j}^\complement) 
    \notag\\
    &\quad\ge 1 - \sum_{i=1}^n\sum_{j=1}^n\bbP( \calE_{i,\partial j}^\complement)
    = 1 - \sum_{i=1}^n \sum_{j=1}^n(1- \bbP(\calE_{i,\partial j}))
    \notag\\
    &\quad\ge 1-\delta_L.
\end{align}
Adding \eqref{eq:lemmaB2lederer-discrete} for all $j \in \{1, \dots, n\}$ and using the Cauchy-Schwarz inequality, we get for each $i \in \{1, \dots, n\}$:
\begin{equation*}
n|\bff_i(\bfx') - \bff_i(\bfx)| \le \sum_{j=1}^n L_{i, \partial j} |\bfx'_j - \bfx_j| \le \sqrt{\sum_{j=1}^n L_{i, \partial j}^2} \|\bfx' - \bfx\|_2.
\end{equation*}
%
Squaring and adding all inequalities for $i \in \{1, \dots, n\}$, we get:
\begin{align}
 \|\bff(\bfx') - \bff(\bfx)\|_2
 &\le \sqrt{\frac{1}{n^2}\sum_{i=1}^n {\sum_{j=1}^n L_{i, \partial j}^2}} \|\bfx' - \bfx\|_2. \qedhere
\end{align}
\end{proof}
}


\section{Proof of Lemma~\ref{prob:VarUquadrac24562}}
\label{proof:cbc-r-mean-affine-var-quadratic}

Recall the definition of $\mbox{CBC}^{(r)}(\bfx,\bfu)$:
\begin{align*}
    &\mbox{CBC}^{(r)}(\bfx,\bfu) = 
     \Lie_f^{(r)}h(\bfx)
     + \Lie_g \Lie_f^{(r-1)} h(\bfx) \bfu
    + \bfk_\alpha^\top \eta(\bfx)
    \notag\\
     &= \Lie_f [\Lie_f^{(r-1)}h(\bfx)]
     + \Lie_g [\Lie_f^{(r-1)} h(\bfx)] \bfu
    + \bfk_\alpha^\top \eta(\bfx)
    \notag\\
     &= \grad_\bfx [ \Lie_f^{(r-1)}h(\bfx) ]^\top \!f(\bfx)
     + \grad_\bfx [\Lie_f^{(r-1)} h(\bfx) ]^\top \!g(\bfx) \bfu
    + \bfk_\alpha^\top \eta(\bfx)
    \notag\\
    &= \grad_\bfx [\Lie_f^{(r-1)}h(\bfx)]^\top \dynAff(\bfx) \ctrlaff
    + \bfk_\alpha^\top \eta(\bfx).
\end{align*}%
Since, $\ctrlaff = [1, \bfu^\top]^\top$, we can rewrite the above as
\begin{align*}
    \mbox{CBC}^{(r)}(\bfx,\bfu) = \left(
    \grad_\bfx \Lie_f^{(r-1)}h(\bfx)^\top \dynAff(\bfx) 
    + 
    \begin{bmatrix}\bfk_\alpha^\top \eta(\bfx)
    \\
    \mathbf{0}_{m }\end{bmatrix}^\top
    \right)\ctrlaff
\end{align*}%
Using linearity of expectation, we see that $\E[\mbox{CBC}^{(r)}(\bfx, \bfu)]$ is linear in $\bfu$:
\begin{equation}
\label{eq:exp-cbc-linear}
\begin{aligned}
    &\E[\mbox{CBC}^{(r)}(\bfx, \bfu)] 
    \\
    &=\underbrace{
    \left(
    \E[\dynAff(\bfx)^\top \grad_\bfx \Lie_f^{(r-1)}h(\bfx) ] 
    + \E\left[
    \begin{bmatrix}\bfk_\alpha^\top \eta(\bfx)
    \\
    \mathbf{0}_{m }\end{bmatrix}
    \right]
    \right)^\top
    }_{\bfe^{(r)}(\bfx)}
    \ctrlaff.    
\end{aligned}
\end{equation}
Also, applying $\Var(\bfx^\top \bfu) = \Var(\bfu^\top \bfx) = \bfu^\top \Var(\bfx)\bfu$ to $\mbox{CBC}^{(r)}(\bfx, \bfu)$, we get:
\begin{align}
\Var&[\mbox{CBC}^{(r)}(\bfx, \bfu)] \label{eq:var-cbc-quad-form}\\
 & = \ctrlaff^\top
    \underbrace{
    \Var\left[
    \grad_\bfx \Lie_f^{(r-1)}h(\bfx)^\top \dynAff(\bfx) 
    + 
    \begin{bmatrix}\bfk_\alpha^\top \eta(\bfx)
    \\
    \mathbf{0}_{m }\end{bmatrix}^\top
    \right]
    }_{\bfV^{(r)}(\bfx) \bfV^{(r)}(\bfx)^\top }
    \ctrlaff. \;\;\qed \notag
\end{align}


\section{Proof of Proposition~\ref{prop:456723o433302}}
\label{proof:cbc-r-quad-prog}

Unlike $\mbox{CBC}$ for relative degree one, the distribution of $\mbox{CBC}^{(r)}$ is not a Gaussian Process for $r \ge 2$. Hence, instead of computing the probability distribution analytically, we use Cantelli's inequality to bound the mean and variance of $\CBCr$. For any scalar $\lambda >0 $, we have
\begin{equation}
\label{Caneeekee44!}
    \scaleMathLine{\mathbb{P}\Bigl(\mbox{CBC}_k^{(r)} \ge \E[\mbox{CBC}_k^{(r)}]-\lambda \mid\;  \bfx_k,\bfu_k\Bigr) \ge
    1 - \frac{\Var[\mbox{CBC}_k^{(r)}]}{\Var[\mbox{CBC}_k^{(r)}]+\lambda^2}}.
\notag
\end{equation}
Since we want the probability to be greater than $\tilde{p}_k$, we ensure its lower bound is greater than $\tilde{p}_k$, i.e.,
\begin{align}
    1 - \frac{\Var[\mbox{CBC}_k^{(r)}]}{\Var[\mbox{CBC}_k^{(r)}]+\lambda^2} \ge \tilde{p}_k.
\end{align}
The terms can be rearranged into,
\begin{equation}
\label{eq:Cantelli}
\begin{aligned}
(1-\tilde{p}_k)(\Var[\mbox{CBC}_k^{(r)}]+\lambda^2) &\ge \Var[\mbox{CBC}_k^{(r)}]\\
\lambda &\ge \sqrt{\frac{\tilde{p}_k}{1-\tilde{p}_k}\Var[\mbox{CBC}_k^{(r)}]}.
\end{aligned}
\end{equation}
For some desired margin, $\E[\mbox{CBC}_k^{(r)}] \ge \zeta$, we can substitute  $\lambda= (\E[\mbox{CBC}_k^{(r)}]-\zeta) > 0$ in \eqref{eq:Cantelli}, so that:
\begin{equation}
\label{eq:cbr-r-ineq}
\begin{aligned}
 \E[\mbox{CBC}_k^{(r)}]-\zeta > \sqrt{\frac{\tilde{p}_k}{1-\tilde{p}_k}}
 \Var^{\frac{1}{2}}[\mbox{CBC}_k^{(r)}] 
 \\
\implies  \mathbb{P}\left(\mbox{CBC}_k^{(r)} \ge \zeta \mid \bfx_k,\bfu_k\right) \ge \tilde{p}_k
\end{aligned}
\end{equation}
Using the constraint in \eqref{progg3456755!!} and the explicit expressions in \eqref{eq:exp-cbc-linear} and \eqref{eq:var-cbc-quad-form} for the mean and variance of $\mbox{CBC}_k^{(r)}$, we get:
\begin{align}
\pi(\bfx_k) &\in \argmin_{\bfu_k
\in \mathcal{U}
}\quad  \|\bfR(\bfx_k)\bfu_k\|_2
\label{eq:temp73}\\
\text{s.t. }&
\bfe_k^{(r)}(\bfx_k)^\top\ubfu_k -\zeta
- c^{(r)}(\tilde{p}_k)\left\| 
\bfV_k^{(r)}(\bfx_k) \ubfu_k
\right\|_2
\ge 0, \notag
\end{align}
where $c^{(r)}(\tilde{p}_k) = \sqrt{\frac{\tilde{p}_k}{1-\tilde{p}_k}}$. To ensure that \eqref{eq:temp73} is a standard SOCP, we convert the objective into a SOCP constraint by introducing an auxilary variable $y$, leading to \eqref{progg!!5544i}. \qed


\newcommand{\mean}[1]{\bbE[{#1}]}


\section{Proof of Lemma~\ref{lemma:dot-prod-gps}}
\label{proof:dot-prod-of-gps}

Recall the following result about the mean and cumulants of a quadratic form of Gaussian random vectors.

\begin{lemma}[{\cite[p.~55]{searle1971linear}}]
  \label{lemma:quadratic-form}
  Let $\bfx$ be a Gaussian random vector with mean $\bar{\bfx}$ and covariance matrix $\bfSigma$. Let $\bfLambda$ be a symmetric matrix and consider the random variable $\bfx^\top \bfLambda \bfx$. The mean of $\bfx^\top \bfLambda \bfx$ is
  \[
    \E[\bfx^\top \bfLambda \bfx] = \bar{\bfx}^\top \bfLambda \bar{\bfx} + \tr(\bfLambda \bfSigma ).
  \]
  The $r$th cumulant of $\bfx^\top \bfLambda \bfx$ is:
  \[
  \calK_r(\bfx^\top \bfLambda \bfx) = 2^{r-1}(r-1)![\tr(\bfLambda \bfSigma)^r + r\bar{\bfx}^\top \bfLambda (\bfSigma \bfLambda)^{r-1} \bar{\bfx}].
  \]
  The variance of $\bfx^\top \bfLambda \bfx$ is the second cumulant:
  \[
  \Var[\bfx^\top \bfLambda \bfx] = \calK_2(\bfx^\top \bfLambda \bfx) = 2\tr(\bfLambda \bfSigma)^2 + 4\bar{\bfx}^\top \bfLambda \bfSigma \bfLambda \bar{\bfx}.
  \]
  The covariance between $\bfx$ and $\bfx^\top \bfLambda \bfx$ is:
  \[
  \cov(\bfx, \bfx^\top \bfLambda \bfx) = 2\bfSigma \bfLambda \bar{\bfx}
  \]
\end{lemma}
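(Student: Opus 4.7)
For the mean, I would exploit that $\bfx^\top\bfLambda\bfx$ is a scalar and equals $\tr(\bfLambda\bfx\bfx^\top)$. Linearity of trace and expectation, together with $\E[\bfx\bfx^\top] = \bfSigma + \bar{\bfx}\bar{\bfx}^\top$, immediately yield $\E[\bfx^\top\bfLambda\bfx] = \tr(\bfLambda\bfSigma) + \bar{\bfx}^\top\bfLambda\bar{\bfx}$. This is the short part; the rest of the work concerns higher cumulants.

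For the general cumulant formula, my plan is to compute the moment generating function $M(t) = \E[\exp(t\bfx^\top\bfLambda\bfx)]$ and read off cumulants from the Taylor expansion of $K(t) = \log M(t)$. Writing $\bfx \sim \calN(\bar{\bfx},\bfSigma)$ and completing the square inside the Gaussian integral gives the closed form
\begin{equation*}
M(t) = \det(\bfI - 2t\bfSigma\bfLambda)^{-1/2}\,\exp\!\bigl(t\bar{\bfx}^\top\bfLambda(\bfI - 2t\bfSigma\bfLambda)^{-1}\bar{\bfx}\bigr),
\end{equation*}
valid for $t$ in a neighborhood of $0$. Taking logs and using the identities $\log\det(\bfI - 2t\bfB) = \tr\log(\bfI - 2t\bfB) = -\sum_{r\ge 1}(2t)^r\tr(\bfB^r)/r$ and $(\bfI - 2t\bfB)^{-1} = \sum_{k\ge 0}(2t)^k\bfB^k$ with $\bfB = \bfSigma\bfLambda$, I obtain
\begin{equation*}
K(t) = \sum_{r\ge 1} t^r\Bigl[\tfrac{2^{r-1}}{r}\tr((\bfSigma\bfLambda)^r) + 2^{r-1}\bar{\bfx}^\top\bfLambda(\bfSigma\bfLambda)^{r-1}\bar{\bfx}\Bigr].
\end{equation*}
Since $\calK_r = r!\,[t^r]K(t)$ and $\tr((\bfSigma\bfLambda)^r)=\tr((\bfLambda\bfSigma)^r)$, the stated cumulant identity follows. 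The main obstacle here is the careful Gaussian integration and handling convergence of the logarithmic and geometric series; both reduce to standard arguments once $t$ is chosen small enough that the spectrum of $2t\bfSigma\bfLambda$ lies strictly inside the unit disk.

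For the variance, I would just read off $r=2$: $\calK_2 = 2\tr((\bfLambda\bfSigma)^2) + 4\bar{\bfx}^\top\bfLambda\bfSigma\bfLambda\bar{\bfx}$. For the covariance $\cov(\bfx,\bfx^\top\bfLambda\bfx)$, rather than differentiating a joint MGF, I would use the whitening $\bfx = \bar{\bfx} + \bfSigma^{1/2}\bfz$ with $\bfz\sim\calN(\mathbf{0},\bfI)$ and expand
\begin{equation*}
\bfx^\top\bfLambda\bfx = \bar{\bfx}^\top\bfLambda\bar{\bfx} + 2\bar{\bfx}^\top\bfLambda\bfSigma^{1/2}\bfz + \bfz^\top\bfSigma^{1/2}\bfLambda\bfSigma^{1/2}\bfz.
\end{equation*}
Since $\cov(\bfx,\,\cdot\,) = \bfSigma^{1/2}\cov(\bfz,\,\cdot\,)$ and $\bfz$ has symmetric distribution, the odd central moments of $\bfz$ vanish, so only the linear-in-$\bfz$ term contributes: $\cov(\bfz, 2\bar{\bfx}^\top\bfLambda\bfSigma^{1/2}\bfz) = 2\bfSigma^{1/2}\bfLambda\bar{\bfx}$. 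Multiplying on the left by $\bfSigma^{1/2}$ gives $2\bfSigma\bfLambda\bar{\bfx}$ as claimed. The subtle step is justifying that the cubic-in-$\bfz$ cross-covariance contributes zero, which follows from $\E[\bfz_i\bfz_j\bfz_k]=0$ for standard normal $\bfz$; once that is observed, the remaining computation is mechanical.
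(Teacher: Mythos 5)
Your proposal is correct. Note, however, that the paper does not prove this lemma at all: it is imported verbatim from the cited reference (Searle, \emph{Linear Models}, p.~55) and used as a black box inside the proof of Lemma~\ref{lemma:dot-prod-gps}, so there is no in-paper argument to compare against. Your derivation is a complete and essentially the classical one: the trace identity $\bfx^\top\bfLambda\bfx=\tr(\bfLambda\bfx\bfx^\top)$ for the mean, the closed-form moment generating function $M(t)=\det(\bfI-2t\bfSigma\bfLambda)^{-1/2}\exp\bigl(t\bar{\bfx}^\top\bfLambda(\bfI-2t\bfSigma\bfLambda)^{-1}\bar{\bfx}\bigr)$ followed by the $\log\det$-as-trace and Neumann-series expansions for the cumulants (this is in fact how the cited source obtains the formula), and a whitening argument for the covariance in which only the term linear in $\bfz$ survives because third moments of a standard Gaussian vanish. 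All the coefficient bookkeeping checks out: your $K(t)$ expansion reproduces $\calK_r=2^{r-1}(r-1)!\bigl[\tr\bigl((\bfLambda\bfSigma)^r\bigr)+r\,\bar{\bfx}^\top\bfLambda(\bfSigma\bfLambda)^{r-1}\bar{\bfx}\bigr]$, specializes at $r=1$ to the mean and at $r=2$ to the stated variance, and your derivation also makes explicit that the notation $\tr(\bfLambda\bfSigma)^r$ in the statement must be read as $\tr\bigl((\bfLambda\bfSigma)^r\bigr)$ rather than $\bigl(\tr(\bfLambda\bfSigma)\bigr)^r$. One cosmetic remark: the whitening step with $\bfSigma^{1/2}$ also covers singular $\bfSigma$ since only the positive semidefinite square root is needed, so no invertibility assumption is lost relative to the statement.
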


Returning to the proof of Lemma~\ref{lemma:dot-prod-gps}, consider the three Gaussian random vectors $\bfx$, $\bfy$, and $\bfz$. Note that $\bfx^\top \bfy$ can be written as a quadratic form:
\begin{equation}
\label{eq:quadratic-form}
\bfx^\top \bfy = \frac{1}{2}\begin{bmatrix}\bfx^\top & \bfy^\top & \bfz^\top\end{bmatrix}
\begin{bmatrix} 0 & \bfI & 0 \\ \bfI & 0 & 0 \\ 0 & 0 & 0 \end{bmatrix}
\begin{bmatrix} \bfx \\ \bfy \\ \bfz\end{bmatrix}.
\end{equation}
Applying Lemma~\ref{lemma:quadratic-form} to \eqref{eq:quadratic-form}, shows that:
\begin{align*}
&\E[\bfx^\top \bfy] = \bar{\bfx}^\top \bar{\bfy} + \frac{1}{2}\tr(\cov(\bfx, \bfy) + \cov(\bfy, \bfx))\\
&\Var[\bfx^\top \bfy] = \frac{1}{2}\tr(\cov(\bfx, \bfy)+\cov(\bfy, \bfx))^2 + \bar{\bfy}^\top \Var[\bfx] \bar{\bfy}\\
&\quad+ \bar{\bfx}^\top \Var[\bfy] \bar{\bfx }
+ \bar{\bfy}^\top \cov(\bfx, \bfy) \bar{\bfx}
+ \bar{\bfx}^\top \cov(\bfy, \bfx) \bar{\bfy}\\
&\begin{bmatrix}
    \cov(\bfx, \bfx^\top \bfy)
    \\
    \cov(\bfy, \bfx^\top \bfy)
    \\
    \cov(\bfz, \bfx^\top \bfy)
\end{bmatrix}
  = \cov\left(\begin{bmatrix}
    \bfx \\ \bfy \\ \bfz
  \end{bmatrix},
  \bfx^\top\bfy\right)
  \notag\\
  &=
  2 \begin{bmatrix}
  \Var[\bfx] & \cov(\bfx, \bfy) & \cov(\bfx, \bfz) \\
  \cov(\bfy, \bfx) & \Var[\bfy] & \cov(\bfy, \bfz) \\
  \cov(\bfz, \bfx) & \cov(\bfz, \bfy) & \Var[\bfz]
  \end{bmatrix}
  \frac{1}{2}\begin{bmatrix} \bar{\bfy} \\ \bar{\bfx} \\ \mathbf{0} \end{bmatrix}. \tag*{\qed}
\end{align*}

\bibliographystyle{cls/IEEEtran.bst}
\bibliography{bib/main.bib}
\label{page:bibend}

\begin{IEEEbiography}
[{\includegraphics[height=1.25in,clip,keepaspectratio]{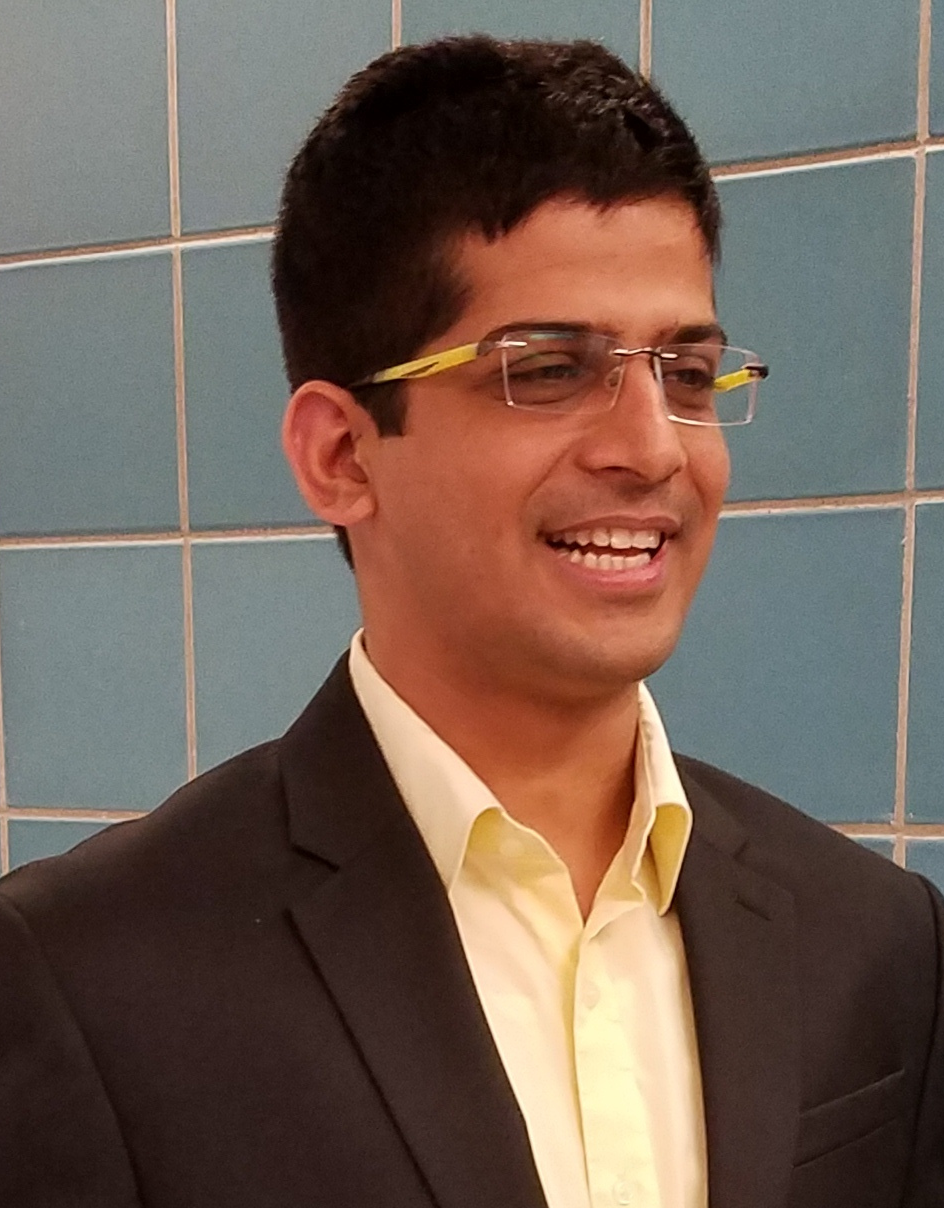}}]{Vikas
  Dhiman} is an Assistant Professor in the ECE department at the University of Maine. His works lie in the localization, mapping and control algorithms for applications in robotics. He was a Postdoctoral Researcher at the University of California, San Diego (2019-21). He graduated in Elec. Engg (2008) from Indian Institute of Technology, Roorkee, earned his MS (2014) from University at Buffalo, and received his PhD (2019) from the University of Michigan, Ann Arbor. 
\end{IEEEbiography}

\begin{IEEEbiography}
[{\includegraphics[height=1.25in,clip]{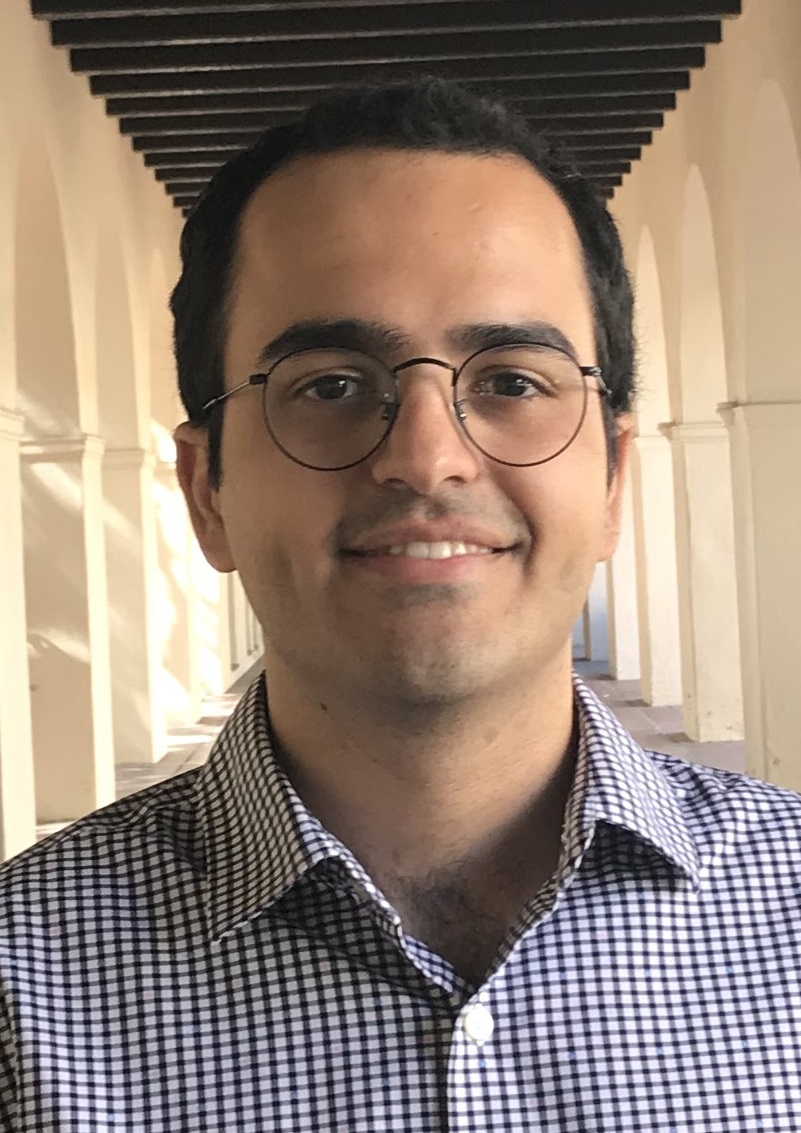}}]{Mohammad
    Javad Khojasteh}(S'14--M'21) did his undergraduate studies at  Sharif University of Technology from which he received  B.Sc.\ degrees  in both Electrical Engineering and  Mathematics, in 2015. 
  He received the M.Sc.\ and Ph.D.\ degrees in Electrical and Computer Engineering from  University of California San Diego (UCSD), La Jolla, CA, in
  2017, and 2019,
respectively. 
In 2020, he was a postdoctoral scholar with Center for Autonomous Systems and Technology (CAST) at California Institute of Technology (Caltech), and a visitor at NASA's Jet Propulsion Laboratory (JPL), where he worked with Team CoSTAR. 
Currently, he is a Postdoctoral Associate with Laboratory for Information and Decision Systems (LIDS) at Massachusetts Institute of Technology (MIT).
\end{IEEEbiography}

\begin{IEEEbiography}
  [{\includegraphics[height=1.25in,clip]{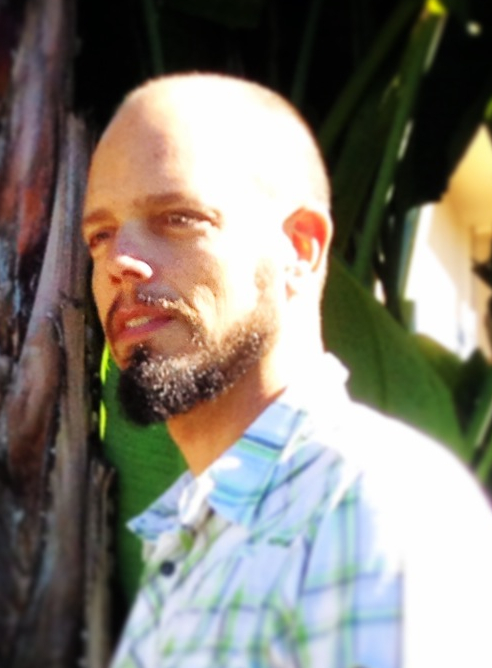}}]
  {Massimo Franceschetti} (M'98--SM'11--F'18) received the Laurea degree
  (with highest honors) in computer engineering from the University of
  Naples, Naples, Italy, in 1997, the M.S.\ and Ph.D.\ degrees in
  electrical engineering from the California Institute of Technology,
  in 1999, and 2003, respectively.  He is Professor of
  Electrical and Computer Engineering at the University of California
  at San Diego (UCSD). Before joining UCSD, he was a postdoctoral
  scholar at the University of California at Berkeley for two
  years. He has held visiting positions at the Vrije Universiteit
  Amsterdam, the \'{E}cole Polytechnique F\'{e}d\'{e}rale de Lausanne,
  and the University of Trento. His research interests are in physical
  and information-based foundations of communication and control
  systems. 
   He was awarded the C. H. Wilts
   Prize in 2003 for best doctoral thesis in electrical engineering at
   Caltech; the S.A. Schelkunoff Award in 2005 for best paper in the
   IEEE Transactions on Antennas and Propagation, a National Science
   Foundation (NSF) CAREER award in 2006, an Office of Naval Research
  (ONR) Young Investigator Award in 2007, the IEEE Communications
   Society Best Tutorial Paper Award in 2010, and the IEEE Control
   theory society Ruberti young researcher award in 2012. 
  He has been elected fellow of the IEEE in 2018 and became a Guggenheim fellow for the natural sciences: engineering in 2019.
\end{IEEEbiography}

\begin{IEEEbiography}
[{\includegraphics[width=1in,height=1.25in,clip,keepaspectratio]{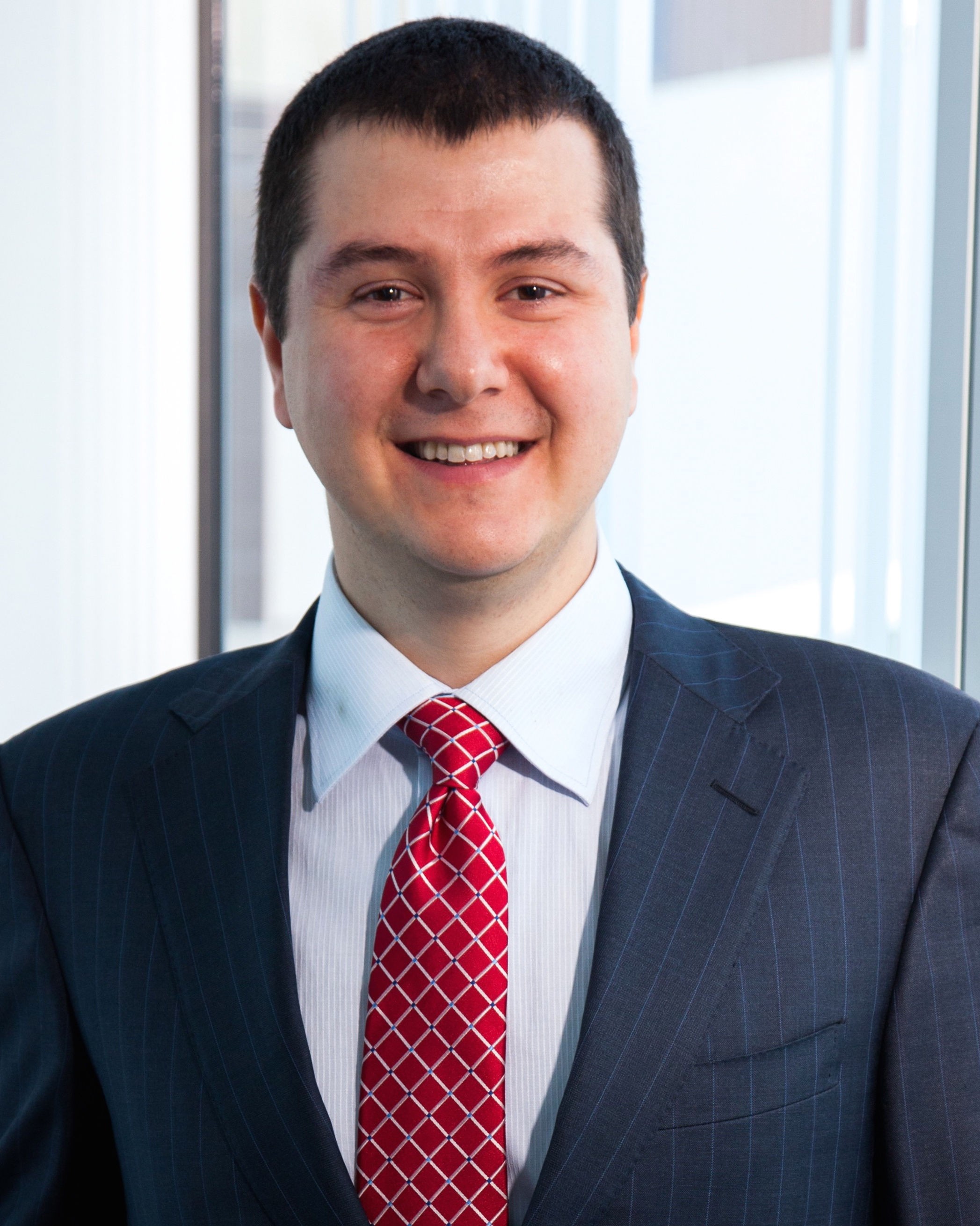}}]{Nikolay Atanasov}
(S'07-M'16) is an Assistant Professor of Electrical and Computer Engineering at the University of California San Diego. He obtained a B.S. degree in Electrical Engineering from Trinity College, Hartford, CT, in 2008 and M.S. and Ph.D. degrees in Electrical and Systems Engineering from the University of Pennsylvania, Philadelphia, PA, in 2012 and 2015, respectively. His research focuses on robotics, control theory, and machine learning, applied to active sensing using ground and aerial robots. He works on probabilistic environment models that unify geometry and semantics and on optimal control and reinforcement learning approaches for minimizing uncertainty in these models. Dr. Atanasov's work has been recognized by the Joseph and Rosaline Wolf award for the best Ph.D. dissertation in Electrical and Systems Engineering at the University of Pennsylvania in 2015 and the best conference paper award at the International Conference on Robotics and Automation in 2017.
\end{IEEEbiography}

\end{document}